\tikzset{line width=200pt}
\newcommand{\suppress}[1]{}
\newtheorem{theorem}{Theorem}[section]
\newtheorem{lemma}[theorem]{Lemma}
\newtheorem{corollary}[theorem]{Corollary}
\newtheorem{remark}[theorem]{Remark}
\newcommand{\type}[1]{{\theta_{#1}}}
\newcommand{\ie}{{i.e.}}
\newcommand{\eg}{{e.g.}}
\newcommand{\pee}{p}
\newcommand{\Enc}{Enc}
\newcommand{\mess}{u}
\newcommand{\bmess}{{U}}
\newcommand{\ind}{t}
\newcommand{\cV}{{\cal{V}}}
\newcommand{\cW}{{\cal{W}}}
\newcommand{\ogs}{{\mathcal M}_{og}}
\newcommand{\code}{{\cC}}
\def\bsc{$\mbox{BSC}_p$}
\def\bscq{$\mbox{BSC}(q)$}
\mathchardef\mhyphen="2D
\def\wcefII{WCEF\mhyphen II}
\newcommand{\bx}{{{x}^\bl}}
\newcommand{\by}{{{y}^\bl}}
\newcommand{\bz}{{{z}^\bl}}
\newcommand{\bs}{{{s}^\bl}}
\def\evcf{{\cal{E}}_{C1}}
\def\evcs{{\cal{E}}_{C2}}
\def\evo{{\cal{E}}_{O}}
\def\evd{{\cal{E}}_{D}}
\def\eone{{\epsilon_2}}
 \def\etwo{{\epsilon_1}}
\def\cX{\mbox{$\cal{X}$}}
\def\cY{\mbox{$\cal{Y}$}}
\def\cZ{\mbox{$\cal{Z}$}}
\def\cB{\mbox{$\cal{B}$}}
\def\cC{\mbox{$\cal{C}$}}
\def\cS{\mbox{$\cal{S}$}}
\def\cZ{\mbox{$\cal{Z}$}}
\def\cT{\mbox{$\cal{T}$}}
\def\e{\varepsilon}
\def\type{\mbox{type}}
\def\bX{{X^\bl}}
\def\bY{{Y^\bl}}
\def\bZ{{Z^\bl}}
\def\bS{{S^\bl}}
\def\bl{{n}}
\def\rate{{R}}
\def\Ber{\mbox{Ber}}
\def\01{\{0,1\}}
\newcolumntype{L}[1]{>{\raggedright\let\newline\\\arraybackslash\hspace{0pt}}m{#1}}
\newcolumntype{C}[1]{>{\centering\let\newline\\\arraybackslash\hspace{0pt}}m{#1}}
\newcolumntype{R}[1]{>{\raggedleft\let\newline\\\arraybackslash\hspace{0pt}}m{#1}}
\begin{document}

\IEEEoverridecommandlockouts
\title{Sufficiently Myopic Adversaries are Blind
\thanks{A preliminary version of this work appeared in \cite{dey2015sufficiently}.}
}
\date{}

\author{Bikash Kumar Dey, Sidharth Jaggi, Michael Langberg}

\date{}
\maketitle
\begin{abstract}
In this work we consider a communication problem in which a sender, Alice, wishes to communicate with a receiver, Bob, over a channel controlled by an adversarial jammer, James, who is {\em myopic}. Roughly speaking, for blocklength $n$, the codeword $\bX$ transmitted by Alice is corrupted by James who must base his adversarial decisions (of which locations of $\bX$ to corrupt and how to corrupt them) not on the codeword $\bX$ but on $\bZ$, an image of $\bX$ through a noisy memoryless channel. More specifically, our communication model may be described by two channels. A memoryless channel $p(z|x)$ from Alice to James, and an {\it Arbitrarily Varying Channel} from Alice to Bob, $p(y|x,s)$ governed by a state $\bS$ determined by James.  In standard adversarial channels the states $\bS$ may depend on the codeword $\bX$, but in our setting $\bS$ depends only on James's view $\bZ$. 

The myopic channel captures a broad range of channels and bridges between the standard models of memoryless and adversarial (zero-error) channels. In this work we present upper and lower bounds on the capacity of myopic channels. For a number of special cases of interest we show that our bounds are tight. We extend our results to the setting of {\em secure} communication in which we require that the transmitted message remain secret from James. For example, we show that if (i) James may flip at most a $p$ fraction of the bits communicated between Alice and Bob, and (ii) James views $\bX$ through a binary symmetric channel with parameter $q$, then once James is ``sufficiently myopic" (in this case, when $q>p$), then the optimal communication rate is that of an adversary who is ``blind'' (that is, an adversary that does not see $\bX$ at all), which is $1-H(p)$ for standard communication, and $H(q)-H(p)$ for secure communication.
A similar phenomenon exists for our general model of communication.

\end{abstract}

{\it {\bf Keywords:} Arbitrarily Varying Channels, Myopic Jamming, Information Theoretic Secrecy}

\section{Introduction}

In the study of point-to-point communication, a sender Alice wishes to transmit a message $\bmess$ to a receiver Bob over a noisy channel governed by a jammer James. To do so, she encodes $\bmess$ into a length-$\bl$ vector $\bX$ and transmits it over the channel, resulting in the received word $\bY$. 
Two types of channel models that have seen significant attention over the last decades are the memoryless channel model, e.g., \cite{shannon_mathematical_1949} in which the channel is governed by a conditional distribution $p(y|x)$ which is completely oblivious \cite{Langberg:08} (or ``blind'') of the message $\bX$ being transmitted and the adversarial (omniscient) channel model in which James is thought of as an adversarial entity who can maliciously design the error imposed to fit the specific codeword transmitted, \cite{hamming1950error}.
While the capacity of the former model is well-understood, that of the latter encompasses numerous  open problems in coding and information theory. 
This state of affairs has lead to the study of several channel models that conceptually lie between the two extreme communication models, those in which the channel is oblivious of the 
transmitted codeword $\bX$ and those in which the channel acts as an adversarial jammer.
These include arbitrarily varying channels (AVCs), \eg~\cite{blackwell_capacities_1960,ahlswede1978elimination,csiszar1988capacity,csiszar1991capacity}, causal channels, \eg~\cite{dey_codes_2009,langberg_binary_2009,dey_coding_2010,haviv_beating_2011,dey_improved_2012,dey_codes_2013,dey_upper_2013,chen_characterization_2014},  and computationally limited channels, \eg~\cite{guruswami_codes_2010}.

Inspired by the study of Sarwate~\cite{sarwate_coding_2010}, in this work we consider the model of {\em myopic} adversarial jammers.
In the myopic setting, the jammer James is still a malicious entity that wishes to carefully design his error to corrupt communication, however his view of the codeword $\bX$ is limited in the sense that it is masked through a noisy memoryless channel $p(z|x)$. 
If the channel between Alice and James is of full rate, the myopic model reduces to that of the standard omniscient adversarial model, and if it is of zero rate, the myopic model captures the model of a ``blind'' (or ``oblivious") adversary that has no knowledge whatsoever on the codeword $\bX$ transmitted.

Formally, the myopic model is described by two channels. A memoryless channel $p(z|x)$ from Alice to James, and an AVC from Alice to Bob.
The AVC is modeled by a state channel $p(y|x,s)$, where the vector of states $\bS$ (one state for each time step) is determined by James as a function of his masked view $\bZ$ of the transmitted codeword $\bX$.
See Figure~\ref{fig:model}.

\tikzset{line/.style={draw, -latex',shorten <=1bp,shorten >=1bp}}

\begin{figure}[t]
\setlength{\unitlength}{1cm}
\centering
\begin{tikzpicture}[trim left, scale=1]

\node (alice) [draw, rectangle] at (0.5,3.25) {Alice};
\node (bob) [draw, rectangle] at (7.5,3.25) {Bob};
\node (bobch) [draw, rectangle] at (5, 3.25) {$p(y|x,s)$};
\node (calch) [draw, rectangle] at (2.5,5) {$p(z|x)$};
\node (James) [draw, rectangle] at (5,5) {James};

\draw [->] (alice) -- (1.5,3.25) node[below] (tx) {$x^n$} -- (bobch);
\draw [->] (tx) -- (1.5,5) -- (calch);
\draw [->] (calch) -- +(1.25,0) node[below] {$z^n$} -- (James);
\draw [->] (James) --
          +(0,-0.75) node[right] {$s^n$} -- (bobch);
\draw [->] (bobch) -- +(1.5,0) node[below] {$y^n$} -- (bob);

\end{tikzpicture}
\caption{The myopic channel model}
\label{fig:model}
\end{figure}
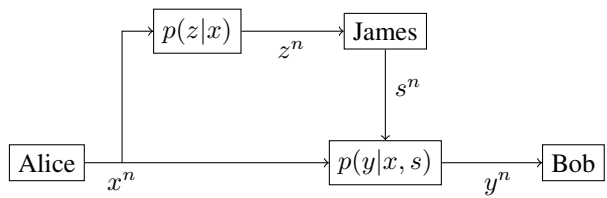

In this work we study the capacity of myopic adversarial channels.
We start by studying a natural binary myopic channel in which (i) James may flip at most a $p$ fraction of the bits communicated between Alice and Bob, and (ii) James views $\bX$ through a binary symmetric channel with parameter $q$ (i.e., $BSC(q)$). Namely, in our notation, the Hamming weight of $\bS$ is at most $pn$, $p(z|x)=q$ for $z \neq x$, and $p(y|x,s)=1$ iff $y=x+s$ (and otherwise 0). We aim to characterize the capacity of the channel under varying values of $q$, our limitation on the noise level to James.
When $q=0$, namely when James has full knowledge of the codeword $\bX$, the channel reduces to the omniscient adversarial channel for which the capacity is a central open problem in coding theory and only upper and lower bounds on capacity exist \cite{gilbert_comparison_1952, varshamov_estimate_1957,mceliece_new_1977}. When $q=1/2$, namely when James is blind, it is shown in~\cite{Langberg:08, csiszar1988capacity} that the capacity equals that of the channel in which James flips bits randomly, i.e. the $BSC(p)$, which equals $1-H(p)$.

The focus of this work is in the study of intermediate values of $q$.
In a nutshell, we present a dichotomous behavior of the channel. 
If James is ``sufficiently myopic" then the optimal communication rate is that of a blind James, namely $1-H(p)$. Specifically, we show that an optimal rate of $1-H(p)$ is achievable as long as $q>p$.
If on the other hand $q < p$, then the capacity of the myopic channel equals that of the omniscient channel,\footnote{To be precise, the above dichotomous behavior is proven to hold for deterministic codes.
For codes that allow randomness at encoder (which is {\em not} shared with the receiver), known as stochastic codes, we leave open the question whether one can obtain rates higher than those of the omniscient adversary for the case $q<p$.} which is known to be bounded away from $1-H(p)$ for all $p$, and in fact equals $0$ for all $p>1/4$.
We extend our results to the setting of secure communication in which one requires that the transmitted message remain secret from James. In this extended setting we show a similar phenomena: as long as $q>p$ the capacity  equals that obtained for blind adversaries (which is $H(q)-H(p)$).

We then turn to study the myopic model in its full generality, for general memoryless channels $p(z|x)$ connecting Alice and James, and general state channels $p(y|x,s)$ connecting Alice and Bob.
For the general setting we obtain upper and lower bounds on capacity, both in the standard setting of communication, and in that of secure communication.
As an additional case study, we study the setting in which the channel to James is a binary erasure
channel $BEC(q)$, and James can erase up to a fraction $p$ of the transmitted
bits observed by Bob. For this special case, through a refinement of our
arguments, we show that the capacity is $1-p$ if $q>p$; and for
$q<p$, the deterministic capacity is the same as that for an omniscient 
adversary.
We also consider some more general binary input adversarial channels to
study communication rates as well as secrecy rates. In these channels,
James can erase as well as flip some fractions of bits. His own observation
may be over an arbitrary binary input channel.

As mentioned above, the work most relevant to ours is that of Sarwate~\cite{sarwate_coding_2010} in which the myopic channel model is studied under the assumption that Alice and Bob hold shared randomness that is not known to James (i.e., under the assumption of randomized coding). In this setting, a single-letter characterization to the randomized coding capacity is obtained. 
As with our study, the results in \cite{sarwate_coding_2010} bridge between the randomized capacity when the adversary James is assumed to be blind and that when James has full knowledge of the codeword transmitted.

Although our study was inspired by, and builds on, that of \cite{sarwate_coding_2010}, it differs from \cite{sarwate_coding_2010} in two important aspects. Primarily, and most importantly, we study the case of deterministic codes (in which there is no shared randomness between Alice and Bob). The study of deterministic codes introduces many challenges that do not exist in the case of randomized codes, and involves a new set of analytical tools in its analysis.
Secondly, we study the general case in which the codewords $\bX$ of Alice and the state space $\bS$ of James are constrained. 
Our enhanced setting was explicitly left open in \cite{sarwate_coding_2010}.

Another model related to our work is the study of the wiretap channel of type II
with an active eavesdropper. 
Aggarwal et al.~\cite{aggarwal2009wiretap} considers a
model with an adversary who can choose
a $p$ fraction of bits to observe and also erase these bits. They showed
that any rate upto $1-p-H(2p)$ can be achieved. In our notation, their model 
has $q=1-p$
fraction of erasures in James' channel. If James experiences random erasures,
then Theorem~\ref{th:erasuresecrecy} guarantees a secrecy rate
upto $(1-p) - p = 1-2p$. However, in \cite{aggarwal2009wiretap}  James has the
additional power of choosing which bits to observe. 
As a special case of Theorem~\ref{th:wiretaptwo}, we are able to obtain rate $1-2p$ on the model of \cite{aggarwal2009wiretap} as well (see Remark~\ref{rem:erasure-erasure}).
Additional works that address the action of myopic adversaries include 
\cite{wang_on_capacity_2016} which considers the study of the wiretap channel of type II with an active eavesdropper that can flip bits. Theorem~\ref{th:wiretaptwo} generalizes their main result to an active eavesdropper who can erase as well as flip bits. The work in \cite{molavianjazi2009arbitrary}, studies a different model of active myopic adversaries in which  there are two non-cooperating adversarial entities, the Eavesdropper and the Jammer. In a nice sequence of works by Boche\footnote{From whom we also borrow the idea of calling the jammer James.} et al\cite{bjelakovic2013secrecy,notzel2016arbitrarily,wiese2014arbitrarily} the problem of secure communication in the presence of a myopic jammer is also considered, but in general in the models considered either common randomness between the transmitter and the receiver is necessary, or only multi-letter capacity characterizations are derived (or both). Also, \cite{shafiee_mutual_2009, sarwate_avc_2012} consider myopicity in the context of AWGN channels.
For specific channels over sufficiently large alphabets on which the attacks are either additive (e.g., \cite{wang2016model,safavi2015model,wang2014efficient,safavi2013codes,safavi2013efficient,wang2015erasure}, summarized in~\cite{safavi2015secure}), or ``overwrite'' (e.g., \cite{zhang2015coding, zhang2015talking}, summarized in~\cite{kadhe2015reliable}), more is known; computationally efficient codes meeting information-theoretic bounds are known. See Table~\ref{table} for a summary of previous related work. 

Our paper is structured as follows.
In Section~\ref{sec:model} we give a precise model for the myopic setting.
In Section~\ref{sec:results} we state our main results (which are also summarized in Table~\ref{table}).
Our results are first presented for the special binary symmetric error case discussed above and then in full generality.
We also discuss a refinement of the general arguments to
an erasure-erasure channel, and other binary input channels with
erasing and flipping adversary.
Section~\ref{sec:bin} presents the proof of the main result for the binary 
case. Section~\ref{app:gen} presents the proof of the lower bound for 
the general model.

{\tiny{
\begin{table}
\begin{tabular}{|C{.5cm}|L{2cm}|C{1.2cm}|C{1.2cm}|C{1.0cm}|C{1.0cm}|L{2.3cm}|L{2.7cm}|C{2cm}|}
\hline\hline
&& {\bf Channel: Alice \ \ \ \ to \ \ \ \ \ James} & {\bf Channel: Alice \ \ \ \ to \ \ \ \ \ Bob} 
& {\bf  \begin{turn}{-90}available?\end{turn}\  \begin{turn}{-90}randomness\end{turn}\ \begin{turn}{-90}Common\end{turn}} & 
{\bf \begin{turn}{-90}coordinate?\end{turn}\  \begin{turn}{-90}and Jammer\end{turn}\ \begin{turn}{-90}Eavesdropper\end{turn}}
&
{\bf Reliable throughput} & {\bf Reliable + secure throughput} & {\bf Comments}\\ \hline \hline
\multirow{ 5}{0cm}{\bf \begin{turn}{-90}Binary input channels\end{turn}} & \multirow{ 2}{*}{\cite{shannon_mathematical_1949} Sha49} & -- & BSC(p) & -- & -- & $1-H(p)$ & -- & \multirow{ 2}{2cm}{\it Baseline random noise channels}\\ \cline{3-8}
&& -- & $pn$ erasures & -- & -- & $1-p$ & -- & \\\cline{2-9}
& \multirow{ 2}{2cm}{\cite{gilbert_comparison_1952} Gil52, \cite{varshamov_estimate_1957} Var57, \cite{mceliece_new_1977} McERRW77} & Perfect channel &
$pn$ bit-flips & No & Yes & $\geq 1-H(2p),$ $\leq LP(2p)$ & -- & 
\multirow{ 2}{2cm}{\it Baseline adversarial noise channels}\\\cline{3-8}
&& Perfect channel &
BEC(p) & No & Yes & $\geq 1-H(p),$ $\leq LP(p)$ & -- &\\ \cline{2-9}
& \cite{csiszar1988capacity} CsiN88 \cite{Langberg:08} Lan08 & No channel
& $pn$ bit-flips & No & -- & $1-H(p)$ & -- & {\it ``Oblivious'' jammers}\\ \cline{2-9}
& \cite{aggarwal2009wiretap} AggLCP09 & \multicolumn{2}{|L{2.4cm}|}{James can observe and erase any $pn$ bits of his choice} & No & Yes & -- & $1-p-H(p)$ & {\it Wiretap type II with erasing adversary}\\\cline{2-9}
& \cite{wang_on_capacity_2016} Wan16  & James chooses $p_rn$ bits to see & $p_wn$ bit flips & No & Yes & -- & $1-p_r-H(p_w)$ & {\it Wiretap type II with flipping adversary}\\\hline \hline
\multirow{4}{0cm}{\bf \begin{turn}{-90}Our results: binary input\end{turn}} 
& $C(q,p)$, $q>p$ & BSC(q) & $pn$ bit flips & No & Yes & $1-H(p)$ & $H(q)-H(p)$ & {\it Th.}~\ref{th:binarycapacity},~\ref{th:binarysecrecy} \\\cline{2-9}
& $CE(q,p)$, $q>p$ & BEC(q) & $pn$ erasures & No & Yes & $1-p$ & $q-p$ &
{\it Th.}~\ref{th:erasurecapacity},~\ref{th:erasuresecrecy}\\\cline{2-9}
& CEF ($p_{Z|X}, p_e, p_w$) & $p_{Z|X}$ & $p_en$ erasures \& $p_w$ flips
& No & Yes & $(1-p_e)$ $\left(1-H\left(\frac{p_w}{1-p_e}\right)\right)^\dagger$ &
$H(X|Z)-p_e$ $+p_eH\left(\frac{p_w}{1-p_e}\right)$ $-H\left(\frac{p_w}{1-p_e}\right)$ & {\it Th.}~\ref{th:cefcap},~\ref{th:cefsec}\\\cline{2-9}
& WCEF-II ($p_r, p_e, p_w$) & James can choose $np_r$ bits to see &
$p_en$ erasures, $p_wn$ flips & No & Yes & -- & $1-p_r-p_e$ $+p_eH\left(\frac{p_w}{1-p_e}\right)$ $-H\left(\frac{p_w}{1-p_e}\right)$ & {\it Wiretap type II with erasing and flipping adv. Th.}~\ref{th:wiretaptwo}\\\hline \hline
\multirow{2}{0.8cm}{\bf \begin{turn}{-90}(large alphabets)\end{turn} \begin{turn}{-90}Other channels\end{turn}} & \cite{safavi2015secure,wang2016model}  & James can choose $p_rn$ symbols to see & Additive noise over ${\mathbb F}_q$, $p_wn$ symbol errors & No & Yes & $1-p_w$ & $1-p_r-p_w$ & {\it $q$ ``sufficiently large'', Computationally efficient}\\\cline{2-9}
& \cite{zhang2015coding,kadhe2015reliable,zhang2015talking}  & James can observe $n(p_{ro}+p_{rw})$ symbols of his choice & Additive and overwrite noise over ${\mathbb F}_q$, $n(p_{wo}+p_{rw})$ symbol errors & No & Yes & $f_1(p_{ro},p_{rw},p_{wo})$ & $f_2(p_{ro},p_{rw},p_{wo})$ & {\it $q$ ``sufficiently large'', Computationally efficient, $f_1(.),f_2(.)$ -- complete characterization}\\
\hline \hline
\multirow{ 3}{.8cm}{\bf \begin{turn}{-90}Memoryless Channels\end{turn}\ \begin{turn}{-90}General Discrete\end{turn}} & \cite{csiszar1988capacity} CsiN88 & No channel & General $p_{Y|X,S}$ & No & -- & $\max_{p(x)}\min_{p(s)}$ $I(X;Y)$ & -- & {\it Max/min over permissible input/state constraints resp.}\\\cline{2-9}
& \cite{molavianjazi2009arbitrary} MolBL09, \ \ \ \cite{liang2009compound} LiaKPS09 & $p_{Z|X,S}$ & $p_{Y|X,S}$ & Yes & No & -- & $\geq \max_{p(x,u)}$   $( \min_{s} I(U;Y) - \max_{s} I(U;Z) )$
$\leq\max_{p_X} \min_{p_{S}}$ $I(X;Y|Z)$ & {\it $\forall s, U\leftrightarrow X \leftrightarrow (Y,Z)$, ``Degraded'' wiretapper }\\\cline{2-9}
& \cite{sarwate_coding_2010} Sar10 & $p_{Z|X}$ & $p_{Y|X,S}$ & Yes & Yes &
$\max_{p_X} \min_{p_{S|Z}}$ $I(X;Y)$ & -- & \\
%
\cline{2-9}
& $C(p_{Z|X}, \cW)$ \ (our result) & $p_{Z|X}$ & state constraint: $\cW$ &
No & Yes & $\max_{p_x} \min_{p_{S|Z}}$ $I(X;Y)^\ddagger$ & $\max_{p_x} \min_{p_{S|Z}}$
$(I(X;Y) - I(X;Z))^\ast$ & Th.~\ref{th:g_lower},~\ref{th:g_secrecy}\\
\hline \hline
\end{tabular}
\vspace{3mm}
\caption{Summary of models and results. ($\dagger$) under the condition $1-H(X|Z) < (1-p_e)\left(1-H\left(\frac{p_w}{1-p_e}\right) \right)$; ($\ddagger$) under the conditions: $\forall p_{S|Z} \in \cW_{S|Z}:\ I(X;Z) < I(X;Y)$, and
$\max_{p_{S|Z} \in \cW_{S|Z}}H(X|Y,S) + \max_{p_{S|Z} \in 
\cW_{S|Z}}H(Y|X) < H(X|Z)$; ($\ast$) under the condition 
$\max_{p_{S|Z} \in \cW_{S|Z}}H(X|Y,S) + \max_{p_{S|Z} \in 
\cW_{S|Z}}H(Y|X) < H(X|Z)$.}
\label{table}
\end{table}
}}

\section{Model}
\label{sec:model}

The myopic channel is defined by its input alphabet $\cX$, output alphabet to James $\cZ$, state alphabet $\cS$, output alphabet to Bob $\cY$,  probability distribution 
 for the channel connecting Alice and James $p(z|x)$,  probability distribution $p(y|x,s)$ for the channel connecting Alice and Bob, the state constraint $\cW$, and the input constraint $\cV$.
The three parties of the channel, Alice, Bob, and James are described below (see Figure~\ref{fig:model}).

\noindent {\bf{Alice's encoder:}}
Alice has a {\it message} $\bmess$ uniformly distributed in 
$\{0,1\}^{\bl \rate}$ that she wants to
transmit to Bob; $\rate$ denotes the {\it rate} of her message, and $\bl$
the {\it block-length} of Alice's transmissions. To effect this
communication, Alice encodes her message using an encoder $\Enc : \{0,1\}^{\bl \rate} \rightarrow \cX^n$ to
output a {\it transmitted vector} $\bX = \Enc(\mess)$.
We emphasize that Alice's encoder is deterministic.
The encoder has to satisfy the constraint $\type(\bX) \in \cV$, where $\cV$ is a set of types over the alphabet $\cX$.

\noindent {\bf{Channel from Alice to James:}}
James observes the output of $\bX$ passing through a memoryless channel $p(z|x)$.
More precisely, the channel law is given as
$Pr(\bZ|\bX)=\prod_\ind p(z_\ind|x_\ind)$.
Based on James's non-causal observation $\bZ$, he chooses a length-$\bl$ 
{\it state vector} $\bS$.
The state vector $\bS$ is restricted to have $\type(\bS) \in \cW$, where $\cW$ is a set of types over the alphabet $\cS$.

\noindent {\bf{Channel from Alice to Bob:}}
Bob observes the output
$\bY$ obtained through the channel  $p(y|x,s)$.
More precisely, for state $\bS = (s_1,\dots,s_\bl)$ the channel law is given as
$Pr(\bY|\bX,\bS)=\prod_\ind p(y_\ind|x_\ind,s_\ind)$.

\noindent {\bf{Successful communication:}}
Given $\bY$, Bob decodes a message $\hat{u}\in\{0,1\}^{Rn}$.
Communication is considered successful if the transmitted message $u$ equals $\hat{u}$.
The average error in communication is defined as
$\e = \frac{1}{2^{Rn}}\sum_u Pr(u\ne \hat{u})$.\footnote{Notice that in the setting of deterministic code design the average error criteria is essential for the study of the myopic model (in which we assume that James bases his decisions on a corrupted view of $\bX$),
as otherwise, in the study of maximum error, James may neglect $\bZ$ and focus his strategy on a single transmitted codeword, yielding the channel $p(z|x)$ irrelevant to the study of capacity.
This state of affairs does not hold once stochastic coding is considered.
Connections exist between the study of deterministic codes under the average error criteria and stochastic codes under the maximum error criteria in the context of AVCs, e.g., \cite{ahlswede1978elimination}.
In this work we focus on deterministic codes (which we prove are optimal for several of the settings we study).}
Rate $R$ is achievable over the myopic channel if for any $\e>0$ there exists a block length $\bl$ such that the average error in communication is at most $\e$.
The channel capacity is the supremum of all achievable rates.

\noindent {\bf Secrecy:} At times we will study the secrecy (i.e., secure)
capacity between Alice and Bob. In this setting, in addition to correct
decoding, we require that James's view $\bZ$ be almost
independent of Alice's message $u$, namely that $\frac{1}{n}I(\bZ;U)<\e$.

\section{Our results}
\label{sec:results}

In what follows we present our results.
The results are presented first for the special binary symmetric error 
myopic channel (Sec.~\ref{sec:binaryerror}) discussed in the Introduction, and then in generality~(Sec.~\ref{sec:general}).
We also present refinements of our general results for
a binary erasure-erasure model, and more general binary AVC where James can
erase and flip some fractions of transmitted bits in Subsection~\ref{sec:erasure} and Subsection~\ref{sec:eraseflip} respectively.
\footnote{All our results on bit-flip and/or erasure channels can be generalized to
$q$-ary additive and erasure channels, i.e., where James can erase some
fraction of symbols and/or add arbitrary symbols of his choice to some
fraction of symbols. In the interest of brevity we do not present these straightforward generalizations.}

\subsection{The myopic binary $C(q,p)$ channel}
\label{sec:binaryerror}
Our studies begin with the binary channel $C(q,p)$ (Fig.~\ref{fig:ot_hbc_wtap_bcast}) characterized by the pair of parameters $(q,p)$ in which (i) James views $\bX$ through a binary symmetric channel with parameter $q$ (i.e., $BSC(q)$), and (ii) James may flip at most a fraction $p$  of the bits communicated between Alice and Bob. Namely, in our notation, we set $\cX =\cZ=\cY=\cS = \{0,1\}$, $p(z|x)=q$ for $z \neq x$, $p(y|x,s)=1$ iff $y=x+s$ (and otherwise 0), and $\cW = \{(1-p',p')|p'\leq p\}$ (i.e., $\type(\bS)\in \cW$ if and only if $\|\bS\| \leq pn\}$ where $\| \cdot \|$ denotes the Hamming weight).

\tikzset{XOR/.style={draw,circle,append after command={
        [shorten >=\pgflinewidth, shorten <=\pgflinewidth,]
        (\tikzlastnode.north) edge (\tikzlastnode.south)
        (\tikzlastnode.east) edge (\tikzlastnode.west)
        }
    }
}
\tikzset{line/.style={draw, -latex',shorten <=1bp,shorten >=1bp}}

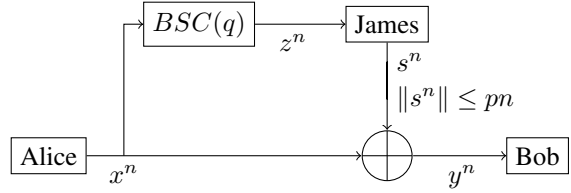
\begin{figure}
\setlength{\unitlength}{1cm}
\centering
\begin{tikzpicture}[trim left, scale=1]

\node (alice) [draw, rectangle] at (0.5,3.25) {Alice};
\node (bob) [draw, rectangle] at (7,3.25) {Bob};
\node (bobch) [XOR, scale=2] at (5, 3.25) {};
\node (calch) [draw, rectangle] at (2.5,5) {$BSC(q)$};
\node (James) [draw, rectangle] at (5,5) {James};

\draw [->] (alice) -- (1.5,3.25) node[below] (tx) {$x^n$} -- (bobch);
\draw [->] (tx) -- (1.5,5) -- (calch);
\draw [->] (calch) -- +(1.25,0) node[below] {$z^n$} -- (James);
\draw [->] (James) -- 
          +(0,-1) node[right] {$\|s^n\| \leq pn$} -- 
          +(0,-0.5) node[right] {$s^n$} -- (bobch);
\draw [->] (bobch) -- +(1,0) node[below] {$y^n$} -- (bob);

\end{tikzpicture}
\caption{The binary symmetric error channel $C(q,p)$}
\label{fig:ot_hbc_wtap_bcast}
\end{figure}

We first study the case $q>p$:

\begin{theorem}
\label{th:binarycapacity}

For $q>p$, the capacity of the binary myopic adversarial channel
$C(q,p)$ is $1-H(p)$.
The capacity is achieved by random
codes with input distribution $\Ber(1/2)$.
\end{theorem}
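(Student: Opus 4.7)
The converse $C(q,p)\le 1-H(p)$ is essentially free: a myopic jammer is no weaker than a blind one that ignores $\bZ$ altogether, and the classical result of Csisz\'ar--Narayan (see also \cite{Langberg:08}) for blind bit-flipping AVCs gives capacity exactly $1-H(p)$. For the direct part I would use standard random coding: draw $M=2^{nR}$ codewords i.i.d.\ uniform on $\{0,1\}^n$ (coordinatewise $\Ber(1/2)$) with $R=1-H(p)-\epsilon$, and have Bob run a minimum-Hamming-distance decoder with decoding radius $pn(1+\delta)$. The goal is to show that the expected average error (over both the code draw and the channel noise) vanishes as $n\to\infty$.

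The central observation that makes $q>p$ the crucial regime is a ``posterior list'' argument. Conditioned on James's view $\bz$ (the output of passing $\bx$ through $\mathrm{BSC}(q)$), the ambiguity list $L(\bz):=\{\bx'\in\cC: \|\bx'-\bz\|\approx qn\}$ has expected cardinality $\approx 2^{n(R+H(q)-1)}=2^{n(H(q)-H(p)-\epsilon)}$, which is \emph{exponentially large} precisely when $q>p$. By Bayes' rule, the true codeword $\bx$ is, to leading exponential order, uniform on $L(\bz)$. Because James must commit to $\bs$ as a function of $\bz$ alone, his choice $\bs=f(\bz)$ attacks every member of $L(\bz)$ simultaneously: for the attack to induce a decoding error at a particular $\bx\in L(\bz)$, there must exist a distinct codeword $\bx''\in\cC$ with $\|\bx''-\bx-\bs\|\le pn(1+\delta)$. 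For a random code, the expected number of such ``bad neighbors'' of any fixed $\bx$ under any fixed $\bs$ is at most $M\cdot 2^{-n(1-H(p))(1-o(1))}=2^{-n\epsilon(1-o(1))}$. Combined with the near-uniform posterior on $\bx$, this shows that for each \emph{fixed} $\bs$ James succeeds on only a $2^{-n\epsilon}$ fraction of $L(\bz)$.

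The main technical obstacle is that $\bs$ is an arbitrary function of $\bz$, and a naive union bound over the $\approx 2^{nH(p)}$ weight-$pn$ state vectors already consumes the exponential slack $\epsilon$. To sidestep this I would group state vectors, candidate codewords $\bx$, and potential confusion codewords $\bx''$ by their joint empirical type with $\bz$, estimate the expected ``bad count'' per type-class by a routine random-coding calculation, and then use concentration (Chernoff or second-moment) to promote the expectation bounds to a uniform high-probability bound over the code. Because the number of joint types is only polynomial in $n$, the exponential gap $H(q)-H(p)>0$ absorbs the polynomial overhead, so with high probability over $\cC$ every admissible $\bs$ fools only a vanishing fraction of $L(\bz)$. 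Averaging over typical $\bz$ and the message $u$ then produces vanishing average error, and since $\epsilon>0$ was arbitrary we conclude $C(q,p)\ge 1-H(p)$.
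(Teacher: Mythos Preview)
Your converse and code construction match the paper's exactly, and your direct-part sketch correctly isolates James's posterior uncertainty over $L(\bz)$ (the shell of codewords at distance $\approx qn$ from $\bz$) and correctly flags the union bound over $\approx 2^{nH(p)}$ state vectors as the crux. The gap is in how you propose to beat that union bound.

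The quantity you want to concentrate, the ``bad count'' $|\{\bx\in L(\bz):\exists\,\bx''\in\cC,\ \|\bx''-\bx-\bs\|\le pn\}|$, is a degree-$2$ functional of the i.i.d.\ codewords (each bad event involves a \emph{pair} $(\bx,\bx'')$), and neither Chernoff nor a second-moment bound on such a functional yields the $1-2^{-\Omega(n^2)}$ concentration required to union-bound over all $2^{\Theta(n)}$ pairs $(\bz,\bs)$. Your claim that ``the number of joint types is polynomial'' does not rescue this: the union is over actual sequences $\bz,\bs$, not over their types, and two state vectors of the same type will confuse entirely different subsets of $L(\bz)$. Grouping by type reduces the number of \emph{exponents} you must control, not the number of \emph{events} in the union.

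The paper's fix is to decouple the two codewords so that every concentration step is degree-$1$. It first passes to a small ``oracle-given'' subset $\ogm\subset L(\bz)$ of size $2^{n\delta}$ containing the true codeword. Then, for any fixed $\bs$: (i) the union $\bigcup_{\bx\in\ogm}B(\bx+\bs,pn)$ has volume $2^{n(\delta+H(p)+o(1))}$, so a degree-$1$ Chernoff gives at most $O(n^2)$ ``confuser'' codewords $\bx''\in\cC\setminus\ogm$ with probability $1-2^{-\Omega(n^2)}$; (ii) for each such $\bx''$, another degree-$1$ Chernoff shows at most $O(n^2)$ codewords $\bx\in\ogm$ land in $B(\bx''-\bs,pn)$. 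Multiplying gives $O(n^4)$ confused codewords in $\ogm$. A separate ``square-grid'' partition of $\ogm$ handles confusion \emph{within} $\ogm$ by the same two-step trick applied row-against-rows and column-against-columns. Because every Chernoff is on a sum of genuinely independent indicators with sub-constant mean, each bound is $1-2^{-\Omega(n^2)}$ and survives the full union over $(\bz,d,\bs)$. This two-step list-decoding decoupling is the technical ingredient your sketch is missing.
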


To prove Theorem~\ref{th:binarycapacity} we must present both an upper and a lower bound on capacity.
The upper bound is relatively simple and follows from the fact that James
may roughly mimic a memoryless $BSC(p)$ (no matter what $q$ is).
Specifically, James can completely neglect his view $\bZ$ and just construct a state vector uniformly at random among those with type $(1-p,p)$.
The converse of the channel coding theorem now shows that the rate in this case is bounded above by $1-H(p)$.
Our main contribution in the study of $C(q,p)$ is in the achievability part of Theorem~\ref{th:binarycapacity} in which we show that one can obtain rates arbitrarily close to $1-H(p)$.
The technical proof as well as a proof outline are given in Section~\ref{sec:bin}.

We next study the case of $q<p$.
Here, we show that the capacity equals that of the omniscient adversary.
\begin{theorem}
\label{th:binaryomni}
For $q<p$, the deterministic coding capacity of the binary myopic 
adversarial channel $C(q,p)$ is the same as that of the binary 
adversarial channel with an omniscient adversary.
\end{theorem}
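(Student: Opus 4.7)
$C_{myopic}(q,p) \geq C_{omni}(p)$ is immediate: any myopic attack is a (randomized) function of $\bX$ via $\bZ$, so any code that defeats the $pn$-bounded omniscient jammer automatically defeats the myopic one.

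For the nontrivial inequality $C_{myopic}(q,p) \leq C_{omni}(p)$, my plan is to show that whenever $q<p$ the myopic adversary can \emph{simulate} an omniscient one. Fix any rate $R > C_{omni}(p)$ and any deterministic code $\cC = \{\bX_u\}_{u\in[M]}$ of rate $R$. By the strong converse for the omniscient channel, there exist a constant $\varepsilon>0$, a subset $\cB \subseteq [M]$ with $|\cB| \geq \varepsilon M$, and a map $\sigma$ from $\cB$ to state vectors of Hamming weight at most $pn$, such that Bob's decoder $\phi$ satisfies $\phi(\bX_u + \sigma(u)) \ne u$ for every $u \in \cB$. James's strategy on observation $\bZ$ is then: independently sample $\bN \sim BSC\!\left(\tfrac{p-q}{1-2q}\right)$ (well-defined since $q<p<1/2$), set $\hat\bZ = \bZ \oplus \bN$, run Bob's own decoder to obtain $\hat u = \phi(\hat\bZ)$, and output $\bS = \sigma(\hat u)$ if $\hat u \in \cB$ (otherwise $\bS = 0$). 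The attack fools Bob exactly when $\hat u = u$ and $u \in \cB$, so its success probability is at least $\Pr[\hat u = u] - (1-\varepsilon)$, reducing everything to showing $\Pr[\hat u = u] = 1-o(1)$.

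The step I expect to be the main obstacle is precisely this decoding success bound for an \emph{arbitrary} deterministic code. I would resolve it by a ``mimic-plus-degradation'' reduction. First, by drawing $\bS$ uniformly from weight-$\lfloor pn \rfloor$ vectors independently of $\bZ$, James can induce on Bob a channel statistically $o(1)$-close to $BSC(p)$; since $\cC$ is assumed to tolerate every myopic attack with vanishing error, $\phi$ must correctly decode $BSC(p)$-corrupted inputs with probability $1-o(1)$ (with a small slack on $p$ to handle the weight budget exactly). Second, the degradation identity $BSC(p) = BSC(q) \circ BSC\!\left(\tfrac{p-q}{1-2q}\right)$---which is exactly where the hypothesis $q<p$ enters---implies that $(\bX, \hat\bZ)$ has the same joint distribution as $\bX$ paired with its image through $BSC(p)$. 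Hence $\phi(\hat\bZ) = u$ with probability $1-o(1)$, so James's attack succeeds with probability at least $\varepsilon - o(1)$, contradicting achievability of $R > C_{omni}(p)$. Determinism of the encoder is essential here: the attack $\sigma$ is indexed by the message $u$, and only because $u$ determines $\bX$ can $\sigma(\hat u)$ be applied coherently against $\bX = \bX_u$---consistent with the paper's footnote leaving the stochastic case open.
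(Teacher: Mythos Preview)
Your proof is correct and rests on the same core idea as the paper's: since any code achieving rate $R$ on $C(q,p)$ must in particular defeat the ``mimic a BSC'' attack, Bob's decoder must work on BSC-corrupted inputs, and therefore James---whose view is already a BSC output---can run that same decoder to recover $\bX$ and then act omnisciently.

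The only difference is a small detour you take that the paper avoids. You have James mimic $BSC(p)$ on Bob, conclude the decoder handles $BSC(p)$, and then \emph{degrade} James's $BSC(q)$ view to $BSC(p)$ via the extra $BSC\!\left(\frac{p-q}{1-2q}\right)$ before decoding. The paper instead has James mimic $BSC(q)$ on Bob (allowed because $q<p$), concludes the decoder handles $BSC(q)$, and observes that James's view \emph{is already} the output of $BSC(q)$---no degradation needed. This also lets the paper phrase the argument directly (``any myopic-good code is omniscient-good'') rather than via contradiction and the explicit bad-set map $\sigma$. Your route is perfectly valid, just slightly longer; the degradation identity and the slack-on-$p$ bookkeeping are both unnecessary once you notice James can mimic $BSC(q)$ rather than $BSC(p)$.
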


\begin{proof}
We assume successful communication at rate $R$ over $C(q,p)$ and show that $R$ is achievable in the omniscient channel model as well.
Consider the code that allows communication at rate $R$.
The same code must also allow communication at rate $R$ over $BSC(q)$ (this
follows from the fact that $q<p$ and thus James can roughly mimic $BSC(q)$, just as 
described above in the converse to Theorem~\ref{th:binarycapacity}). 
Since for such an adversarial action, Bob can still decode $\bX$,
this implies that James,
who views $\bX$ through a $BSC(q)$ is able to decode $\bX$ as well,
implying, in turn, that James is actually omniscient.
\end{proof}

We finally turn to study the context of secure communication.
Here, we first consider the binary symmetric broadcast channel with
independent BSC to Bob and James with cross-over probabilities $p$
and $q$ respectively. 
Then it is well known~\cite{csiszar_broadcast_1978} that the message transmission
capacity to Bob under the secrecy condition 
is $H(q)-H(p)$.  
An achievability scheme in this case is to append the $nR$ bits of
message with $n(1-H(q))$ bits of private randomness and encode the resulting
string with a random channel code of rate $1-H(p)$.

In our channel $C(q,p)$, for $q>p$, the secrecy capacity is also $H(q)-H(p)$.
The encoding can be done in the same way as before: appending random 
bits to the message and then encoding using a random code. James
can not learn anything about the message by the secrecy results
in the random channel case discussed above. Since James is 
sufficiently myopic, by Theorem~\ref{th:binarycapacity}, Bob
can decode the message and the private randomness irrespective
of James's strategy. So, we have

\begin{theorem}
For $q>p$, the binary myopic channel $C(q,p)$ has secrecy capacity
$H(q)-H(p)$.
\label{th:binarysecrecy}
\end{theorem}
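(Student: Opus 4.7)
The plan is to establish both directions; most of the work is in achievability because the matching upper bound is essentially inherited from the classical broadcast wiretap result cited in the paragraph just above the theorem.

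For the converse I would argue that James can always choose to disregard his view and play an oblivious strategy. Fix any $\delta>0$ and let James draw $S^n$ i.i.d.\ Bernoulli$(p-\delta)$; then $\|S^n\| \le pn$ with probability tending to $1$, so this is (essentially up to a vanishing event) a legal strategy. Against it the problem degenerates to a passive degraded broadcast wiretap channel with BSC$(p-\delta)$ to Bob and BSC$(q)$ to James, for which the Wyner / Csisz\'ar--K\"orner secrecy capacity is $H(q)-H(p-\delta)$. Letting $\delta\to 0$ yields the upper bound $H(q)-H(p)$ on the secrecy capacity of $C(q,p)$.

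For achievability I would implement the nested construction sketched just above the theorem statement. Fix a small $\epsilon>0$. Alice's transmission represents a pair $(U,V)$, where $U$ is the true message of rate $R_m = H(q)-H(p)-2\epsilon$ and $V$ is independent uniform local randomness of rate $R_v = 1-H(q)+\epsilon$. She encodes $(U,V)$ via a random code of total rate $R_m+R_v = 1-H(p)-\epsilon$ with i.i.d.\ Bernoulli$(1/2)$ inputs, organizing the codewords into $2^{nR_m}$ equal-sized bins indexed by the message. Two properties must then be verified. \emph{Reliability} for Bob is handed to us by Theorem~\ref{th:binarycapacity}: because $q>p$ and the total rate $1-H(p)-\epsilon$ is strictly below $1-H(p)$, Bob recovers $(U,V)$, and hence $U$, with vanishing average error against every strategy of James. \emph{Secrecy} for James is a standard Wyner-type random-binning argument, and this is the step where the myopic structure is crucial: since $Z^n = X^n \oplus W^n$ with $W^n$ i.i.d.\ Bernoulli$(q)$, the joint law of $(U,Z^n)$ does not depend on $S^n$ at all, so the analysis reduces to that of a passive BSC$(q)$ wiretap. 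The choice $R_v > I(X;Z) = 1-H(q)$ then guarantees $\tfrac{1}{n} I(U;Z^n) \to 0$.

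The main conceptual point, and what makes the argument so short, is this \emph{decoupling}: James's adversarial action $S^n$ affects only $Y^n$ and never $Z^n$. Consequently the reliability guarantee, which must hold for the worst-case $S^n$ and is outsourced to Theorem~\ref{th:binarycapacity}, and the secrecy guarantee, which is a purely passive wiretap computation, can be combined on a single random code ensemble via the usual expectation-plus-expurgation argument. I do not anticipate a genuine obstacle here — the only item requiring care is that the \emph{same} random ensemble simultaneously satisfies the reliability bound (in expectation over both the code and the message) and the equivocation bound, which is handled by a standard union bound and removal of a vanishing fraction of bad codebooks.
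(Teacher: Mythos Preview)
Your proposal is correct and follows essentially the same route as the paper: append private randomness to the message, encode with a random Bernoulli$(1/2)$ code of rate just below $1-H(p)$, invoke Theorem~\ref{th:binarycapacity} for reliability, and use the classical Wyner/Csisz\'ar--K\"orner analysis over the passive BSC$(q)$ for secrecy (exploiting that $Z^n$ is independent of $S^n$). Your writeup is simply more explicit than the paper's sketch, particularly in spelling out the converse via James mimicking an oblivious BSC$(p-\delta)$ and in articulating the decoupling between reliability and secrecy.
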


\subsection{General myopic channels}
\label{sec:general}
We now turn to present our results for the myopic model in full generality.
We consider the setup where James's channel is given by $p_{Z|X}$,
and his state $\bS$ is constrained to have a type in the set $\cW$. We
denote this channel by $C(p_{Z|X}, \cW)$.
To obtain single letter upper and lower bounds on the capacity of myopic channels we consider the types of the vectors $\bX$, $\bZ$, $\bS$, and $\bY$, and certain distributions on them.
Our achievability scheme uses a random code governed by the single letter distribution $p_X \in \cV$.
Let $p_{Z|X}$ be James's 
channel law (we now explicitly specify the channel as a subscript to avoid confusion).
The distributions, $p_X$ and $p_{Z|X}$ give rise to a joint distribution $p_{XZ}$ and a marginal 
distribution $p_{Z}$. Recall that  $\cW$ is the set of state types of $\bS$ which James may impose.
Let $\cW_{S|Z}$ be the set of conditional distributions $p_{S|Z}$ which
results in a marginal distribution $p_S$ in the set $\cW$.
Namely, $p_{S|Z}$ is in $\cW_{S|Z}$ if and only if 
$$
p_S(\cdot) = \sum_{x,z} p_X(x)p_{Z|X}(z|x)p_{S|Z}(\cdot|z) \in \cW
$$
Finally, we use $p_{Y|XS}(y|x,s)$ which is given as part of the channel definition.
Note that $p_X$ and $p_{S|Z}$ define a joint single letter distribution over random variables $X,Z,S,Y$ defined by  
$p_Xp_{Z|X}p_{S|Z}p_{Y|XS}$. 

We are now ready to present our results for the general model.
Our first theorem addresses achievability. 
For technical reasons, we focus only on ``state-deterministic channels'', \ie~where $\bY$ is a deterministic function of $(\bX,\bS)$.
We elaborate on channels which are not state-deterministic in Section~\ref{app:gen}.

\begin{theorem}
\label{th:g_lower}
For state-deterministic $\pee_{Y|XS}$ a rate $R$ is achievable if there exists a $p_X \in \cV$ such that
\begin{align*}
& (a)\ \forall p_{S|Z} \in \cW_{S|Z}:\  R < I(X;Y).\\
& (b)\ \forall p_{S|Z} \in \cW_{S|Z}:\ I(X;Z) < I(X;Y).\\
& (c)\max_{p_{S|Z} \in \cW_{S|Z}}H(X|Y,S) + \max_{p_{S|Z} \in 
\cW_{S|Z}}H(Y|X) < H(X|Z).
\end{align*}
Such a rate is achievable using random codes generated using
the input distribution $p_X$.
\end{theorem}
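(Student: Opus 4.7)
The plan is to prove achievability by a random coding argument with a universal (minimum-conditional-entropy / MMI) decoder, exploiting list-decoding ideas to formalize the intuition that a sufficiently myopic James is effectively ``blind'' to the true codeword among an exponentially large set. Fix a type close to $p_X \in \cV$ satisfying conditions (a)--(c), draw a codebook $\cC = \{\bX(1),\dots,\bX(\lceil 2^{nR}\rceil)\}$ with each codeword chosen uniformly from the type class of $p_X$, and let Bob decode to the message $\hat{u}$ whose codeword maximizes empirical mutual information with $\bY$ among all pairs $(\bX(\hat{u}),\bY)$ explainable by some state of type in $\cW$. A good deterministic codebook will be extracted from the random ensemble via Markov's inequality once the average error has been bounded uniformly over (type classes of) James's strategies.

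The heart of the argument is a list-decoding view at James. Condition on the realization of $\bZ$. A standard packing and concentration argument shows that, with probability $1-o(1)$ over the code, the list
\[
\cL(\bZ) \;:=\; \{\,u' : (\bX(u'),\bZ) \text{ is jointly typical under } p_X p_{Z|X}\,\}
\]
has cardinality of order $2^{n(R-I(X;Z))}$ --- which is exponentially large by condition (b) when $R$ is chosen close to $I(X;Y)$ --- and the true index $u$ is nearly uniformly distributed inside $\cL(\bZ)$ given $\bZ$ and $\cC$. Since James's state vector $\bS$ is a function of $\bZ$ and $\cC$ only, he must choose $\bS$ without knowing which element of $\cL(\bZ)$ was actually transmitted; this is the precise sense in which a sufficiently myopic James is blind to the codeword.

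For the decoding analysis at Bob, I would enumerate joint types $p_{XYS}$ compatible with the state-deterministic channel $p_{Y|XS}$ and the state constraint $\cW$, and for each such type bound the expected number of wrong codewords $u' \neq u$ that are confusable with the transmitted codeword --- namely those for which $(\bX(u'),\bY)$ has empirical $I(X;Y)$ at least as large as the true pair under some state in $\cW$. A counting argument yields an expected number of order $2^{n(R - I(X;Y) + \epsilon)}$ per type when $\bS$ is oblivious to $u$, which is negligible by condition (a). The adversarial dependence of $\bS$ on $\cL(\bZ)$ contributes an extra exponent accounting for $H(X|Y,S)$ (the residual ambiguity about $X$ given Bob's view and the state) and $H(Y|X)$ (the variability of $\bY$ driven purely by $\bS$ under the state-deterministic assumption); comparing this penalty with the list-size exponent $H(X|Z)$ is precisely where condition (c) enters.

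The main obstacle is that James's strategy can depend on the (random) codebook, so a vanilla ``fix the adversary, then random code'' argument does not immediately apply. The standard remedy, which I would adopt, is to quotient James's strategies by their induced joint empirical type on $(\bX,\bZ,\bS,\bY)$ --- of which only polynomially many exist in $n$ --- and handle each type by a separate union bound. Condition (c), namely $\max_{p_{S|Z}} H(X|Y,S) + \max_{p_{S|Z}} H(Y|X) < H(X|Z)$, is exactly the exponent inequality ensuring that, across all such strategy types, the adversary's ``confusion budget'' is dominated by the exponent of the list size in which the true codeword hides, so the expected number of bad $(u, \bS)$ pairs over the random code is exponentially small. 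A final application of Markov's inequality then extracts a deterministic codebook with average error $o(1)$, completing the proof.
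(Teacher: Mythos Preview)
Your high-level intuition is right and matches the paper's: James's residual uncertainty is captured by a list of roughly $2^{n(R-I(X;Z))}$ codewords consistent with $\bZ$, and condition (b) guarantees this list is exponentially large. But the step you call ``the standard remedy'' --- quotienting James's strategies by the joint empirical type of $(\bX,\bZ,\bS,\bY)$ --- does not close the gap, and this is where your sketch diverges substantially from what the paper actually does.

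The difficulty is that James sees the \emph{entire codebook} before choosing $\bS$, so his attack can target specific unlucky codeword pairs, not just a joint type. Reducing to polynomially many joint types still leaves, within each type class, exponentially many concrete pairs $(\bz,\bs)$, and the decoding-error event depends on the specific $(\bz,\bs)$ through the positions of the other codewords. A vanilla expected-count/Markov argument over the random code gives only single-exponential concentration, which does not survive the union bound over exponentially many $(\bz,\bs)$.

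The paper's mechanism is quite different. It \emph{strengthens} James via an oracle that reveals (i) the exact conditional type $T$ of $(\bX,\bZ)$ and (ii) a small ``oracle-given'' subset $\ogm$ of only $2^{n\delta}$ codewords on the shell $\cSh(\bZ,T)$ that contains the true message. Conditioned on this view, the true message is uniform on $\ogm$ and, by the principle of deferred decisions, the codewords inside and outside $\ogm$ are independent (Observation O.3). One then fixes $(\bz,\tau,\bs)$ and proves, with probability $1-e^{-\beta n^2}$ over the code, that at most $2^{3\delta n/4}$ codewords in $\ogm$ are confusable under $\bs$. This super-exponential concentration is what permits a union bound over all $(\bz,\tau,\bs)$. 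It is obtained via a \emph{double list-decoding} argument (Lemma~5.2 in the paper): first bound the number of confus\emph{ing} codewords outside $\ogm$ landing in $\bigcup_{w\in\ogm}\cB_{X|Y}(\bs(\bX(w)),\cW)$, then bound, for each such confuser $\bx$, the number of confus\emph{ed} codewords in $\ogm$ via the preimage set $(\bs)^{-1}(\cB_{Y|X}(\bx,\cW))$. Condition (c) enters precisely here, as the exponent bound
\[
\tfrac{1}{n}\log_2\bigl|(\bs)^{-1}(\cB_{Y|X}(\bx,\cW))\bigr| \;\le\; \max_{p_{S|Z}\in\cW_{S|Z}} H(X|Y,S) + \max_{p_{S|Z}\in\cW_{S|Z}} H(Y|X) \;<\; H(X|Z),
\]
which is what makes the reverse list small relative to the shell. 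Confusions \emph{within} $\ogm$ are handled by a separate ``square grid'' argument (Lemma~5.3): arrange the $2^{n\delta}$ codewords in a $2^{n\delta/2}\times 2^{n\delta/2}$ grid and apply the two-step list-decoding bound row-by-row and column-by-column. None of these ingredients --- the oracle-given set of controlled size, the $e^{-\beta n^2}$ concentration, the two-sided list-decoding, or the grid decomposition --- is supplied by type-quotienting, and your sketch does not indicate an alternative route to them.
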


Our proof of Theorem~\ref{th:g_lower} is along similar lines as that of
Theorem~\ref{th:binarycapacity}, and is presented in Sec.~\ref{app:gen}. 
The theorem guarantees the maximum rate $I(X;Y)$ (condition (a)) 
against an oblivious adversary with state constraint, provided
the state constraint satisfies the two myopicity conditions (b) and
(c). Here condition (b) says that James's channel should be worse
than Bob's channel, i.e., James's view should be less `informative' than
Bob's. This corresponds to the condition $q>p$ in the binary case in 
Theorem~\ref{th:binarycapacity}. Though condition (c) also says 
something similar
in nature, its exact form is not intuitive. It comes due to
a technical requirement in the proof (see Remark~\ref{rem:erasure-erasure}).

We now give an upper bound obtained
by considering only memoryless feasible jamming strategies. The proof
is obvious, and is omitted.

\begin{theorem}
\label{th:g_upper}
A rate $R$ is achievable only if there exists a $p_X \in \cV$ such that
$\ \forall p_{S|Z} \in \cW_{S|Z}:\  R < I(X;Y). $
\end{theorem}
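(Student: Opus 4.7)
\textbf{Proof plan for Theorem~\ref{th:g_upper}.}
The plan is to observe that James is always free to ignore the temporal structure of $\bZ$ and apply a single-letter conditional $p_{S|Z}\in\cW_{S|Z}$ independently at each coordinate; this collapses the myopic adversarial channel from Alice to Bob into a discrete memoryless channel (DMC), to which the standard Shannon converse applies. Varying $p_{S|Z}$ over $\cW_{S|Z}$ then yields the universal inequality claimed.

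First I would fix a sequence of rate-$R$ codes whose average error tends to zero. Each codeword of block-length $\bl$ has an input type in $\cV$, and since there are only polynomially many types over $\cX$, a pigeonhole argument extracts a sub-codebook of rate $R-o(1)$ whose codewords all share a common type $p_X^{(\bl)}\in\cV$. Passing to a convergent subsequence gives a limiting $p_X\in\cV$ (with the convention that a suitable topological closure of $\cV$ is implicitly used, which is harmless for type sets).

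Next, for any fixed $p_{S|Z}\in\cW_{S|Z}$, James draws each $S_t$ independently given $Z_t$ according to $p_{S|Z}$. Conditioned on $\bX$, Bob's observation then factorises as $\Pr(\bY\mid\bX)=\prod_t p_{Y|X}(y_t\mid x_t)$ with $p_{Y|X}(y\mid x)=\sum_{z,s}p_{Z|X}(z\mid x)p_{S|Z}(s\mid z)p_{Y|XS}(y\mid x,s)$, i.e.\ a DMC. Applying Fano's inequality to the constant-composition sub-codebook in the standard way yields $R\le I(X;Y)+o(1)$, where $I(X;Y)$ is evaluated under the single-letter joint $p_X p_{Z|X}p_{S|Z}p_{Y|XS}$. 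Since $p_{S|Z}\in\cW_{S|Z}$ is arbitrary, taking $\bl\to\infty$ gives the claim.

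The only mild technicality — and probably the reason the authors describe the proof as obvious — is that a purely i.i.d.\ attack produces a state vector whose empirical type is merely close to $p_S=\sum_{x,z}p_X(x)p_{Z|X}(z\mid x)p_{S|Z}(\cdot\mid z)$ rather than exactly in $\cW$. This is handled in the routine way: either shrink $p_{S|Z}$ slightly into the interior of $\cW_{S|Z}$ and invoke continuity of $I(X;Y)$ in $p_{S|Z}$, or truncate the random attack on the negligible event that the realised type leaves $\cW$, absorbing the extra error into the $o(1)$ term. There is no deeper obstacle; the whole argument is a standard ``jammer fixes a memoryless strategy, then Shannon converse'' reduction.
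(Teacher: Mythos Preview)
Your proposal is correct and matches the paper's own approach: the authors explicitly state that the upper bound is ``obtained by considering only memoryless feasible jamming strategies'' and omit the proof as obvious, which is precisely your reduction to a DMC via an i.i.d.\ $p_{S|Z}$ attack followed by the Shannon/Fano converse. Your handling of the type-constraint technicality (shrinking into the interior of $\cW_{S|Z}$ or truncating) is the standard fix and is implicitly what the paper has in mind.
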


Using a similar argument as in the binary case
(Theorem~\ref{th:binarysecrecy}),
we obtain the following achievability result for secrecy rates.

\begin{theorem}
\label{th:g_secrecy}
For state-deterministic $\pee_{Y|XS}$ a secrecy rate $R$ is achievable if there exists a $p_X \in \cV$ such that
\begin{align*}
& (a)\ \forall p_{S|Z} \in \cW_{S|Z}:\  R < I(X;Y) - I(X;Z).\\
& (b)\ \max_{p_{S|Z} \in \cW_{S|Z}}H(X|Y,S) + \max_{p_{S|Z} \in 
\cW_{S|Z}}H(Y|X) < H(X|Z).
\end{align*}
\end{theorem}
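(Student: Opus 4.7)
The plan is to mirror the construction used in Theorem~\ref{th:binarysecrecy}: build a wiretap-style random code whose reliability is guaranteed by Theorem~\ref{th:g_lower} and whose secrecy is guaranteed by a standard soft-covering argument applied to the memoryless channel $p_{Z|X}$. The crucial observation that makes the two analyses separable is that James's observation $\bZ$ depends on $\bX$ only through the memoryless channel $p_{Z|X}$ and \emph{not} on his jamming choice $\bS$; consequently the secrecy analysis reduces to a purely passive-eavesdropper problem, independent of the adversarial state.

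Concretely, I would fix $p_X \in \cV$ satisfying the hypotheses, pick $\delta > 0$ small, and choose an auxiliary randomness rate $R' = I(X;Z) + \delta$. The encoder maps the message $U$, uniform on $\{0,1\}^{nR}$, together with an independent uniform string $V$ on $\{0,1\}^{nR'}$, to a codeword in a random codebook drawn i.i.d.\ from $p_X^{\otimes n}$ at total rate $R + R'$. Reliability then follows by invoking Theorem~\ref{th:g_lower} at rate $R+R'$: hypothesis (a) there becomes $R + I(X;Z) + \delta < I(X;Y)$ for every $p_{S|Z} \in \cW_{S|Z}$, which (for $\delta$ small enough) is exactly hypothesis (a) of the present theorem, since $I(X;Z)$ depends only on $p_X$ and $p_{Z|X}$ and is insensitive to $p_{S|Z}$; hypothesis (b), namely $I(X;Z) < I(X;Y)$, is then immediate from the present (a) for any $R \geq 0$; and hypothesis (c) coincides verbatim with hypothesis (b) of Theorem~\ref{th:g_secrecy}. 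Hence Bob decodes $(U,V)$, and a fortiori $U$, with vanishing average error against every feasible jamming strategy.

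For the secrecy condition $\frac{1}{n} I(U;\bZ) \to 0$, I would invoke a standard channel-resolvability (equivalently, soft-covering) argument on $p_{Z|X}$: with $2^{nR'}$ codewords per message and $R' > I(X;Z)$, with high probability over the i.i.d.\ codebook the induced conditional distribution of $\bZ$ given $U=u$ is exponentially close in variational distance to $p_Z^{\otimes n}$, uniformly in $u$. This yields the vanishing information leakage exactly as in the classical Wyner/Csisz\'ar--K\"orner analysis, and exactly as outlined in the binary case preceding Theorem~\ref{th:binarysecrecy}.

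The main obstacle is to produce a \emph{single} deterministic code that simultaneously realizes both properties. Since reliability (against every admissible $p_{S|Z}$) and secrecy each hold with probability $1 - o(1)$ over the same i.i.d.\ random codebook, a union bound combined with a standard expurgation step extracts a deterministic code that achieves both simultaneously; this deterministic code witnesses achievability of the secrecy rate $R$. All of the heavy lifting is already contained in the proof of Theorem~\ref{th:g_lower} and in the soft-covering lemma, so no new channel-analytic machinery is required beyond verifying that the two random-coding events are compatible.
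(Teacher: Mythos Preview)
Your proposal is correct and follows essentially the same route as the paper: the paper simply states that the argument mirrors Theorem~\ref{th:binarysecrecy}, namely append private randomness at rate just above $I(X;Z)$, invoke Theorem~\ref{th:g_lower} at the combined rate for reliability (your verification that conditions (a), (b), (c) of Theorem~\ref{th:g_lower} reduce to conditions (a), (b) here is exactly the check that is implicit in the paper), and obtain secrecy from the standard wiretap analysis over the memoryless channel $p_{Z|X}$. Your explicit use of soft-covering and the union-bound extraction of a single good codebook are slightly more detailed than the paper's one-line pointer, but the approach is the same.
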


\begin{remark}
The proof of secrecy for rates under $I(X;Y) - I(X;Z)$ is following
similar encoding and arguments as that for wiretap channels. For wiretap
channels, it is known that this can be improved to the maximum
of $I(U;Y) - I(U;Z)$ over all $p(u)p(x|u)$. However, our proof
(of Theorem~\ref{th:g_secrecy}) uses a state-deterministic $\pee_{Y|XS}$.
Introducing an auxiliary random variable $U$ will result in
an effective probabilistic AVC $\pee_{Y|US}$, for which our present
proof does not hold. Please also see footnote~\ref{foo:prob}.
\end{remark}

\subsection{Binary erasure-erasure channels}
\label{sec:erasure}
Communicating securely in the presence of an active eavesdropper has attracted some attention in the recent literature (e.g.~\cite{aggarwal2009wiretap,wang_on_capacity_2016}). Hence in this and the next subsection we remark on the implications of our techniques/results for some binary-input channels. One of the challenges is that unlike in the binary symmetric case (Theorem~\ref{th:binarycapacity}), for general myopic channels, our
lower bound (Theorem~\ref{th:g_lower}) does not meet the upper bound
(Theorem~\ref{th:g_upper}). This is due to the difficulty of finding
a single-letter expression for a counting argument in the proof of 
Lemma~\ref{lem:genlist1}.

For an erasure-erasure channel
(referred to as $CE(q,p)$) where James's channel is a $BEC(q)$,
and he can erase at most a $p$ fraction of the transmitted bits,
Theorem~\ref{th:g_lower} guarantees rates upto $1-p$ only if
$q > p+H(p)$, whereas the upper bound of $1-p$ in Theorem~\ref{th:g_upper}
is valid whenever $q>p$. This gap can be eliminated by careful analysis using specific properties of erasure channels.
As a result, we have the capacity results as given below. 

\begin{figure}[h]
\setlength{\unitlength}{1cm}
\centering
\begin{tikzpicture}[trim left, scale=1]

\node (alice) [draw, rectangle] at (0.5,3.25) {Alice};
\node (bob) [draw, rectangle] at (8.5,3.25) {Bob};
\node (bobch) [draw, rectangle] at (5, 3.25) {$Y=\begin{cases}
X; & \text{if } S=0\\ \perp; & \text{if } S=1\end{cases}$};
\node (calch) [draw, rectangle] at (2.5,5) {$BEC(q)$};
\node (James) [draw, rectangle] at (5,5) {James};

\draw [->] (alice) -- (1.5,3.25) node[below] (tx) {$x^n$} -- (bobch);
\draw [->] (tx) -- (1.5,5) -- (calch);
\draw [->] (calch) -- +(1.25,0) node[below] {$z^n$} -- (James);
\draw [->] (James) --
          +(0,-0.6) node[right] {$s^n, \|s^n\| \leq pn$} -- (bobch);
\draw [->] (bobch) -- +(2.25,0) node[below] {$y^n$} -- (bob);

\end{tikzpicture}
\caption{The binary erasure-erasure adversarial channel $CE(q,p)$}
\label{fig:erasure-erasure}
\end{figure}
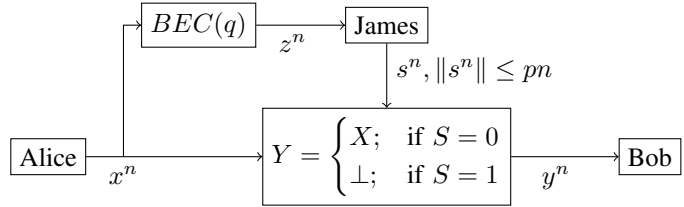

\begin{theorem}
\label{th:erasurecapacity}
For $q>p$, the capacity of the binary erasure-erasure channel
$CE(q,p)$ is $1-p$.
The capacity is achieved by random
codes with input distribution $\Ber(1/2)$.
\end{theorem}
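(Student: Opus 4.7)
The converse $C(CE(q,p))\leq 1-p$ is immediate from Theorem~\ref{th:g_upper}: James may ignore $\bZ$ and independently set $S_i=1$ with probability $p$ at each coordinate. With $p_X=\Ber(1/2)$ this makes the Alice-to-Bob channel a memoryless $\mathrm{BEC}(p)$, so $I(X;Y)=1-p$ upper-bounds any achievable rate. All the substance is in the direct part.

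For achievability of every $R<1-p$ my plan is to follow the template of Theorem~\ref{th:g_lower}: draw a random $\Ber(1/2)$ codebook of rate $R$ and decode by unique consistency, i.e.\ Bob outputs $\hat u$ iff $\bX(\hat u)$ is the only codeword equal to $\bY$ on every non-erased coordinate. A decoding error for message $u$ is precisely the event that some $u'\neq u$ satisfies $\mathrm{supp}(\bX(u)\oplus\bX(u'))\subseteq S$, where $S=f(\bZ)$ is James's erasure pattern of size at most $pn$. For any \emph{fixed} strategy $f$, a single union bound over $u'$ yields $\mE_{\mathcal{C}}[\mathrm{err}(f)]\leq 2^{n(R-1+p)}$, vanishing for $R<1-p$; the real challenge is upgrading this to a bound uniform over $f$, since James selects $f$ knowing the codebook.

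The reason Theorem~\ref{th:g_lower} only delivers rate $1-p$ under the strictly stronger condition $q>p+H(p)$ is that the list-counting tool behind it (Lemma~\ref{lem:genlist1}) is calibrated for arbitrary $\pee_{Y|XS}$ and spends an avoidable $2^{nH(p)}$ factor to dominate all possible Bob-side outputs consistent with a given $(\bX,S)$. I plan to re-derive the lemma specialized to erasures, exploiting the rigidity that once $S$ is fixed, $\bY$ is pinned coordinate-by-coordinate to either $\bX$ or $\perp$, so a confusable pair $(\bX(u),\bX(u'))$ must agree \emph{exactly} on every non-erased coordinate. Combined with the constraint that both codewords must lie in James's posterior list $L_{\bZ}$ and hence already agree on the $\approx(1-q)n$ coordinates James can see, the difference $\bX(u)\oplus\bX(u')$ is forced into the $qn$ unseen coordinates with weight at most $pn$. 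This removes the $H(p)$ penalty from the corresponding counting step and lets the analysis close for every $q>p$.

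The main obstacle I anticipate is packaging this per-strategy bound into one that is uniform in $f$. The standard AVC-style device is to quantize $f$ by the joint type of $(\bZ,S)$ -- a set of size polynomial in $n$ -- and, within each type class, invoke a concentration estimate (e.g.\ McDiarmid on the independent codewords, exploiting the weak dependence of the confusion indicators across codeword pairs) to show that the random-code error remains close to its mean $2^{n(R-1+p)}$ uniformly. A union bound over the polynomially many type classes then costs only a sub-exponential factor and still gives vanishing error for every $R<1-p$ whenever $q>p$, matching the converse. If a direct concentration bound is too weak because the number of adversarial strategies is super-exponential, the fallback is to use the symmetry and memorylessness of $p_{Z|X}$ to reduce to strategies depending on $\bZ$ only through its joint type with $\bX$, paralleling the reduction used in the proof of Theorem~\ref{th:binarycapacity}.
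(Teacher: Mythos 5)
Your converse and, more importantly, your erasure-specific refinement are exactly the paper's route: the proof of Theorem~\ref{th:erasurecapacity} is given as ``Theorem~\ref{th:g_lower} specialized with a refinement of Lemma~\ref{lem:genlist1}'', and that refinement (Remark~\ref{rem:erasure-erasure}(i)) is precisely your observation that a confusing codeword must agree with the transmitted one on every non-erased coordinate, so the count of potential confusers per codeword drops from $2^{n(p+H(p))}$ (the generic bound $\max H(X|Y,S)+\max H(Y|X)$ underlying condition (c) of Theorem~\ref{th:g_lower}) to $2^{np}$, turning the myopicity requirement $q>p+H(p)$ into $q>p$. Your diagnosis of where the wasted $2^{nH(p)}$ factor comes from --- enumerating the Bob-side outputs consistent with a given $(\bx,\bs)$, i.e.\ the $H(Y|X)$ term --- is also correct.

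Where you drift from the paper is the final paragraph on uniformity over James's strategies, and as written that plan would not close. Quantizing $f$ by the joint type of $(\bZ,\bS)$ leaves exponentially many distinct state vectors within each type class, and the confusion events depend on the actual erased positions, not on the type; the set of maps $\bz\mapsto\bs$ is doubly exponential, so ``polynomially many type classes'' is not the relevant count. A direct McDiarmid bound on the error fraction is also delicate, because the worst-case bounded-difference constant is not $O(2^{-nR})$: perturbing one codeword can, for a bad codebook, create confusions with many messages. The mechanism the paper actually uses --- and which you inherit if you genuinely follow the template of Theorem~\ref{th:g_lower} rather than replacing it --- is (a) the oracle-given set, reducing the task to showing that for each \emph{fixed} triple $(\bz,\tau,\bs)$ at most a $2^{-n\delta/4}$ fraction of the $2^{n\delta}$ codewords in each oracle set are confusable, via the two-stage (``double list decoding'') counting and the square-grid argument of Lemma~\ref{lem:genlist2}; and (b) the fact that these per-triple statements hold with probability $1-e^{-\Omega(n^2)}$ over the code (Lemmas~\ref{lem:basic1} and~\ref{lem:basicgen}), which is what makes the union bound over the exponentially many triples $(\bz,\tau,\bs)$ affordable; no reduction of James to type-dependent strategies is performed in the proof of Theorem~\ref{th:binarycapacity} either. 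Replace your last paragraph with that union-bound structure and the proposal matches the paper.
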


The proof of this result follows as a special case of 
(Theorem~\ref{th:g_lower}) with a specific refinement in Lemma~\ref{lem:genlist1} as discussed
in Remark~\ref{rem:erasure-erasure}.

The next two theorems follow using similar arguments as those of
Theorem~\ref{th:binaryomni} and Theorem~\ref{th:binarysecrecy} respectively.
The proofs are thus omitted.
\begin{theorem}
\label{th:erasureomni}
For $q<p$, the deterministic coding capacity of the binary erasure-erasure
adversarial channel $CE(q,p)$ is the same as that of the binary
erasing adversarial channel with an omniscient adversary.
\end{theorem}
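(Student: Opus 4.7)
The plan is to mirror the proof of Theorem~\ref{th:binaryomni}, adapting the ``mimicking'' argument to the erasure-erasure setting. The key idea, as in the binary symmetric case, will be to exhibit a specific feasible strategy for James that both (i) forces Bob's decoder to succeed, and (ii) leaves James in a position to recover $\bX$ from $\bZ$, thereby elevating him to an omniscient adversary. Since an omniscient adversary is strictly stronger than the myopic one with the same erasure budget, this will sandwich the two capacities to be equal.

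Concretely, I would assume that rate $R$ is achievable over $CE(q,p)$ with $q<p$ by a deterministic code with decoder $\Dec$ and average error at most $\e$. I then consider the following adaptive James strategy: upon observing $\bZ$, let $E(\bZ)\subseteq[n]$ be the set of coordinates on which $\bZ$ is erased; if $|E(\bZ)|\le pn$, choose $\bS$ to be the indicator vector of $E(\bZ)$, otherwise set $\bS=\mathbf 0$. Since $q<p$, a Chernoff-type concentration bound on $BEC(q)$ guarantees that $|E(\bZ)|\le pn$ except with probability $\delta(n)\to 0$, so the fallback case is asymptotically negligible. In the non-fallback case, Bob's channel erases exactly the positions that are already erased in $\bZ$ and leaves the rest equal to $\bX$, so $\bY=\bZ$ as length-$n$ strings over $\{0,1,\perp\}$.

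Because the code must tolerate every feasible James strategy (in particular this one) with average error $\le\e$, the decoder must output $\Dec(\bY)=u$ with probability at least $1-\e-\delta(n)$ over the randomness of $\bZ$. Using $\bY=\bZ$, this says that $\Dec(\bZ)=u$ with the same probability; i.e., James, by simply running Bob's decoder on his own view $\bZ$, recovers the message $u$ (and hence $\bX=\Enc(u)$) with vanishing failure probability. Consequently, for any omniscient-adversary strategy $\bS^{*}(\bX)$ against this code, James can simulate it with only a vanishing additional error by first decoding $\bX$ from $\bZ$ and then applying $\bS^{*}$. This shows the capacity of $CE(q,p)$ is upper bounded by the capacity against an omniscient erasing adversary. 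The reverse direction is immediate, since an omniscient adversary subsumes a myopic one with the same state constraint $\cW$, so any code that handles an omniscient erasing adversary also handles James in $CE(q,p)$.

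The main technical subtlety (but not a serious obstacle) is the atypical event that $\bZ$ contains more than $pn$ erasures, which would make the ``copy-the-erasure-pattern'' strategy infeasible; this is handled by standard Hoeffding/Chernoff concentration together with the slack $p-q>0$. Everything else is essentially a bookkeeping adaptation of the argument used for Theorem~\ref{th:binaryomni}, with the crucial structural observation being that for erasure channels, choosing $\bS$ to match the erasure pattern of $\bZ$ makes Bob's and James's received vectors literally identical, so the same decoder works for both parties.
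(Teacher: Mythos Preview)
Your proposal is correct and follows the same approach the paper intends: the paper omits the proof and simply states it follows ``using similar arguments as those of Theorem~\ref{th:binaryomni}.'' Your adaptation is in fact slightly cleaner than a literal transcription of that argument --- by having James erase exactly the coordinates $E(\bZ)$ already erased in his own view, you obtain $\bY=\bZ$ identically, so Bob's decoder applied to $\bZ$ is \emph{exactly} the same computation as Bob's decoder applied to $\bY$, rather than merely being applied to an input with the same marginal distribution (as in the $C(q,p)$ case where James independently mimics a $BSC(q)$). This removes any need to argue about distributional equivalence and handles both the ``James can decode'' step and the passage to the omniscient adversary in one stroke; the Chernoff bound on $|E(\bZ)|>pn$ is the natural analogue of the paper's ``roughly mimic'' caveat.
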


\begin{theorem}
For $q>p$, the binary erasure-erasure myopic channel $CE(q,p)$ has secrecy 
capacity $q-p$.
\label{th:erasuresecrecy}
\end{theorem}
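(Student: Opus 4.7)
My plan is to mirror the argument given for Theorem~\ref{th:binarysecrecy}: a wiretap-style encoding combined with the reliable-communication guarantee of Theorem~\ref{th:erasurecapacity} yields achievability, and a reduction to a classical BEC--BEC wiretap channel yields the matching converse.

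For achievability at any rate $R < q-p$, I would fix a small $\delta>0$, take the $nR$ message bits $u$, append $n((1-p-\delta)-R)$ auxiliary uniformly random bits $v$ to form a string of length $n(1-p-\delta)$, and then encode $(u,v)$ using a random binary code of rate $1-p-\delta$ drawn from the ensemble that achieves capacity in Theorem~\ref{th:erasurecapacity}. Since $q>p$, that theorem guarantees that Bob reliably recovers both $u$ and $v$ with vanishing average error against every admissible state sequence $\bS(\bZ)$. The secrecy analysis then decouples from James's adversarial choice, because $\bZ$ is the output of the memoryless $BEC(q)$ applied to $\bX$ and is not influenced by $\bS$; thus the joint distribution of $(U,\bZ)$ is exactly that of a wiretap system whose transmitter uses a random code of rate $1-p-\delta$ and whose eavesdropper observes a blind $BEC(q)$. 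Since the rate of the appended randomness, $(1-p-\delta)-R$, exceeds the leakage rate $1-q$ whenever $R<q-p-\delta$, the standard random-binning / coset-coding wiretap argument gives $\frac{1}{n}I(U;\bZ)\to 0$.

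For the converse, I would force James into a canonical oblivious behavior: discard $\bZ$ and draw $\bS$ uniformly at random from vectors of Hamming weight exactly $pn$. Under this strategy Bob sees $\bX$ through an independent $BEC(p)$ while James independently sees it through the prescribed $BEC(q)$, so the induced system is precisely the classical memoryless BEC--BEC wiretap channel, whose secrecy capacity is known to be $q-p$. Any secrecy rate achievable against every admissible James strategy is in particular achievable against this one, so $R\le q-p$. I expect the main technical point to be verifying that the \emph{same} random-code ensemble used in Theorem~\ref{th:erasurecapacity} for reliability simultaneously delivers the secrecy bound; this should follow from the observation that $\bZ$ is a function of $\bX$ alone (not of $\bS$), which collapses the secrecy computation to the standard wiretap calculation and lets one assert both reliability and secrecy for the same (typical) codebook in the ensemble.
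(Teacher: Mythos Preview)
Your proposal is correct and follows essentially the same route as the paper: the paper explicitly states that Theorem~\ref{th:erasuresecrecy} is proved by the same argument as Theorem~\ref{th:binarysecrecy}, namely appending private randomness and encoding with the random code from Theorem~\ref{th:erasurecapacity}, with secrecy inherited from the standard BEC wiretap analysis (since $\bZ$ depends only on $\bX$, not on $\bS$) and the converse obtained by having James ignore $\bZ$ and erase $pn$ positions at random. Your observation that the reliability and secrecy events must be shown to hold simultaneously for the same codebook is addressed in the paper only implicitly (both hold with high probability over the random code ensemble); Remark~\ref{rem:erasure-erasure}(iv) spells out the secrecy counting argument in slightly more detail, but the structure is identical to what you outline.
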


\subsection{More general binary input channels 
}
\label{sec:eraseflip}

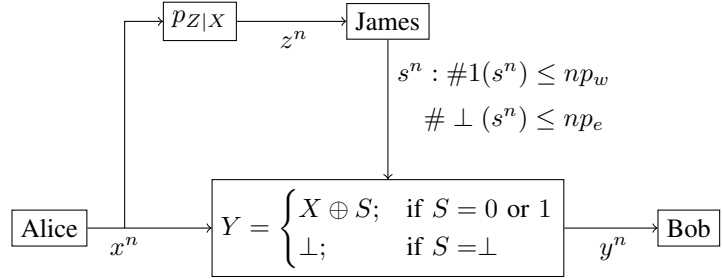
\begin{figure}[h]
\setlength{\unitlength}{1cm}
\centering
\begin{tikzpicture}[trim left, scale=1]

\node (alice) [draw, rectangle] at (0.5,3.25) {Alice};
\node (bob) [draw, rectangle] at (9.0,3.25) {Bob};
\node (bobch) [draw, rectangle] at (5, 3.25) {$Y=\begin{cases}
X\oplus S; & \text{if } S=0\text{ or }1\\
\perp; & \text{if }S=\perp\end{cases}$};
\node (calch) [draw, rectangle] at (2.5,6) {$p_{Z|X}$};
\node (James) [draw, rectangle] at (5,6) {James};

\draw [->] (alice) -- (1.5,3.25) node[below] (tx) {$x^n$} -- (bobch);
\draw [->] (tx) -- (1.5,6) -- (calch);
\draw [->] (calch) -- +(1.25,0) node[below] {$z^n$} -- (James);
\draw [->] (James) --
          +(0,-0.7) node[right] {$\bs: \#1(\bs) \leq np_w$} --
          +(0,-1.3) node[right] {$\quad \#\perp(\bs) \leq np_e$} -- (bobch);
\draw [->] (bobch) -- +(3.0,0) node[below] {$y^n$} -- (bob);

\end{tikzpicture}
\caption{The binary input channel with erasing and flipping adversary, $CEF(p_{Z|X}, p_e,p_w)$}
\label{fig:binaryinput}
\end{figure}

Similar to the binary erasure-erasure channel $CE(q,p)$, we may improve on Theorems~\ref{th:g_lower}, \ref{th:g_secrecy}
for other channels as well. A number of examples are given below.

\subsubsection{Erasing and flipping adversary, $CEF(p_{Z|X}, p_e,p_w)$}
Let us consider a binary input setup $CEF(p_{Z|X}, p_e,p_w)$ where
James's channel is given by $p_{Z|X}$, and James can erase upto a
fraction $p_e$ and flip upto a fraction $p_w$ of the transmitted bits
($p_e+p_w\leq 1$).  The corresponding random channel (binary symmetric error and
erasure channel) has a capacity
$(1-p_e)\left(1-H\left(\frac{p_w}{1-p_e}\right) \right)$.

\begin{theorem}
\label{th:cefcap}
For $CEF(p_{Z|X}, p_e,p_w)$, if 
$$
1-H(X|Z) < (1-p_e)\left(1-H\left(\frac{p_w}{1-p_e}\right) \right)
$$
then the capacity is 
$(1-p_e)\left(1-H\left(\frac{p_w}{1-p_e}\right) \right)$.
\end{theorem}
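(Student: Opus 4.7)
The plan is to prove matching upper and lower bounds of $(1-p_e)\bigl(1-H(p_w/(1-p_e))\bigr)$. The upper bound follows by a standard mimicking argument: James simply ignores his view $\bZ$, picks a uniformly random size-$np_e$ subset of coordinates to erase and, independently, a uniformly random size-$np_w$ subset of the remaining coordinates to flip. After the usual symmetrization, the resulting end-to-end law from $\bX$ to $\bY$ is (up to vanishing deviations) the memoryless binary erasure-and-flip channel with parameters $(p_e,p_w)$, whose Shannon capacity is exactly $(1-p_e)\bigl(1-H(p_w/(1-p_e))\bigr)$; the classical converse therefore upper-bounds the achievable rate by this quantity.

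For the lower bound I plan to instantiate Theorem~\ref{th:g_lower} with $p_X=\Ber(1/2)$, after first carrying out a refinement in the style of Remark~\ref{rem:erasure-erasure}. With $p_X$ uniform, condition~(a) requires $R < \min_{p_{S|Z}\in \cW_{S|Z}} I(X;Y)$; using the symmetry of $p_X$ and convexity of $I(X;Y)$ in $p_{S|Z}$ (with $p_{X,Z}$ held fixed), this minimum is attained by the product distribution that spends the full erasure/flip budget independently of $\bZ$, and equals exactly $(1-p_e)\bigl(1-H(p_w/(1-p_e))\bigr)$. Condition~(b) reduces to $1-H(X|Z) < (1-p_e)\bigl(1-H(p_w/(1-p_e))\bigr)$ because $I(X;Z)=1-H(X|Z)$ for a uniform $X$, which is precisely the theorem's hypothesis.

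The main obstacle is condition~(c) of Theorem~\ref{th:g_lower}, namely $\max H(X|Y,S)+\max H(Y|X)<H(X|Z)$. A direct evaluation picks up an extra $p_e + H(p_e)$ contribution from the erased coordinates: $H(X|Y,S=\perp)=H(X|S=\perp)\le 1$ contributes up to $p_e$ to $H(X|Y,S)$, and the ternary marginal of $S$ contributes $H(p_e) + (1-p_e)H(p_w/(1-p_e))$ to $H(Y|X)$. Summed, the naive bound reads $p_e+H(p_e)+(1-p_e)H(p_w/(1-p_e))<H(X|Z)$, which can fail even when the hypothesis holds. The remedy is exactly the one used for $CE(q,p)$: refine the myopic list-decoding count in Lemma~\ref{lem:genlist1} so that enumeration is carried out only on the non-erased coordinates, whose locations are visible to Bob (and therefore need not be counted against him). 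After this refinement condition~(c) collapses to $(1-p_e)H(p_w/(1-p_e)) < H(X|Z)$, which is a strict consequence of the hypothesis (rearranged, the hypothesis reads $p_e+(1-p_e)H(p_w/(1-p_e)) < H(X|Z)$).

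With the three refined conditions verified, the random coding argument from Section~\ref{app:gen} goes through verbatim and produces codes of every rate $R < (1-p_e)\bigl(1-H(p_w/(1-p_e))\bigr)$ with vanishing average error, matching the converse and completing the theorem. The sole non-routine step is the erasure-aware sharpening of Lemma~\ref{lem:genlist1}; everything else (type concentration for $\bX,\bZ,\bS,\bY$, the myopic list-decoding lemma, and the final averaging over the random code) is inherited directly from the proofs of Theorems~\ref{th:binarycapacity} and~\ref{th:g_lower}.
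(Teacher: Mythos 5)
Your overall strategy---mimicking converse, plus Theorem~\ref{th:g_lower} with the erasure-aware sharpening of Lemma~\ref{lem:genlist1} from Remark~\ref{rem:erasure-erasure}---is exactly the paper's route, and conditions (a) and (b) are handled correctly. However, there is a genuine error in your account of the refined condition (c). You claim the refinement ``carries the enumeration only over the non-erased coordinates, whose locations are visible to Bob,'' so that (c) collapses to $(1-p_e)H\left(\frac{p_w}{1-p_e}\right) < H(X|Z)$. That is not what the refinement gives, and the reasoning behind it is wrong: the quantity being bounded is $|(\bs)^{-1}(\cB_{Y|X}(\bx,\cW))|$, the number of candidate \emph{inputs} $\bx'$ for which $\bs(\bx')$ could be confused with a valid output from $\bx$. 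Bob knowing the erasure locations does not help here, because the values of $\bx'$ in the $p_e n$ erased coordinates are unconstrained---they never reach Bob---and hence contribute a free factor of $2^{p_e n}$. Already in the pure erasure case $CE(q,p)$ (Remark~\ref{rem:erasure-erasure}(i)) the bound is $p$ rather than $0$, and that entire $p$ comes precisely from the erased coordinates you propose to discard. The correct refined bound for $CEF$ is
\begin{align*}
(1/n)\log_2 |(\bs)^{-1}(\cB_{Y|X}(\bx, \cW))| \leq p_e + (1-p_e) H\left(\frac{p_w}{1-p_e}\right),
\end{align*}
so the refined (c) reads $p_e + (1-p_e) H\left(\frac{p_w}{1-p_e}\right) < H(X|Z)$, not your weaker inequality. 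Your proposal survives only by coincidence: you also correctly rearranged the theorem's hypothesis to read $p_e + (1-p_e) H\left(\frac{p_w}{1-p_e}\right) < H(X|Z)$, and this happens to be \emph{exactly} the correct refined condition, so condition (c) is met. But if you had carried the stated ``non-erased-only'' count forward into Lemma~\ref{lem:genlist1}, you would have underestimated the size of $V$ in the Chernoff/list-decoding step by a factor of $2^{p_e n}$, and the lemma would not close. You should state and use the bound with the $p_e$ term included, as in Remark~\ref{rem:erasure-erasure}(ii).
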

The proof
is similar to the general myopic results, and the key element in the
proof is outlined in Remark~\ref{rem:erasure-erasure}.

\subsubsection{Secrecy capacity for erasing and flipping adversary, $CEF(p_{Z|X}, p_e,p_w)$}
We now consider the secrecy capacity of $CEF(p_{Z|X}, p_e,p_w)$, i.e., 
when James's channel is $p_{Z|X}$, and the James can erase upto a
fraction $p_e$ and flip upto a fraction $p_w$ of the transmitted bits.
Using a randomly constructed code
to encode the message and private randomness we obtain (see Remark~\ref{rem:erasure-erasure} for details).
\begin{theorem}
\label{th:cefsec}
For the channel $CEF(p_{Z|X}, p_e,p_w)$, the secrecy rate 
$H(X|Z)+p_eH\left(\frac{p_w}{1-p_e}\right) - p_e-H\left(\frac{p_w}{1-p_e}\right)$
is achievable.
In particular, for the channel $CEF(BEC(q), p_e,p_w)$, the secrecy rate 
$q+p_eH\left(\frac{p_w}{1-p_e}\right) -
p_e-H\left(\frac{p_w}{1-p_e}\right)$ 
is achievable.
\end{theorem}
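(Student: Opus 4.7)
The plan is to adapt the wiretap-style random-binning scheme used in the proof of Theorem~\ref{th:binarysecrecy} to the more general channel $CEF(p_{Z|X}, p_e, p_w)$, combining the reliable-communication guarantee of Theorem~\ref{th:cefcap} with a standard equivocation estimate for secrecy. Concretely, I would draw a random codebook of size $2^{n R_{\text{rel}}}$ with i.i.d.\ $\text{Ber}(1/2)$ codewords, where $R_{\text{rel}}$ is set slightly below the reliable rate $R^{*} := (1-p_e)(1-H(p_w/(1-p_e)))$ guaranteed by Theorem~\ref{th:cefcap}, and partition the codebook into $2^{n R_{\text{sec}}}$ bins of size $2^{n R_{\text{rand}}}$ with $R_{\text{rand}}$ chosen slightly above $I(X;Z) = 1 - H(X|Z)$ (computed under $X \sim \text{Ber}(1/2)$). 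To transmit a message $u$, Alice samples a codeword uniformly at random from bin $u$ and transmits it; Bob decodes $X^{n}$ (and hence the bin label $u$) using the decoder of Theorem~\ref{th:cefcap}.

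Reliability against any feasible jamming action by James is immediate from Theorem~\ref{th:cefcap}, since both the codebook and the decoder are identical to those used there. For secrecy, the key structural observation is that James's view $Z^{n}$ is produced from $X^{n}$ by the memoryless channel $p_{Z|X}$ and is conditionally independent of his own state choice $S^{n}$ given $X^{n}$. Consequently the secrecy analysis decouples completely from James's adversarial action and reduces to the classical wiretap-channel equivocation estimate. Because each bin has size exceeding $2^{n I(X;Z)}$, the standard soft-covering argument (as in the broadcast wiretap analysis of \cite{csiszar_broadcast_1978}, and as used in the proof of Theorem~\ref{th:binarysecrecy}) yields $\tfrac{1}{n} I(U; Z^{n}) \to 0$, establishing the secrecy condition.

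The achievable secrecy rate is therefore $R_{\text{sec}} = R_{\text{rel}} - R_{\text{rand}} = R^{*} - (1 - H(X|Z))$, which on expansion equals the claimed $H(X|Z) + p_e H(p_w/(1-p_e)) - p_e - H(p_w/(1-p_e))$; the $BEC(q)$ specialization follows because $H(X|Z) = q$ when $X \sim \text{Ber}(1/2)$ passes through $BEC(q)$. The main technical obstacle --- establishing reliable decoding at rate $R^{*}$ against a myopic, state-constrained James --- is already absorbed into Theorem~\ref{th:cefcap}. For the scheme to give a positive secrecy rate one needs $R_{\text{rand}} < R_{\text{rel}}$, i.e.\ $1 - H(X|Z) < R^{*}$, which is exactly the hypothesis of Theorem~\ref{th:cefcap}. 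Beyond this, no additional interaction with $S^{n}$ arises in the secrecy proof, because $Z^{n}$ is independent of $S^{n}$ given $X^{n}$, so the wiretap equivocation bound goes through unchanged.
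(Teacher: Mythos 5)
Your proposal is correct and follows essentially the same route as the paper: the paper also obtains Theorem~\ref{th:cefsec} by taking the reliable rate $(1-p_e)\bigl(1-H\bigl(\frac{p_w}{1-p_e}\bigr)\bigr)$ from Theorem~\ref{th:cefcap} (via the refined counting in Remark~\ref{rem:erasure-erasure}) and subtracting $I(X;Z)=1-H(X|Z)$ via the standard stochastic wiretap encoding, i.e., padding the message with private randomness at rate just above $I(X;Z)$ and letting Bob decode the whole string. Your observations that the equivocation analysis decouples from James's state choice (since $Z^n$ depends only on $X^n$) and that positivity of the rate coincides with the myopicity hypothesis of Theorem~\ref{th:cefcap} match the paper's reasoning.
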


\subsubsection{Wiretap channel of type II with erasing and flipping adversary, $\wcefII(p_r,p_e,p_w)$}
We denote the wiretap channel of type II with active adversary who can
erase and flip bits as $\wcefII(p_r,p_e,p_w)$.
Here instead of James's channel being a random erasure channel, 
James can also choose a $p_r=1-q$ fraction of the transmitted
bits to view/read, and he can erase upto a fraction $p_e$ and
flip upto a fraction $p_w$ of the transmitted bits. 
This is a
generalization of the models studied in 
\cite{aggarwal2009wiretap, wang_on_capacity_2016} for Wiretap channel of type
II with active adversaries.

\begin{theorem}[Wiretap channel of type II with erasing and flipping adversary]
\label{th:wiretaptwo}
For $\wcefII(p_r,p_e,p_w)$, the rate
$1-p_r+p_eH\left(\frac{p_w}{1-p_e}\right) -
p_e-H\left(\frac{p_w}{1-p_e}\right)$ is achievable.
\end{theorem}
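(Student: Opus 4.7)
The plan is to establish Theorem~\ref{th:wiretaptwo} by adapting the random coding argument behind Theorem~\ref{th:cefsec} (corresponding to the case $p_{Z|X}=BEC(q)$ with $q=1-p_r$), and upgrading it to handle James's adversarial choice of which $p_r n$ bits to observe rather than a random subset. I would use the standard Wyner-style wiretap encoding: draw a random codebook $\code$ of size $2^{n(R+r)}$ with i.i.d.\ $\Ber(1/2)$ entries, and bin its codewords by message $\bmess \in \{0,1\}^{nR}$ into $2^{nR}$ bins, each containing $2^{nr}$ codewords indexed by a fresh local randomness $V \in \{0,1\}^{nr}$. Set the total rate so that $R+r = (1-p_e)\bigl(1-H\bigl(\frac{p_w}{1-p_e}\bigr)\bigr)-\delta$ and the binning rate so that $r = p_r + \delta$; the message rate $R$ then matches the rate claimed in the theorem up to $2\delta$.

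For reliability, for each subset $T\subseteq\{1,\ldots,n\}$ with $|T|=p_r n$, James's view is a deterministic ``puncture'' channel mapping $\bX$ to its projection $\bX|_T$. For any fixed $T$, the argument proving Theorem~\ref{th:cefcap} applies essentially verbatim: the list-decoding style analysis of Lemma~\ref{lem:genlist1}, specialized to this puncture channel (which reveals only $p_r n$ bits about $\bX$, comfortably below Bob's channel capacity $(1-p_e)\bigl(1-H\bigl(\frac{p_w}{1-p_e}\bigr)\bigr)$ by our choice of rates), shows that the code has vanishing average decoding error under the worst adversarial state sequence consistent with $\bX|_T$. The adversarial choice of $T$ is then handled by a union bound over the $\binom{n}{p_r n}\le 2^{nH(p_r)}$ subsets. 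I would engineer the slack $\delta>0$ and invoke expurgation / refined concentration arguments so that the per-$T$ decoding error is at most $2^{-nH(p_r)-\Omega(n)}$, keeping the union bound harmless and ensuring Bob can recover $(\bmess,V)$ with probability $1-o(1)$.

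For secrecy, fix any such $T$ and note $\bZ = \bX|_T \in \{0,1\}^{p_r n}$. The standard Wyner wiretap argument shows that, for a random $\Ber(1/2)$ codebook with binning rate $r > p_r$, with high probability over the code the collection $\{\code(\bmess, v)|_T : v \in \{0,1\}^{nr}\}$ is nearly uniformly distributed on $\{0,1\}^{p_r n}$ for every $\bmess$ simultaneously; a concentration/counting step exploiting $r - p_r = \delta > 0$ then gives $\tfrac1n I(\bmess;\bZ)=o(1)$. Union bounding over $T$ again, with enough slack so that the bad-code event has probability below $2^{-nH(p_r)}$, produces, with positive probability over the random codebook, a single deterministic code that is simultaneously reliable and secret against every $T$. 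The main technical obstacle is reconciling the reliability and secrecy requirements so that a single combined union bound over all of James's adversarial strategies (choice of $T$, and then choice of $\bs$ as a function of $\bZ$) still yields vanishing error and leakage; this is the essential strengthening of the approach sketched in Remark~\ref{rem:erasure-erasure} and is the key deviation from the random-erasure setting of Theorem~\ref{th:cefsec}.
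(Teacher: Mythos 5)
Your proposal is correct and follows essentially the same route as the paper: a $\Ber(1/2)$ random code at total rate just below $(1-p_e)\bigl(1-H\bigl(\frac{p_w}{1-p_e}\bigr)\bigr)$ with wiretap binning at randomness rate $p_r+\delta$, reliability via the general list-decoding machinery with the improved counting of $|(\bs)^{-1}(\cB_{Y|X}(\bx,\cW))|$ for erase-and-flip adversaries, and secrecy via concentration of the number of codewords per bin consistent with each view, union-bounded over James's $\binom{n}{p_rn}$ choices of observation set (the paper's Remark~\ref{rem:erasure-erasure}, parts (ii)--(v)). The only caveat is that the union bound over $T$ must be taken at the level of code-design failure events (which the paper's $2^{-\Omega(n^2)}$ and doubly-exponential concentration bounds absorb), so the "refined concentration" branch of your plan is the right one — expurgation would disturb the uniform binning structure needed for the secrecy count.
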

This result is of independent interest as it generalizes
results on the wiretap channel of type II with active adversary~\cite{aggarwal2009wiretap,wang_on_capacity_2016}. Our general proof technique together with
the argument in Remark~\ref{rem:erasure-erasure} implies this result.
As a special case, in the model $\wcefII(p,p,0)$, 
James can observe upto a $p$ fraction of bits of {\em
his choice} and he can also erase a $p$ fraction of bits. For a more
restricted James (who has to erase the same bits that he observes),
Aggarwal et al.~\cite{aggarwal2009wiretap} showed that rates upto $1-p -
H(p)$ can be achieved. Theorem~\ref{th:wiretaptwo} improves this to
$1-2p$.
As another special case, in the model $\wcefII(p_r,0,p_w)$,
Theorem~\ref{th:wiretaptwo} gives an achievable rate of 
$1-p_r-H(p_w)$, which is same as the achievable rate in
\cite{wang_on_capacity_2016}.

\section{Proof of Theorem~\ref{th:binarycapacity}}
\label{sec:bin}

The converse follows using the converse for \bsc (as discussed in Section~\ref{sec:results}). We first present
a sketch of the achievability in Subsection~\ref{sec:Int1}, and then
in Subsection~\ref{sec:binproof} we give a formal proof.
Our proof is summarized in Figure~\ref{fig:intuition}.

\subsection{Proof sketch for Theorem~\ref{th:binarycapacity}}
\label{sec:Int1}

We now sketch the proof for the achievability of rates arbitrarily close to $1-H(p)$ over $C(q,p)$ when $q>p$ (Theorem~\ref{th:binarycapacity}).
For a precise proof, and also for the precise definition of some of the
ideas, we refer the reader to the technical proof appearing in 
Subsection~\ref{sec:binproof}.

Our code construction uses a uniformly chosen codebook over $\{0,1\}^n$ and our decoder associates with each received word $\bY$ its closest codeword $\bX$. 
We now show that with high probability over code design, such codes are sufficient for communication at any rate $R=1-H(p)-\epsilon$ and average error $\epsilon$ where $\epsilon>0$ is arbitrarily small.
In what follows, many of the statements we make occur with high probability over code design (and not necessarily with probability 1) even though at times we do not state so explicitly.

Consider a codeword $\bX$ transmitted by Alice.
This codeword passes through the channel to James, and James receives the corrupted version $\bZ$ of $\bX$.
The fact that $q>p$ (or more precisely that $1-H(q)<1-H(p)$) now implies that there are approximately $2^{n(H(q)-H(p))}$ codewords that are {\em consistent} with James's view $\bZ$ (Lemma~\ref{lemma:shell}).
Namely, that there are an exponential number of codewords $X'^\bl$, that from James's perspective, may have been transmitted by Alice that would have resulted in the same $\bZ$.
These consistent codewords (depicted in Figure~\ref{fig:code_decomp}) are exactly those that lie in a ball around $\bZ$ of radius $\approx qn$.

\begin{figure}[t!]
\begin{center}
\includegraphics[scale=0.48]{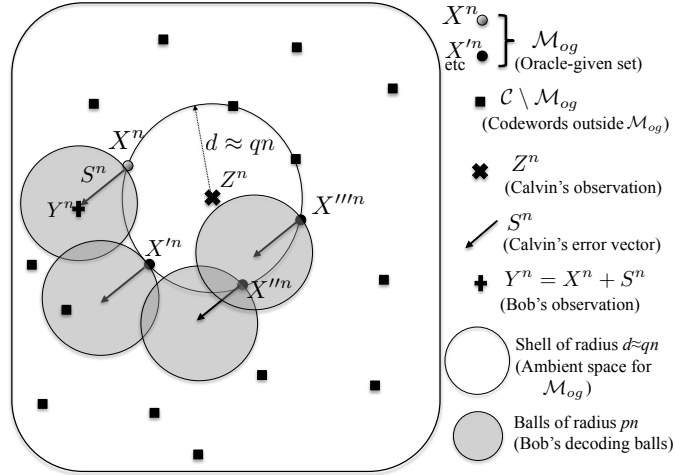}
\caption{\scriptsize{{\bf{Relationship between important notation:}} Alice
transmits $\bX$, James observes $\bZ$. The oracle informs James that Alice's
transmission is one of $\ogs = \{\bX, {X'}^\bl, {X''}^\bl, {X'''}^\bl$\}, each
of which is $d\approx q\bl$ away from $\bZ$. 
James imposes the error vector $\bS$, and Bob
receives $\bY = \bX + \bS$. Bob decodes to the hopefully unique codeword in a
ball of radius $p\bl$ around $\bY$. In this example, the ball around $X'^\bl +
\bS$ contains a codeword from $\code \setminus \ogs$, and the ball around
$X'''^\bl + \bS$ contains a codewords from $\ogs$, and therefore if James
imposes the error-vector $\bS$ an error may result in these cases. In our
proof, we show that for {\it any} $\bS$ the fraction of $\bX \in \ogs$ whose
decoding balls contain {\it any} other codewords is ``very small''.}}
\label{fig:code_decomp}
\vspace{-6mm}
\end{center}
\end{figure}

James doesn't know what $\bX$ is and he even doesn't know what $d=\|\bX-\bZ\|$
is exactly (other than the fact that $d \approx qn$), however in our analysis we
slightly help James by revealing $d$ and by choosing a small yet exponential
set of codewords (of size $2^{\bl \delta}$ for an appropriately small $\delta$)
from Alice's codebook $\code$ chosen from the set of all codewords at distance
$d$ from $\bZ$, with the additional guarantee that it also contains the true
$\bX$.  We refer to this set as the {\it oracle-given set} $\ogs$.  The
advantage of defining such a set is purely for ease of analysis using
a two-stage counting as elaborated later. 
Revealing this information to James only makes him stronger and
thus a coding scheme that succeeds here will also succeed against the original
James.
Conditioned on James's view $\bZ$ and $\ogs$, each of the codewords 
in $\ogs$ is equally likely to
have been transmitted. 
Nonetheless, James still has a reasonable amount of
uncertainty about which of the $2^{\bl \delta}$ codewords was actually
transmitted -- indeed, this is the uncertainty that we leverage in our
analysis.  

We now show that, no matter what action James takes, Bob will be able to correctly decode with high probability over the transmitted codeword $\bX$ of Alice.
First consider any codeword $\bX$ in the set $\ogs$ which is consistent with James's view $\bZ$.
A decoding error  occurs for $\bX$ if after James chooses the state vector $\bS$ it holds that $\bY=\bX +\bS$ is closer to a codeword $\hat{X}^\bl \in \code$ than it is to the transmitted word $\bX$.
In such a case we say that $\bX$ is {\em confusable} with $\hat{X}^\bl$.
In Figure~\ref{fig:code_decomp} this is expressed by the shaded region around $\bY=\bX +\bS$, if it is empty then decoding is successful and if it includes a $\hat{X}^\bl$ then a decoding error may occur.
Conditioned on James's view $\bZ$, we show that no matter what action James takes, for {\em most} codewords in $\ogs$ there will not be a decoding error.
We stress that, indeed, for every action $\bS$ there may be some codewords in $\ogs$ that will have a decoding error, but for the vast majority of them their corresponding shaded region in Figure~\ref{fig:code_decomp} will be empty. This will imply successful decoding with high probability.

We now fix any specific action $\bS$ for James and show that only a small fraction of codewords in $\ogs$ will be corrupted by $\bS$, or equivalently, only a small fraction of codewords $\bX$ will be confusable with some 
$\hat{X}^\bl \in \code$.
We consider two cases: the case that $\hat{X}^\bl \in \code \setminus \ogs$
(Lemma~\ref{lem:list1}) and the case in which $\hat{X}^\bl \in \ogs$
(Lemma~\ref{lem:list2}). 
Roughly speaking, using a careful analysis based on the Principle of Deferred Decisions \cite{alon2004probabilistic} in this case we may assume that the codewords in $\code \setminus \ogs$ are independent of those in $\ogs$. 
This allows us to bound the number of codewords in $\ogs$ that are confusable with codewords in $\code \setminus \ogs$ using certain list decoding arguments and in particular using a novel ``double-list-decoding argument''.

Recall that $\bX$ is confusable with $\hat{X}^\bl$ iff the latter lies in the ball of radius $p\bl$ around $\bX+\bS$.
Let $\Lambda$ be the union of all such balls for $\bX \in \ogs$.
Since $\code \setminus \ogs$ is independent of $\ogs$, it follows from standard list decoding arguments (and our setting of parameters) that the number of codewords $\hat{X}^\bl \in \code \setminus \ogs$ that lie in $\Lambda$ is {\em small}, specifically of size at most some polynomial in $\bl$. 
This is a first step towards our proof, which shows that the number of $\hat{X}^\bl$ that may confuse some $\bX\in\ogs$ is small.
However, each such $\hat{X}^\bl$ potentially may confuse many $\bX \in \ogs$.
To bound the number of $\bX$ that may be confused by a single $\hat{X}^\bl$, we use a second list-decoding argument.
As before, we use the independence between $\code \setminus \ogs$ and $\ogs$ to bound this number by a polynomial in $\bl$.
All in all, we conclude that at most a polynomial number of codewords $\bX$ in $\ogs$ are confusable with a codeword $\hat{X}^\bl$ in $\code \setminus \ogs$, which concludes the first case we considered.

For the second case, we bound the number of codewords $\bX \in \ogs$ that are confusable with $\hat{X}^\bl$ which are also in $\ogs$.
In this case, we may no longer na\"ively rely on the list decoding arguments used previously.
This is due to the fact that the previous arguments were strongly based on the independence between  codewords that were potentially being confused (denoted by $\bX$) and those that were confusing them (denoted by $\hat{X}^\bl$).
To overcome this difficulty, we partition the set $\ogs$ into disjoint sets and study the effect of one set in the partition on another.
Then by a combination of similar list decoding arguments and additional
counting arguments,
we can show that only an exponentially small fraction of codewords $\bX \in \ogs$ are confused by a codeword $\hat{X}^\bl \in \code$.
The claimed list decoding properties above hold with extremely high probability of $1-2^{-\beta\bl^2}$ on code design. This allows us to use the union bound on several assumptions made throughout the discussion (e.g., the values of $d$, $\bz$ and $\bs$).

\subsection{Achievability proof of Theorem~\ref{th:binarycapacity}}
\label{sec:binproof}
 \def\cSh{{\mathcal Sh}}
 \def\hone{{h_1(\etwo,\eone)}}
 \def\htwo{{h_2(\etwo,\eone)}}
 \def\hthree{{h_3(\etwo,\eone)}}
 \def\hfour{{(3\delta /4)}}
 \def\hfive{{h_1(\etwo)}}
 \def\mosh{{\mathcal M}_s}
 \def\ogm{{\mathcal M}_{og}}
 \def\omosh{{\mathcal M}_o}

\newcommand{\oogs}[2]{{\cC_0(#1,#2)}}
\newcommand{\aau}[2]{{\cC_1(#1,#2)}}
\newcommand{\uuni}[2]{{\cC_2(#1,#2)}}

\newcommand{\bOGS}{{\cC_0(\bZ, r)}}
\newcommand{\bogs}{{\cC_0(\bz,r)}}
\newcommand{\bAU}{{\cC_1(\bZ,r)}}
\newcommand{\bau}{{\cC_1(\bz,r)}}
\newcommand{\bUNI}{{\cC_2(\bZ,r)}}
\newcommand{\buni}{{\cC_2(\bz,r)}}
\newcommand{\Scal}{{\cSh(\bZ,D)}}

We now prove that any rate $R< 1-H(p)$ can be achieved. Without loss of
generality, we assume that $R=1-H(p)-\epsilon > 1- H(q)$.
We also assume, without loss of generality, that $\epsilon$ is sufficiently 
small.

{\it Code construction:} 
The code consists of $2^{nR}$ vectors
$\bX (w); w =1,2,\cdots,2^{nR}$, all selected independently with i.i.d. 
$\sim Bernoulli(1/2)$ components. 

{\it Encoding:} Alice encodes message $W$ with $\bX(W)$ and transmits.

{\it Decoding:} Let the vector received by Bob be $Y^n$. If Bob finds a
unique $\hat{W}$ such that $X^n(\hat{W})$ is within distance 
$(p+\etwo)n$ from $Y^n$, 
then he declares $\hat{W}$ as the decoded message. Otherwise he declares
error. Here $\etwo$ is a predetermined constant (that is set to be sufficiently small).

For any subset of messages $M\subseteq \{1,2,\cdots, 2^{nR}\}$,
we define its codewords as 
$$X^n(M) := \{X^n(w)| w\in M\}.$$

After observing $\bZ$, James can find all the codewords which
are jointly typical with it. With high probability, the transmitted
codeword belongs to that set. 
We define, for any $\bz, d$, the ball and the shell
\begin{align*}
\cB(\bz, d) : = \{\bx\in \{0,1\}^n: d_H (\bx, \bz)) \leq d))\},\\
\cSh(\bz, d) : = \{\bx\in \{0,1\}^n: d_H (\bx, \bz)) = d))\}.
\end{align*}
Let $D$ denote the random variable $d_H (\bX(W), \bZ)$.
Let $\eone$ be a sufficiently small constant.
We define the following events
\begin{align*}
\evcf (\bz,d) :=  &\left\{ 2^{n(H(d/n)+R-1+\eone)} \geq |\{w: X^n(w)\in \cSh(\bz,d)\}| \right.
\left. \geq 2^{n(H(d/n)+R-1-\eone)} \right \}\\
\evcf := & \cap_{\bz, d:|d-qn|\leq\etwo n} \evcf (\bz,d)\\
\evd :=  &\{(q-\etwo)n \leq D\leq (q+\etwo)n\},
\end{align*}
$\Scal$ is the spherical shell around James's received vector where 
the transmitted codeword lies. The code satisfies
$\evcf$ with high probability according to Lemma~\ref{lemma:shell} below. 
We assume that an oracle reveals to James some additional information, and
prove our achievability under this stronger adversary.
For every possible $\bz$, and $d$ satisfying 
$(q-\etwo)n \leq d \leq (q+\etwo)n$, the oracle partitions the set 
of messages 
$$ M_s(\bz, d) = \{w: X^n(w)\in \cSh(\bz,d)\} $$
with codewords on the shell $\cSh(\bz,d)$ into disjoint 
subsets $M_{og}^{(1)}(\bz, d), M_{og}^{(2)}(\bz, d), \cdots, M_{og}^{(\lambda(\bz, d))}(\bz, d)$
of size $2^{n\delta}$, except possibly the last subset with smaller size.
Here $\delta$ is chosen to be small enough to satisfy some requirements
to be mentioned later.
This partitioning is done deterministically by taking the messages in order
of their value, that is, satisfying $w<w'$ for each 
$w\in M_{og}^{(i)}(\bz, d), w' \in M_{og}^{(i+1)}(\bz, d)$.

{\it Additional information to James from the oracle:} The oracle reveals $D$ and the particular subset 
$\ogm = M_{og}^{(i)}(\bZ, D)$ of $M_s(\bZ, D)$ that contains the 
encoded message $W$. 

We denote the sets
\begin{align*}
&\mosh := M_s(\bZ, D)\\
&\omosh:= \mosh \setminus \ogm\\
\end{align*}
We make a few observations prior to our analysis.
\begin{enumerate}
\item [O.1] James's view (oracle aided) consists of $\bZ$ received over
his channel, and $\ogm, D$ received from the oracle.
\item [O.2] Given James's view, the encoded message $W$ is uniformly distributed
in $\ogm$.
\item [O.3] Over the random code construction, given the values of
$D, \bZ, \ogm, \mosh, \omosh$,
the codewords $\{\bX (w)|w\in \ogm\}$, $\{\bX (w)|w\in \omosh\}$, and
$\{\bX (w)|w\in \mosh^c\}$ are independently and uniformly distributed
in respectively $\Scal$, $\Scal$, and $\Scal^c$.
\end{enumerate}

First we give three standard lemmas showing that w.h.p., $D$ has
a typical value (Lemma~\ref{lemma:S0}), the spherical shell around $\bZ$ has a typical number
of codewords (Lemma~\ref{lemma:shell}), and that among the partitions of these codewords, the transmitted
message is not in the last one - with a smaller size than $2^{n\delta}$(Lemma~\ref{lem:E2}).

\begin{lemma}
\label{lemma:S0}
There exists $\hfive$ with $\hfive \rightarrow 0$ as $\etwo \rightarrow
0$ s.t. $Pr(\evd) \geq 1-2^{-n\hfive}$ for large enough $n$.
\end{lemma}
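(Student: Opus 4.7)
The plan is to observe that, conditioned on any realization of the random codebook and of the message $W$, the vector $\bZ$ is produced by passing the fixed word $\bX(W)$ through $\mbox{BSC}(q)$, so that the coordinate-wise flip indicators $E_i := \mathbf{1}\{Z_i \neq X_i(W)\}$ are i.i.d.\ $\mbox{Bernoulli}(q)$. Hence $D = \sum_{i=1}^n E_i$ is binomially distributed with parameters $(n,q)$, and crucially its law does not depend on the codebook, on $W$, or on any action of James. This uniformity is what will later let the event $\evd$ be pulled out of the joint union bound independently of the code-design events such as $\evcf$ and the partitioning regularity events used in $\lem:E2$.

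Next I would apply a standard concentration bound. Hoeffding's inequality yields
$$
Pr\bigl(|D - qn| > \etwo n\bigr) \;\leq\; 2\exp(-2\etwo^2 n),
$$
(one could equivalently use the sharper Chernoff--Sanov form with rate function $D_{\mathrm{KL}}(q \pm \etwo \,\|\, q)$, but the Hoeffding bound is more than enough). Setting $\hfive := 2\etwo^2/\ln 2 - o(1)$ --- where the $o(1)$ simply absorbs the leading constant $2$ for $n$ sufficiently large --- gives $Pr(\evd) \geq 1 - 2^{-n\hfive}$ for all large $n$. Since $\hfive \to 0$ as $\etwo \to 0$, the claimed dependence on $\etwo$ is immediate.

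No substantive obstacle is expected, as the statement reduces to a textbook concentration inequality for a sum of i.i.d.\ Bernoulli random variables drawn by the $\mbox{BSC}(q)$ noise. The only point worth flagging explicitly when writing this up is the uniformity noted in the first paragraph: the concentration holds pointwise for every codebook and every message, and therefore the bound on $Pr(\evd)$ can be freely combined with the code-design and partitioning high-probability events that appear later in the achievability argument.
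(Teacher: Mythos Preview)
Your proposal is correct and matches the paper's approach: the paper's proof is simply ``The proof follows from standard typicality arguments,'' and your Hoeffding/Chernoff computation for the $\mbox{Binomial}(n,q)$ variable $D$ is exactly such an argument fleshed out. Your explicit remark that the distribution of $D$ is the same for every realization of the codebook and message---so that the bound on $Pr(\evd)$ can be combined freely with the code-design events $\evcf,\evcs$---is a useful clarification that the paper leaves implicit.
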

\begin{proof}
The proof follows from standard typicality arguments.
\end{proof}

\begin{lemma}
\label{lemma:shell}
There exists $\htwo$ s.t. $Pr (\evcf ) \geq 1- 2^{-2^{n\htwo}}$, where
$\htwo \rightarrow H(q) - H(p)$ as $\eone, \etwo \rightarrow 0$.
\end{lemma}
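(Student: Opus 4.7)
The plan is to fix a specific pair $(\bz, d)$ with $|d-qn|\leq \epsilon_2 n$, show that $\evcf(\bz,d)$ holds except with doubly exponentially small probability by a Chernoff bound, and then take a union bound over the at most $2^n \cdot n$ choices of $(\bz, d)$. The doubly exponential decay will swamp the $2^n$ union bound factor because the expected shell population is itself (singly) exponentially large: since $q>p$ and $R = 1 - H(p) - \epsilon$, we have $H(d/n) + R - 1 \approx H(q) - H(p) - \epsilon > 0$ for all admissible $d$ and sufficiently small $\epsilon, \epsilon_2$.

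The first step is to identify $N_{\bz,d} := |\{w : X^n(w)\in \cSh(\bz,d)\}|$ as a binomial random variable. Because the codewords are i.i.d.\ uniform on $\{0,1\}^n$, each $X^n(w)$ lies in $\cSh(\bz,d)$ independently with probability $\mu := \binom{n}{d} 2^{-n}$, so $N_{\bz,d} \sim \mathrm{Binomial}(2^{nR}, \mu)$. Standard entropy estimates give $\binom{n}{d} = 2^{nH(d/n)}/\mathrm{poly}(n)$, hence
\[
\mE[N_{\bz,d}] = 2^{nR}\mu = \frac{2^{n(H(d/n)+R-1)}}{\mathrm{poly}(n)}.
\]
Since $\epsilon_1 > 0$ is a fixed constant, the factor $2^{\pm n\epsilon_1}$ in the definition of $\evcf(\bz,d)$ comfortably absorbs this polynomial slack for all $n$ large enough, so it suffices to show that $N_{\bz,d}$ deviates from its mean by more than a factor $2^{n\epsilon_1/2}$ with doubly exponentially small probability.

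Next I would apply multiplicative Chernoff bounds in both tails. The upper tail gives $\Pr[N_{\bz,d} \geq 2^{n\epsilon_1/2}\mE[N_{\bz,d}]]\leq \exp\bigl(-c_1\, n\epsilon_1\, 2^{n\epsilon_1/2}\mE[N_{\bz,d}]\bigr)$, and the lower tail gives $\Pr[N_{\bz,d}\leq 2^{-n\epsilon_1/2}\mE[N_{\bz,d}]] \leq \exp(-c_2\, \mE[N_{\bz,d}])$, for absolute constants $c_1, c_2 > 0$. Writing $E_d := H(d/n)+R-1$, both bounds are of the form $\exp\bigl(-\Omega(2^{nE_d})\bigr) \leq 2^{-2^{nE_d'}}$ for any $E_d' < E_d$ and $n$ large. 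Since $d$ ranges over $|d-qn|\leq \epsilon_2 n$, we have $E_d \geq H(q) - H(p) - \epsilon - g(\epsilon_1, \epsilon_2)$ where $g(\epsilon_1,\epsilon_2) \to 0$ as $\epsilon_1, \epsilon_2 \to 0$, using uniform continuity of $H(\cdot)$ on a neighborhood of $q$ (with $q \in (0,1/2)$).

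Finally I union-bound over the $2^n$ choices of $\bz$ and the at most $2\epsilon_2 n + 1$ choices of $d$: since each failure event has probability at most $2^{-2^{nE_d'}}$ and the total number of $(\bz,d)$ pairs is at most $n\cdot 2^n$, we obtain
\[
\Pr[\evcf^c] \leq n\cdot 2^n \cdot 2^{-2^{nE_d'}} \leq 2^{-2^{nh_2}}
\]
for any $h_2 < E_d'$ and $n$ large, which in particular can be taken to satisfy $h_2 \to H(q)-H(p)$ as $\epsilon_1, \epsilon_2 \to 0$, as required. The only real bookkeeping obstacle is ensuring that the inner exponent $E_d'$ in the Chernoff bound stays strictly positive after accounting for the slack $\epsilon$ in the rate, the $\epsilon_2$ slack in $d$, and the $\epsilon_1$ slack in the event definition; this is where the assumption $q > p$ (together with the freedom to pick $\epsilon, \epsilon_1, \epsilon_2$ small) is used.
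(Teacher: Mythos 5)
Your proposal is correct and follows essentially the same route as the paper's proof: fix $(\bz,d)$, estimate the expected shell population as $2^{n(H(d/n)+R-1)}$ up to polynomial factors, apply a Chernoff bound in both tails (yielding doubly exponentially small failure probability since $H(q)-H(p)-\epsilon>0$ when $q>p$), and union-bound over the $O(n2^n)$ pairs $(\bz,d)$. Your version simply spells out the binomial identification and the bookkeeping of the polynomial and $\epsilon_1$ slacks more explicitly than the paper does.
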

\begin{proof}
Note that there are at most $n+1$ possible values of $d$ satisfying the
condition in $\evcf$. Further, there are $2^n$ possible values of $\bz$.
Let us fix a pair $\bz, d$. For large enough $n$, clearly, 
$2^{n(H(d/n)+R-1+\eone /2)}\geq E|\{w: X^n(w)\in \cSh(\bz,d)\}| 
\geq 2^{n(H(d/n)+R-1- \eone /2)}$. Thus by Chernoff bound and by taking the union
bound over $\bz, d$, we have $Pr (\evcf^c) \leq 2^{-(1/6)2^{n(H(q-\etwo)+R-1- \eone)}}
\leq 2^{-2^{n\htwo}}$ for a suitable $\htwo$. 
\end{proof}

Given $\evcf, \evd$, the messages in $\mosh$ are partitioned into
exponentially many subsets. Only at most one of those subsets is of smaller
size than $2^{n\delta}$. Let us define the event
$$
\evo := \{\ogm \neq M_{og}^{(\lambda(\bZ,D))}(\bZ,D)\}
$$
that the oracle given set to James is not the last subset (which has 
possibly smaller size than $2^{n\delta}$) in the partition of $M_s(\bz,d)$.
Since given $\evcf, \evd$, $\mosh$, the encoded message
$W$ is uniformly distributed in $\mosh$, we have
\begin{lemma}
\label{lem:E2}
$$Pr(\evo|\evcf, \evd) > 1- 2^{-n\hthree}$$ where $\hthree = H(q-\etwo)+R-1-\eone-\delta$.
\end{lemma}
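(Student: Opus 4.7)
The plan is to combine observation~O.2 (which asserts that given $\bZ, D, \mosh$, the encoded message $W$ is uniformly distributed on $\mosh$) with the shell-size lower bound guaranteed by $\evcf$, noting that $\evo^c$ is precisely the event that $W$ lands in the possibly undersized final block of the deterministic partition of $\mosh$.

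First I would fix any realization $\bz, d$ consistent with $\evcf \cap \evd$ and recall that, by the deterministic ordering-based partition, every block $M_{og}^{(i)}(\bz, d)$ with $i < \lambda(\bz, d)$ contains exactly $2^{n\delta}$ messages, so the last block $M_{og}^{(\lambda(\bz,d))}(\bz, d)$ contains at most $2^{n\delta}$ messages. Combining this with O.2,
$$
Pr\!\left(W \in M_{og}^{(\lambda(\bz,d))}(\bz, d) \mid \bZ=\bz, D=d, \evcf, \evd\right) \;\leq\; \frac{2^{n\delta}}{|M_s(\bz,d)|}.
$$

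On $\evcf$ the shell-count bound gives $|M_s(\bz, d)| \geq 2^{n(H(d/n) + R - 1 - \eone)}$, and since $|d/n - q| \leq \etwo$ with $\etwo$ small enough (and $q \leq 1/2$), $H(d/n) \geq H(q - \etwo)$. Substituting yields the uniform pointwise bound $2^{-n(H(q - \etwo) + R - 1 - \eone - \delta)} = 2^{-n\hthree}$, which averaged over the posterior of $(\bZ, D)$ under $\evcf \cap \evd$ gives $Pr(\evo^c \mid \evcf, \evd) \leq 2^{-n\hthree}$, hence the claim.

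The only subtlety is verifying that conditioning on $\evcf \cap \evd$ preserves the uniform conditional law of $W$ on $\mosh$ asserted in O.2. This holds because both $\evcf$ and $\evd$ are invariant under permutations of messages whose codewords lie in the shell $\cSh(\bz, d)$: $\evcf$ depends only on shell cardinalities, and $\evd$ only on $D$, which is common to every message in $\mosh$ once $\bZ$ is fixed. Thus the symmetry argument underlying O.2 carries over without modification, and no further work is needed beyond the short calculation above.
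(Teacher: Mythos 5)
Your proof is correct and is essentially the argument the paper gives (the paper justifies the lemma in one line by the same uniformity-of-$W$-on-$\mosh$ observation combined with the shell-size lower bound from $\evcf$ and the worst-case value $H(q-\etwo)$). Your extra remarks on the permutation-invariance of the conditioning events and the at-most-$2^{n\delta}$ size of the final block simply make explicit what the paper leaves implicit.
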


Let us fix $\bz, \bs$ and $d$ satisfying $|d-qn|\leq n\etwo$.
Let us now define an event $\evcs (\bz,d,\bs)$ over the code construction that
every subset of size $2^{n\delta}$ in the ordered
partition of $M_s(\bz, d)$ has at most only $2^{n\hfour}$ 
messages which will undergo decoding error for the respective realizations
of $D, \bZ, \bS, W$. We also define
$$\evcs = \cap_{\bz,d,\bs} \evcs(\bz,d,\bs),$$
where the intersection is over all $d$ satisfying $|d-qn|\leq n\etwo$, all
feasible $\bs$, and all $\bz$.

We will show in the subsequent analysis that
the random code construction guarantees
\begin{align}
& Pr(\evcs^c \cup \evcf^c) \leq 2^{-cn^2} \label{eq:toshow}
\end{align}
for a positive constant $c$. Hence, with very high probability over
the code construction, the code satisfies the good event $\evcf\cap \evcs$.
We will study such codes satisfying $\evcf,\evcs$.
For such a code, the probability of error is bounded as
\begin{align}
Pr(Error|\evcf,\evcs) & \leq Pr(\evd^c|\evcf ,\evcs) + Pr(\evo^c|\evcf ,\evd,\evcs) \notag \\
& \hspace{5mm} + Pr(Error|\evcf,\evd,\evo,\evcs)\notag \\
& \leq 2^{-n\hfive}+2^{-n\hthree}+2^{-n\delta/4} \label{eq:Perror}
\end{align}

Here the first term follows from Lemma~\ref{lemma:S0} by noting that $\evd$ is
depends on the noise realization in James's channel, and so it does not
depend on the code events $\evcf,\evcs$. The second term follows from
Lemma~\ref{lem:E2} as the same result holds even when conditioned on
$\evcs$ (Conditioned on $\evcf ,\evd$, it is independent of $\evcs$.) 
This is because, $\evo$ depends on the oracle's random choice,
and it's probability does not change even if the code satisfies the
additional property $\evcs$. Finally, the third term follows from the
definition of $\evcs$.

Hence once \eqref{eq:toshow} is proved, it will imply that with high probability,
the randomly generated code will have exponentially small probability
of error as guaranteed by \eqref{eq:Perror}. This will complete the proof
of Theorem~\ref{th:binarycapacity}.

We now proceed to prove \eqref{eq:toshow}. Now,
\begin{align*}
& Pr(\evcf^c\cup \evcs^c) = Pr(\evcf^c) + Pr(\evcf \cap \evcs^c)
\end{align*}
The first term is small by Lemma~\ref{lemma:shell}.
Now, the second term is
\begin{align*}
 Pr(\evcf \cap \evcs^c)
& =  Pr(\cup_{\bz,d,\bs} (\evcf \cap \evcs^c(\bz,d,\bs)))\\
& \leq \sum_{\bz,d,\bs} Pr(\evcf \cap \evcs^c(\bz,d,\bs))\\
& \leq \sum_{\bz,d,\bs} Pr(\evcf(\bz,d) \cap \evcs^c(\bz,d,\bs))\\
& = \sum_{\bz,d,\bs} Pr(\evcf(\bz,d)) Pr(\evcs^c(\bz,d,\bs)|
    \evcf(\bz,d))\\
& \leq \sum_{\bz,d,\bs} Pr(\evcs^c(\bz,d,\bs)| \evcf(\bz,d))\\
& = \sum_{\bz,d,\bs} \sum_M Pr(M_s(\bz,d)=M|\evcf(\bz,d))
 Pr(\evcs^c(\bz,d,\bs)| \evcf(\bz,d),M_s(\bz,d)=M)\\
& = \sum_{\bz,d,\bs} \sum_M Pr(M_s(\bz,d)=M|\evcf(\bz,d))
Pr(\evcs^c(\bz,d,\bs)| M_s(\bz,d)=M)\\
\end{align*}
where all the above summations are over $d$ satisfying $|d-qn|\leq n\etwo$,
and over $M$ satisfying $2^{n(H(d/n)+R-1+\eone)}\geq |M|\geq 2^{n(H(d/n)+R-1-\eone)}$. In the last  line, we have used the fact that $M_s(\bz,d)=M$
implies $\evcf(\bz,d)$.
Thus to show \eqref{eq:toshow}, it is sufficient to show that for some $\beta>0$, for every such $M$, 
\begin{align}
Pr(\evcs^c(\bz,d,\bs)|M_s(\bz,d)=M) \leq e^{-\beta n^2}. \label{eq:e3}
\end{align}
We note that for a given $M$, 
the partitioning is in increasing order
of the message value, and is thus deterministic. There are only 
exponentially many subsets in its ordered partition as used by the oracle.

We now proceed to prove \eqref{eq:e3}. The messages in $M$ which
contribute to $\evcs^c(\bz,d,\bs)$ are classified into two
categories. Lemma~\ref{lem:list1} bounds the number of codewords which are decoded
wrongly due to confusion with another codeword {\em outside the same
partition} (revealed by the oracle). These codewords include those
in the shell (but in another partition) as well as those outside the
shell. Lemma~\ref{lem:list2} bounds the number of codewords which
are decoded wrongly due to confusion with another codeword {\em in the
same partition} that is revealed by the oracle.

First, we give a basic list-decoding result that will be used in
Lemmas~\ref{lem:list1} and \ref{lem:list2}.

\begin{lemma}
\label{lem:basic1}
Let $A$ be a set with $2^{\alpha n}$ elements for some $\alpha >0$,
$c$ be a sufficiently large constant, $\nu>0$, and let
$X_1,X_2,\cdots, X_{N}$ be chosen uniformly at random from $A$ where
$N=2^{nR}$. If $V\subset A$ with
$|V| \leq 2^{n(\alpha-R-\nu)}$, then $Pr\{|\{i:X_i\in V\}| > cn^2\}
\leq e^{-cn^2/6}$.
\end{lemma}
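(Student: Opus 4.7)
The plan is to reduce the claim to a straightforward Chernoff-type tail bound for a binomial random variable. Since $X_1,\dots,X_N$ are independent and uniform on $A$, each indicator $Y_i = \mathbb{1}[X_i \in V]$ is a Bernoulli random variable with parameter $\pee = |V|/|A| \le 2^{-n(R+\nu)}$. Hence $T := |\{i : X_i \in V\}| = \sum_i Y_i$ is $\mathrm{Binomial}(N,\pee)$ with mean $\mu = N\pee \le 2^{nR} \cdot 2^{-n(R+\nu)} = 2^{-n\nu}$. The key observation driving the argument is that the target threshold $cn^2$ is super-polynomially larger than $\mu$, so any off-the-shelf upper tail inequality will give the bound with significant room to spare.

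Concretely, I would apply the elementary union-bound form of the Chernoff inequality,
\[
Pr(T \ge k) \;\le\; \binom{N}{k}\pee^{\,k} \;\le\; \left(\frac{eN\pee}{k}\right)^{k},
\]
specialized at $k = cn^2$. Plugging in $N\pee \le 2^{-n\nu}$ gives
\[
Pr(T \ge cn^2) \;\le\; \left(\frac{e \cdot 2^{-n\nu}}{cn^2}\right)^{cn^2},
\]
and for $n$ sufficiently large (depending on $c$ and $\nu$) the base is well below $e^{-1/6}$, so the right-hand side is bounded by $e^{-cn^2/6}$, as required. In fact one obtains a much stronger doubly-exponential decay; the lemma is stated in this weaker form because all we will need downstream is a bound that survives the subsequent union bounds over the exponentially many choices of $\bz$, $d$, $\bs$ and over all partition subsets $M_{og}^{(i)}(\bz,d)$.

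There is no real obstacle here: the claim is a routine concentration fact, and the only mild care is in specifying that $n$ is taken large enough (in terms of $c$ and $\nu$) so that $eN\pee/(cn^2) \le e^{-1/6}$. The substance of the lemma lies in how it is used — as the workhorse list-decoding estimate feeding into Lemmas \ref{lem:list1} and \ref{lem:list2} — rather than in its own brief proof.
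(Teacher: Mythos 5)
Your proof is correct, and it is essentially the argument the paper invokes: the paper simply cites \cite[Lemma A.3]{Langberg:08}, which is proved by exactly this kind of Chernoff/union-bound computation on the binomial count $|\{i : X_i \in V\}|$ with mean at most $2^{-n\nu}$. The only caveat worth keeping explicit (as you do) is that the bound holds for $n$ sufficiently large relative to $c$ and $\nu$.
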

The proof follows using the same argument as \cite[Lemma A.3]{Langberg:08}.

\begin{corollary}
\label{cor:list}
With probability at least $1-2^{-\beta n^2}$, the code satisfies the property
that in every Hamming sphere of radius $(p+\etwo )n$, there are at most
$cn^2$ codewords.
\end{corollary}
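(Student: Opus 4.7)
The plan is to derive Corollary~\ref{cor:list} as a direct application of Lemma~\ref{lem:basic1} combined with a union bound over all possible sphere centers. I would instantiate Lemma~\ref{lem:basic1} with $A = \{0,1\}^n$ (so $\alpha = 1$), with $X_1,\dots,X_N$ being the $N = 2^{nR}$ randomly chosen codewords of the construction, and with $V$ taken to be the Hamming ball $\cB(\bv,(p+\etwo)n)$ of radius $(p+\etwo)n$ around an arbitrary center $\bv \in \{0,1\}^n$.

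The first step is to verify the size hypothesis on $V$. By a standard volume bound, $|\cB(\bv,(p+\etwo)n)| \leq 2^{n H(p+\etwo)}$ for large $n$. Since $R = 1 - H(p) - \epsilon$, continuity of the binary entropy function gives
\begin{equation*}
H(p+\etwo) = 1 - R - \epsilon + (H(p+\etwo) - H(p)) \leq 1 - R - \nu
\end{equation*}
for some fixed $\nu > 0$, provided $\etwo$ is taken sufficiently small (which we already assumed when choosing $\etwo$). Thus Lemma~\ref{lem:basic1} applies and yields that for this fixed center $\bv$, the probability that more than $cn^2$ codewords fall into $\cB(\bv,(p+\etwo)n)$ is at most $e^{-cn^2/6}$.

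The second step is the union bound over centers. There are only $2^n$ possible centers $\bv \in \{0,1\}^n$, so the probability that \emph{some} Hamming sphere of radius $(p+\etwo)n$ contains more than $cn^2$ codewords is at most $2^n \cdot e^{-cn^2/6}$. For $c$ sufficiently large (as permitted by the hypothesis of Lemma~\ref{lem:basic1}), this quantity is bounded by $2^{-\beta n^2}$ for some positive constant $\beta$, which is exactly the claim.

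The main (and essentially only) technical step is verifying the size condition on $V$, which reduces to checking that the rate $R = 1 - H(p) - \epsilon$ leaves enough slack below $1 - H(p+\etwo)$ for a positive $\nu$ to exist; this is where the freedom to make $\etwo$ small is used. Everything else is a straightforward invocation of Lemma~\ref{lem:basic1} together with a union bound that is only polynomially weak (a factor of $2^n$) compared to the doubly-exponential slack $e^{-cn^2/6}$ furnished by the lemma.
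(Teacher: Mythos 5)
Your proof is correct and is exactly the argument the paper intends: the corollary is stated as an immediate consequence of Lemma~\ref{lem:basic1} applied with $A=\{0,1\}^n$ and $V$ a Hamming ball of radius $(p+\etwo)n$, with the size condition following from $R=1-H(p)-\epsilon$ and $\etwo$ small, and a union bound over the $2^n$ centers absorbed by the $e^{-cn^2/6}$ slack. Nothing to add.
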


\begin{lemma}
\label{lem:list1}
There exists $\beta >0$ such that, for every $\bz$, $d$ satisfying 
$|d-nq|\leq n\etwo$, error $\bs $ introduced by James, for every subset 
$M$ of messages with $2^{n(H(d/n)+R-1+\eone)}\geq |M|\geq 2^{n(H(d/n)+R-1-\eone)}$,
conditioned on $M_s(\bz,d)=M$, with probability at least $1-e^{-\beta n^2}$ 
over the code, for every $i< \lambda(\bz, d)$, there are at most $c^2 n^4$ 
codewords $\bX (w)$ in $M_{og}^{(i)}(\bz,d)$ for which there is a different codeword 
$\bX \in M_s^c(\bz,d)\cup (M_s(\bz,d)\setminus M_{og}^{(i)}(\bz,d))$ with 
\begin{align}
d_H(\bX (w) + \bs, \bX) \leq (p+\etwo)n.
\label{eqn:list1}
\end{align}
\end{lemma}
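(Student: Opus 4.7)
The plan is to execute the double list-decoding argument outlined in the proof sketch. Fix $\bz$, $d$ with $|d-qn|\leq n\etwo$, a feasible $\bs$, and an admissible subset $M$; condition on $M_s(\bz,d)=M$ and consider a single partition block $M_{og}:=M_{og}^{(i)}(\bz,d)$, which has size exactly $2^{n\delta}$ because $i<\lambda(\bz,d)$. By observation O.3, the three sub-collections $\{\bX(w):w\in M_{og}\}$, $\{\bX(w):w\in M\setminus M_{og}\}$, and $\{\bX(w):w\in M^c\}$ are mutually independent and uniform respectively on $\Scal$, $\Scal$, and $\Scal^c$; this independence is the structural fact that drives the whole argument.

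\textbf{Outer list: bounding the number of confusers.} Defer the randomness of codewords outside $M_{og}$ and, from the revealed $M_{og}$, form the threat region
$$\Lambda := \bigcup_{w\in M_{og}} \cB(\bX(w)+\bs,\,(p+\etwo)n),$$
which has cardinality at most $|M_{og}|\cdot|\cB(0,(p+\etwo)n)| \leq 2^{n(\delta+H(p+\etwo))+o(n)}$. Apply Lemma~\ref{lem:basic1} twice --- once to the $M^c$ family (uniform on $\Scal^c$ of size $\approx 2^n$ with $N\approx 2^{nR}$ codewords) and once to the $M\setminus M_{og}$ family (uniform on $\Scal$ with $N\leq 2^{n(H(d/n)+R-1+\eone)}$ codewords) --- in each case taking the target set $V$ to be $\Lambda$ intersected with the corresponding ground set. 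Because $R=1-H(p)-\epsilon$, both hypotheses $|V|\leq 2^{n(\alpha-R-\nu)}$ reduce to $\delta+H(p+\etwo)<H(p)+\epsilon-\nu-O(\eone)$, which holds for $\delta,\etwo,\eone$ chosen small relative to $\epsilon$. Each invocation produces at most $cn^2$ confusers lying in $\Lambda$, with failure probability $\leq e^{-cn^2/6}$.

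\textbf{Inner list: bounding the number of victims per confuser.} For every confuser $\bX^\star$ identified above, the codewords $w\in M_{og}$ satisfying~\eqref{eqn:list1} with $\bX=\bX^\star$ are exactly those with $\bX(w)\in\cB(\bX^\star+\bs,(p+\etwo)n)$. By Corollary~\ref{cor:list}, each such Hamming ball contains at most $cn^2$ codewords of the entire codebook, and hence at most $cn^2$ codewords of $M_{og}$. Multiplying the outer and inner list bounds (absorbing the factor $2$ from the two outer applications into $c$) gives at most $c^2 n^4$ bad codewords in $M_{og}$. A union bound over the two outer list events and the list decoding event of Corollary~\ref{cor:list} gives failure probability $e^{-\beta n^2}$ for some $\beta>0$, which is small enough to survive the subsequent union bound over $(\bz,d,\bs,M,i)$ used in~\eqref{eq:toshow}.

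\textbf{Main obstacle.} The crux of the argument is securing the independence between $\Lambda$ (a function of the $M_{og}$ codewords) and the potential confusers (drawn from $M\setminus M_{og}$ and $M^c$); this is precisely why we condition on $M_s(\bz,d)=M$ and appeal to O.3, rather than working with the unconditioned code. A secondary balancing issue is choosing $\delta$, $\eone$, and $\etwo$ small enough to meet the list-decoding hypothesis $|\Lambda|\leq 2^{n(\alpha-R-\nu)}$ while keeping $\delta$ strictly positive, so that the oracle-given blocks remain exponentially large (as needed in Lemma~\ref{lem:list2}).
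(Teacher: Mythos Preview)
Your proposal is correct and follows essentially the same two-step double list-decoding argument as the paper: first bound the number of confusers outside $M_{og}^{(i)}$ landing in $\Lambda$ via Lemma~\ref{lem:basic1} (separately for $M_s^c$ and for $M_s\setminus M_{og}^{(i)}$), then bound the number of victims per confuser via the list-decoding property of Corollary~\ref{cor:list}, and multiply. The paper's own proof is organized identically, with the same volume estimates and the same final union bound over $i<\lambda(\bz,d)$.
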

\begin{proof}
We first prove the statement for codewords in $M_s^c(\bz,d)$. 
Note that the codewords in each of $M_{og}^{(i)}(\bz,d), 
(M_s(\bz,d)\setminus M_{og}^{(i)}(\bz,d))$, and $M_s^c(\bz,d)$ are 
uniformly distributed in the respective spaces. We have two key steps using 
Lemma~\ref{lem:basic1}. 

1. For every realization of $X^n(M_{og}^{(i)}(\bz,d))$, by considering 
$V=\cup_{w\in M_{og}^{(i)}(\bz,d)} \cB (\bX(w)+\bs, (p+\etwo)n)$ in Lemma~\ref{lem:basic1},
with probability at least $1-e^{-\beta n^2}$, there are at most $cn^2$
messages in $M_s^c(\bz,d)$ with codewords $X^n$ satisfying \eqref{eqn:list1} for some $w\in M_{og}^{(i)}(\bz,d)$. 
So the same statement is also true over the random choice of
$X^n(M_{og}^{(i)}(\bz,d))$.

2. Now, by Corollary~\ref{cor:list}, with probability at least
$1-e^{-\beta n^2}$, for every $\bx$, there are at most $cn^2$ codewords
$X^n(M_{og}^{(i)}(\bz,d))\cap \cB (\bx +\bs, (p+\etwo)n)$.
This ensures that there are at most $cn^2$ codewords from $X^n(M_{og}^{(i)}(\bz,d))$ 
for which $\bx \in \cB(\bX(w) +\bs, (p+\etwo)n)$.

So, for every $\bz, r$, and $\bs$, with high probability over the code,
there are at most $c^2 n^4$ codewords from $\bX (M_{og}^{(i)}(\bz,d))$
which satisfy the condition in the lemma.
Finally, taking the union bound over all $i<\lambda (\bz,d)$, 
we have the result for $M_s^c(\bz,d)$.

The same proof steps also work for 
$M_s(\bz,d)\setminus M_{og}^{(i)}(\bz,d)$.  We take
$V=\left(\cup_{w\in M_{og}^{(i)}(\bz,d)} \cB (\bX(w)+\bs, (p+\etwo)n)\right)
\cap \cSh (\bz, d)$ in the first step. We note that $|V| \leq
2^{n(H(p+\etwo)+\eone/2+\delta)}$ (for large enough $n$), $|\cSh (\bz, d)|
\geq 2^{n(H(q-\etwo)-\eone/2)}$, and the number of messages in
$M_s(\bz,d)\setminus M_{og}^{(i)}(\bz,d)$ is at most
$2^{n(H(q+\etwo)+R-1+\eone)}$. So, the expected number of these codewords
in $V$ is at most $2^{n(H(p+\etwo)+H(q+\etwo)-H(q-\etwo)+R-1+2\eone +\delta)}$.
The exponent is $<0$ for small enough $\eone, \etwo, \delta$. Thus 
the first step follows using the same arguments. The second step is the same
as before.
\end{proof}

In the following, we consider the codewords in $M_{og}^{(k)}(\bz,d)$ 
arranged in a square and indexed by 
a set $A\times A$ with $|A|= 2^{n\delta/2}$,
before being randomly associated to the messages. With abuse of notation,
for $i,j\in A$, we will denote the $(i,j)$-th codeword in this arrangement as
$\bX(i,j)$ (note that this omits the global association of the codewords
to the actual messages; to avoid this, we may further index these
with $(\bz,d,k)$.) 

\begin{lemma}
\label{lem:list2}
There exists $\beta >0$ such that, for every $\bz$, $d$ satisfying
$|d-nq|\leq n\etwo$, error $\bs $ introduced by James, for every subset
$M$ of messages with $2^{n(H(d/n)+R-1+\eone)}\geq |M|\geq 2^{n(H(d/n)+R-1-\eone)}$,
conditioned on $M_s(\bz,d)=M$, with probability at least $1-e^{-\beta n^2}$
over the code, for every $k< \lambda(\bz, d)$, (i) for every $i$, there are at most $c^2 n^4$
codewords in the $i$-th row $\{\bX (i,j)|j\in A\}$ for which
$\{\bX (i',j')|i' \neq i,j'\in A\}\cap \cB (\bX (i,j)+\bs, (p+\etwo)n)$ is non-empty, (ii)
there are at most $2^{n\hfour}$ messages $w\in
M_{og}^{(k)}(\bz,d)$ for which $\{\bX (w')|w'\in M_{og}^{(k)}(\bz,d),
w'\neq w\}\cap \cB (\bX (w)+\bs, (p+\etwo)n)$ is non-empty.
\end{lemma}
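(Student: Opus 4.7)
The plan is to mimic the two-step list-decoding argument of Lemma~\ref{lem:list1} for part (i), and then leverage the symmetry of the $|A|\times|A|$ grid to bootstrap to part (ii).

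For part (i), I would fix $\bz$, $d$, $\bs$, a partition index $k$, and a row $i \in A$. Conditioned on $M_s(\bz,d)=M$ and on the (deterministic) partitioning, the codewords $\{\bX(i',j') : i'\neq i,\ j'\in A\}$ — of which there are $(|A|-1)|A|\leq 2^{n\delta}$ — are i.i.d.\ uniform on $\cSh(\bz,d)$ and, crucially, independent of the row-$i$ codewords. Setting
\[
V \ :=\ \bigcup_{j\in A}\ \cB\bigl(\bX(i,j)+\bs,\,(p+\etwo)n\bigr)\cap \cSh(\bz,d),
\]
a volume bound yields $|V|\leq |A|\cdot 2^{nH(p+\etwo)} = 2^{n(\delta/2+H(p+\etwo))}$, while event $\evcf$ ensures $|\cSh(\bz,d)|\geq 2^{n(H(q-\etwo)-\eone/2)}$. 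Since $q>p$ (and we may take $p<1/2$), for sufficiently small $\delta,\etwo,\eone$ the strict inequality $\delta/2 + H(p+\etwo) < H(q-\etwo) - \eone/2 - \delta - \nu$ holds with positive slack $\nu$, so Lemma~\ref{lem:basic1} (applied with universe $\cSh(\bz,d)$, $N=|A|^2-|A|$ samples, and target $V$) gives that at most $cn^2$ codewords from rows other than $i$ lie in $V$, except with probability at most $e^{-cn^2/6}$. The second step uses Corollary~\ref{cor:list}: each such offending codeword $\hat{X}^\bl\in V$ satisfies $\hat{X}^\bl\in\cB(\bX(i,j)+\bs,(p+\etwo)n)$ — equivalently, $\bX(i,j)\in\cB(\hat{X}^\bl+\bs,(p+\etwo)n)$ — for at most $cn^2$ row-$i$ codewords. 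Multiplying the two list-size bounds gives at most $c^2 n^4$ bad row-$i$ codewords; a union bound over the $|A|$ rows and the $\lambda(\bz,d)$ partition indices is absorbed by the $e^{-cn^2/6}$ tail.

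For part (ii), I would exploit the exchangeability of the $|A|\times|A|$ grid. Since the codewords in $M_{og}^{(k)}(\bz,d)$ are i.i.d.\ uniform on $\cSh(\bz,d)$ conditionally on the partition, the proof of part (i) applies verbatim with the roles of rows and columns reversed, yielding that for every column $j\in A$ at most $c^2 n^4$ codewords in column $j$ have a confusable partner in a different column. Any $(i,j)\in M_{og}^{(k)}$ with a same-partition confusable partner $(i',j')\neq(i,j)$ must satisfy either $i'\neq i$ (accounted for by the row-wise bound from part (i)) or $j'\neq j$ (accounted for by the column-wise analogue). Summing over the $|A|$ rows and $|A|$ columns bounds the total number of bad codewords in $M_{og}^{(k)}(\bz,d)$ by $2|A|\cdot c^2 n^4 = 2^{n\delta/2+1}\cdot c^2 n^4$, which is at most $2^{n(3\delta/4)}=2^{n\hfour}$ for all sufficiently large $n$.

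The main obstacle is ensuring the parameter inequality $3\delta/2 + H(p+\etwo) + \eone/2 < H(q-\etwo)$ needed to invoke Lemma~\ref{lem:basic1}: this forces $\delta$ small (relative to the gap $H(q)-H(p)$), while part (ii) simultaneously wants $\delta>0$ so that $|A|$ is exponential in $n$. Both constraints are compatible precisely because the hypothesis $q>p$ provides the slack $H(q)-H(p)>0$. Once this is secured, the Chernoff-type tail $e^{-cn^2/6}$ from Lemma~\ref{lem:basic1} easily absorbs the remaining union bounds — over the exponentially many partition indices $k$, the $|A|$ rows, and the $|A|$ columns — to deliver the claimed $1-e^{-\beta n^2}$ high-probability statement after suitably shrinking $\beta$.
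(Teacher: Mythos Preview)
Your proposal is correct and follows essentially the same approach as the paper. Part (i) is the two-step list-decoding argument of Lemma~\ref{lem:list1} applied within the shell, and part (ii) is exactly the paper's grid argument: a codeword with a confusable partner in $M_{og}^{(k)}$ must have one in a different row or a different column, so summing the per-row and per-column bounds from part (i) over $|A|$ rows and $|A|$ columns yields $2c^2n^4|A|\leq 2^{n(3\delta/4)}$.
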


\begin{proof}
We fix a $k$ and prove both parts; a union bound over all $k$ at the end proves
the lemma.
{\em Part (i):}
Recall that the codewords in $\bX (M_{og}^{(k)}(\bz,d))$ are drawn
independently and uniformly from $\cSh (\bz,d)$. The proof of this part 
then follows using a two-step argument similar to the proof of 
Lemma~\ref{lem:list1}.

Part (i) implies that with high probability,
there are at most $c^2n^4$ codewords in each row of $M_{og}^{(k)}(\bz,d)$ which will
be confused, under the error vector $\bs$, with some codeword in another
row. The same statement also holds for columns by the same arguments.

{\em Part (ii):}
By part (i), with high probability over the code, there
are at most $c^2n^4$ codewords in any row (and column) which are confusable,
under the error vector $\bs$, with a codeword in another row.
Let us now define a directed graph with vertices $A^2$, and there is an edge
from $\bX (i,j)$ to $\bX (i',j')$ if $d_H(\bX (i,j)+\bs, \bX (i',j'))
\leq n(p+\etwo)$, that is, if the codeword $\bX (i,j)$ is confusable under
the error vector $\bs$
with $\bX (i',j')$. We define the non-horizontal out-degree of a node
as the number of edges coming out of that node to another node in a different
row. The non-vertical out-degree is similarly defined. Clearly the out-degree
of a node is at most the sum of the non-horizontal and non-vertical out-degree.
Under the high probability event of part (i),
each row of vertices in the graph has at most $c^2n^4$ vertices with non-zero
non-horizontal out-degree. So in the graph, there are at most $c^2n^4|A|$ nodes
with non-zero non-horizontal out-degree. Similarly there are at most $c^2n^4|A|$ nodes
with non-zero non-horizontal out-degree. So there are at most $2c^2n^4|A|
\leq 2^{n\hfour}$ 
nodes with non-zero out-degree.
\end{proof}

Lemma~\ref{lem:list1} and \ref{lem:list2} prove \eqref{eq:e3} for some $\beta>0$, 
for all $\bz,d,s,M$ satisfying $|d-qn|\leq \etwo n$ and
$2^{n(H(d/n)+R-1+\eone)}\geq |M|\geq 2^{n(H(d/n)+R-1-\eone)}$.
This in turn proves \eqref{eq:toshow}. Thus with high probability, 
the code satisfies $\evcf,\evcs$. Such a code then achieves exponentially
small probability of error by \eqref{eq:Perror}. This completes the achievability proof of Theorem~\ref{th:binarycapacity}.

\begin{figure*}[t]
\begin{center}
\includegraphics[scale=0.74]{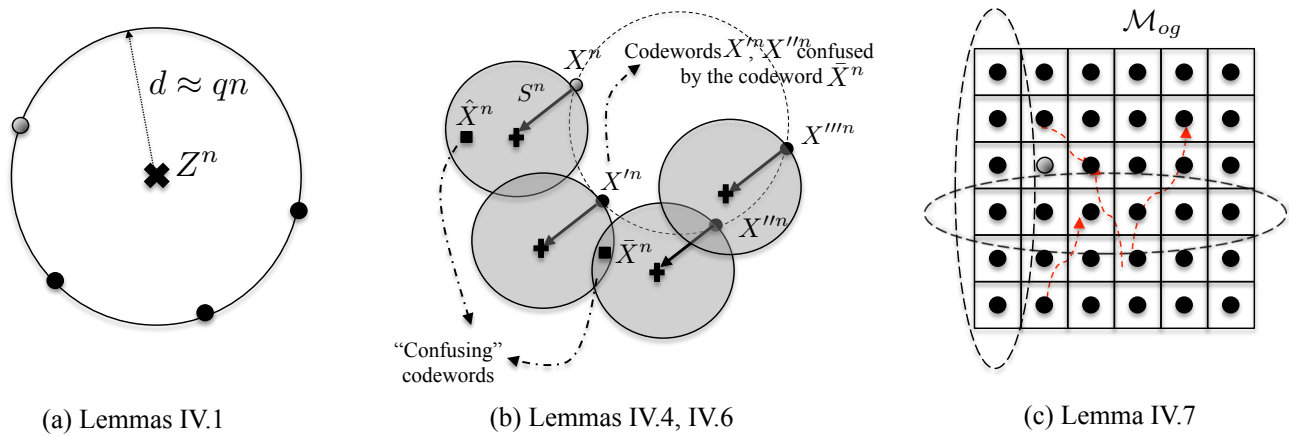}
\caption{\scriptsize{ {\bf{Intuition about proof techniques:}} {\it \underline{Fig (a)}:} Lemma~\ref{lemma:S0} uses ``standard'' concentration inequalities to argue that the value of $d$ (the amount of noise James sees) is ``close'' to $q\bl$. Lemma~\ref{lemma:shell} argues that for {\it every} shell with such $d$, the number of codewords on a shell of radius $d$ centred at $\bZ$ is close to $2^{n(H(q)-H(p))}$ -- from James's perspective, each of these codewords is equally likely to have been transmitted. {\it \underline{Fig (b)}:} Lemmas~\ref{lem:basic1} and~\ref{lem:list1} are ``list-decoding'' lemmas. Lemma~\ref{lem:basic1} argues that regardless of the shape of the volume being considered, as long as it is smaller than the ``average volume per codeword'', for a super-exponentially large fraction of codes the number of codewords in the volume is not large (at most ${\cal O}(\bl^2)$). Lemma~\ref{lem:list1} then uses this result to show that there are not too many codewords in $\ogs$ ``confused by $\bS$ with codewords in $\code\setminus \ogs$'' (at most ${\cal O}(\bl^4)$). It does this in two steps -- it first uses Lemma~\ref{lem:basic1} to show that there are not too many ``confuse{\it-ing}'' codewords from $\code\setminus \ogs$ (\eg~$\hat{X}^\bl$ and $\bar{X}^n$ in the figure), and then it re-uses Lemma~\ref{lem:basic1} to show that each ``confuse{\it-ing}'' codeword does not lead to too many ``confus{\it-ed}'' codewords ($\bar{X}^\bl$ only confuses $X'^\bl$ and $X''^\bl$ in the figure). {\it \underline{Fig (c)}:} Lemma~\ref{lem:list2} analogously proves that there are not too many codewords in $\ogs$ ``confused by $\bS$ with other codewords in $\ogs$''. To do so, the $2^{\bl \delta}$ codewords in $\ogs$ are arranged in a square grid. Using Lemma~\ref{lem:basic1} one can show that in any column (respectively row) of this grid there are not too many (at most ${\cal O}(\bl^2)$) codewords in that column (respectively row) that are confused by $\bS$ with any other codeword in any other column (respectively row) -- the red-arrows in the figure indicate codewords that are confused by $\bS$ with another codeword in a different row or column. This allows one to demonstrate that the total fraction of codewords in $\ogs$ that are confused by $\bS$ is an exponentially small fraction, and hence the probability of error is small. Since the preceding statements are true with probability super-exponentially close to $1$, one may take a union bound over all possible values of $d$, $\bZ$, and $\bS$.}}
\label{fig:intuition}
\end{center}
\end{figure*}

\section{General myopic channel and the proof of Theorem~\ref{th:g_lower}}
\label{app:gen}

\subsection{Intuition for general myopic jamming channels}

For general pairs of channels (a stochastic channel $\pee_{Z|X}$ and the AVC $\pee_{Y|XS}$), many of the ideas used in the achievability proofs for the bit-flipping channel also go through to result in the Theorems~\ref{th:g_lower}, \ref{th:g_upper}, \ref{th:g_secrecy}. We highlight the major differences here.

{\bf \underline{Model:}}
To begin with, we borrow heavily from the problem formulation in~\cite{sarwate_coding_2010} to model the relationship between the {\it $\bl$-letter tuple} $(\bX,\bZ,\bS,\bY)$ in terms of {\it single-letter distributions} $\pee_{X}\pee_{Z|X}\pee_{S|Z}\pee_{Y|XS}$. More precisely: 
\begin{itemize}
\item In our achievability proofs we focus on codebooks generated i.i.d. according to the distribution $\pee_X$ (subject to the constraint $\pee_X \in \cV$).
\item The channels $\pee_{Z|X}$ and $\pee_{Y|XS}$ are specified as part of the problem statement, in which they are assumed to be memoryless.
\item While James is free to choose any $\bl$-letter state-vector (subject to the constraint that the type-class of $\bS$ be in $\cW$) as a function of his full $\bl$-letter observation $\bZ$, any such choice can be ``reverse engineered'' as a particular length-$\bl$ instantiation of a conditional distribution $\pee_{S|Z}$, where the $\pee_{S|Z}$ corresponds to the conditional type-class of $\bS$ given the observed vector $\bZ$.
\item Further, as defined in the myopic jamming problem statement, James's state vector $\bS$ must be conditionally independent of $\bX$ given his observation $\bZ$.
\end{itemize}
Putting these together, we note that the {\it $\bl$-letter tuple} $(\bX,\bZ,\bS,\bY)$ can be thought of as a tuple of length-$\bl$ sequences generated according to the single-letter distribution $\pee_{X}\pee_{Z|X}\pee_{S|X}\pee_{Y|{XS}}$ (subject, of course, to the corresponding input and state constraints $\cV$ and $\cW$).

{\bf \underline{Results:}}
It is then natural to conjecture that the corresponding ``capacity'' of the
problem equals $I(X;Y)$, maximized over all valid encoder profiles
corresponding to $p_X$, and minimized over all valid attacks by James
corresponding to $p_{S|Z}$, as long as the channel $p_{Z|X}$ is ``sufficiently
myopic'' compared to the channel $p_{Y|XS}$. And indeed both our achievability
and converse expressions (in Theorems~\ref{th:g_lower}, \ref{th:g_upper}) take
this form. However, there is a gap between the tightest converse we can prove
in Theorem~\ref{th:g_upper}, and the achievability we can prove in
Theorem~\ref{th:g_lower}. Specifically, our achievability requires us to
maximize over $p_X$ such that two ``myopicity'' conditions (given by (b) and
(c) in the statement of Theorem~\ref{th:g_lower}) are satisfied.

For a wide variety of other myopic channel-pairs our approach results in non-trivial achievability results. This includes problems in which the channels from Alice to James, and from Alice to Bob, are of different ``forms'' (for instance, a \bscq\ from Alice to James, and an AVC from Alice to Bob in which James can {\it erase} a fraction $p$ of bits).
However, we are by no means convinced that the achievability result we present is optimal in general. 
In particular, while the first myopicity condition (Theorem~\ref{th:g_lower} (b)), $I(X;Z) < I(X;Y)$ is ``somewhat natural'' (corresponding to James having a weaker channel than any channel he can impose on Bob), the second condition (Theorem~\ref{th:g_lower} (c)) arises from somewhat technical considerations in ``reverse list-decoding'' described below.

{\bf \underline{Proof Techniques:}}
Given the problem formulation described above, many (but not all) of the ideas described in the achievability proof of the channel $C(q,p)$ carry through. Specifically:
\begin{itemize}
\item The oracle-given set is defined in an analogous manner to how it is
defined for the channel $C(q,p)$. Specifically, as long as $I(X;Z) < I(X;Y)$,
one can demonstrate via standard combinatorial arguments that all ``typical
type-classes'' have exponentially many codewords in them -- one then constructs
the corresponding oracle-given set by choosing sufficiently many
codewords with the same empirical conditional distribution $p_{X|Z}$ as the
``true'' $(\bX,\bZ)$.  \item One can then show that over the randomness in
which codeword $\bX$ in the oracle-given set was actually transmitted, with
high probability the tuples $(\bX,\bS,\bY)$ are jointly typical according to
the joint distribution $\pee_{X}\pee_{Z|X}\pee_{S|X}\pee_{Y|{XS}}$.
\item For state-deterministic channels~\footnote{The restriction to state-deterministic channels is a technical condition required by our proof, so as to be able to achieve the claimed rate. Removing this restriction is possible, but then with our current proof techniques our achievable rates reduce to $I(X;Y) - H(Y|X,S)$ rather than $I(X;Y)$, and also we need to further restrict the class of state constraints $\cW$ for which such a result is possible. Attempting to remove this restriction is a direction of ongoing research. Nonetheless, the class of state-deterministic channels already contains many interesting myopic channels problems, such as the channel $C(q,p)$ discussed in length above, the channel $CE(q,p)$ discussed in Sec.~\ref{sec:erasure}, and a variety of other ``mixed'' channels. \label{foo:prob}} $\pee_{Y|{XS}}$ one can show an ``$\bX$''-list-decoding argument. Namely, one can show that with probability super-exponentially close to one over code design, the number of codewords from the oracle-given set that are translated under the action of a feasible state vector $\bS$ into the conditionally typical set w.r.t.  any typical $\bY$ is ``small'' (at most ${\cal O}(\bl^2)$).
\item We also use a ``$\bY$''-list-decoding argument, that demonstrates that the number of $\bY$ resulting from the action by a fixed feasible state vector $\bS$ acting on any $\bX$ from the oracle-given set is at most ${\cal O}(\bl^2)$. Again, for technical reasons, to prove this list-decoding result we need to impose some additional constraints on the class of permissible states $\cW$ -- these are precisely the constraints in Theorem~\ref{th:g_lower}(c). 
\item Finally, the square-grid argument described in the proof of the $C(q,p)$ is essentially unchanged in this general setting.
\item As in the $C(q,p)$ case, the interplay between the problems of jamming-resilient and eavesdropping-resilient code-designs is natural. As in the general wiretap-channel case, one can indeed use our codes, as outlined in Theorem~\ref{th:g_lower}, and transmit over them messages that themselves comprises of the ``true message'', padded with random bits, and encoded via wiretap-channel codes, so as to guarantee communication that is both reliable against James's jamming, and secure against his eavesdropping.
\end{itemize}

\subsection{Proof of Theorems~\ref{th:g_lower}}
\def\fone{{f_1(p,q,\eone,\etwo)}}
 \def\ftwo{{f_2(q,p,\epsilon,\eone,\etwo)}}
 \def\fthree{{f_3(q,\epsilon)}}
 \def\ffour{{f_4(q,\epsilon)}}
 \def\ffive{{f_5(\etwo)}}
\def\fsix{{f_1(\etwo)}}
\def\fseven{{f_2(\etwo)}}
 \def\mosh{{\mathcal M}_s}
 \def\ogm{{\mathcal M}_{og}}
 \def\omosh{{\mathcal M}_o}

\newcommand{\googs}[2]{{\cC_0(#1,#2)}}
\newcommand{\gaau}[2]{{\cC_1(#1,#2)}}
\newcommand{\guuni}[2]{{\cC_2(#1,#2)}}

\newcommand{\bgOGS}{{\cC_0(\bZ, T)}}
\newcommand{\bgogs}{{\cC_0(\bz,\tau_{xz})}}
\newcommand{\bgAU}{{\cC_1(\bZ,T)}}
\newcommand{\bgau}{{\cC_1(\bz,\tau_{xz})}}
\newcommand{\bgUNI}{{\cC_2(\bZ,T)}}
\newcommand{\bguni}{{\cC_2(\bz,\tau_{xz})}}

\newcommand{\Sgcal}{{\cS_{cal}}}

 \def\cP{{\mathcal P}}

Let $T(\bx,\bz)$ denote the joint type of the vectors $\bx,\bz$.
Our achieability scheme is using a random code.
Let $p_X$ be the input distribution used to construct the code. 
We assume that it satisfies the condition in Theorem~\ref{th:g_lower}.
Recall that $p_{Z|X}$ is Calvin's channel law. These give a joint 
distribution $p_{XZ}$ and a marginal distribution $p_{Z}$. 
The channel law to Bob is given by $p_{Y|XS}$.
Note that every i.i.d. jamming strategy $p_{S|Z}$ results in a joint 
distribution $p_Xp_{Z|X}p_{S|Z}p_{Y|XS}$. 
We define
\begin{align*}
&\cW_{S|Z} := \left\{p_{S|Z} | \left(p_Xp_{Z|X}p_{S|Z}\right)_S \in \cW\right\},
\end{align*}
where $\left(p_Xp_{Z|X}p_{S|Z}\right)_S$ denotes the marginal of
$p_Xp_{Z|X}p_{S|Z}$ on $S$.
In the following, we assume the
distributions $p_X,p_{Z|X},p_{Y|XS}$ to be fixed.

For any $\by$, let us define $\cB_{X|Y}(\by, \cW)$
as the set of $\bx$ which are jointly $\etwo$-typical with
$\by$ for some $p_{S|Z}\in \cW_{S|Z}$. 
The volume of $\cB_{X|Y}(\by, \cW)$ can be bounded as
\begin{align*}
\frac{1}{n}\log_2 |\cB_{X|Y}(\by, \cW)| 
& \leq \max_{p_{S|Z} \in \cW_{S|Z}} H(X|Y) + \fsix 
\end{align*}
for some $\fsix \rightarrow 0$ as $\etwo \rightarrow 0$.
Similarly, $\cB_{Y|X}(\bx, \cW)$ is defined, and it has a volume
$\leq \max_{p_{S|Z} \in \cW_{S|Z}} H(Y|X) + \fsix$.
Drawing similarity with the proofs for $C(q,p)$, these sets take the role of 
balls of radius $p$ around $\bx$ and $\by$ respectively.

Decoding: On receiving $\by$, if Bob finds a unique codeword
in $\cB_{X|Y}(\by, \cW)$, then he decodes this codeword. Otherwise he
declares error. 

The overall proof argument is the same as in the bit-flip case.
So we only give the relevant modified lemmas and definitions in the
following, in addition to any extra arguments which are required
for the general case.

Let $\tau$ denote a joint type of $\bx,\bz$. In the following,
for a given $\bz$, $\tau$ denotes a joint type 
that is consistent with $\bz$.
\begin{align*}
\cSh(\bz, \tau) : = \{\bx\in \cX^n: T (\bx, \bz)=\tau\}.
\end{align*}
Let $T$ denote the type of $(\bX(W), \bZ)$.
We also denote by $\cT_{\mu} (X,Z)$, the set of joint types which are 
$\mu$-typical. Here $\mu=\mu (\etwo)$ (it is a
function of $p_X,p_{Z|X}$, though we do not mention it explicitly).

We define events, similar to those in the proof for the $C(q,p)$, 
\begin{align*}
& \evcf (\bz, \tau) :=  \left\{ 2^{n(H_\tau(X|Z)+R-H(X)+\eone)} \geq |\{w: X^n(w)\in \cSh(\bz,\tau)\}| \right. 
 \left. \geq 2^{n(H_\tau(X|Z)+R-H(X)-\eone)} \right\}\\
& \evcf := \cap_{\bz, \tau \in \cT_{\mu} (X,Z)}\evcf (\bz, \tau)\\
& \evd :=  \{T \in \cT_{\mu} (X,Z)\}.\\
\end{align*}

Here $H_\tau(X|Z)$ denotes the conditional entropy for the joint distribution
$\tau$. For every $\bz$, and $\tau \in \cT_{\mu} (X,Z)$, the oracle
partitions the set of messages
$$ M_s(\bz, \tau) = \{w: X^n(w)\in \cSh(\bz,\tau)\} $$
with codewords on the shell $\cSh(\bz,\tau)$ into disjoint
subsets $M_{og}^{(1)}(\bz, \tau), M_{og}^{(2)}(\bz, \tau), \cdots, M_{og}^{(\lambda(\bz, \tau))}(\bz, \tau)$
of size $2^{n\delta}$, except possibly the last subset with smaller size.
The oracle reveals $T$ and the particular subset
$\ogm = M_{og}^{(i)}(\bZ, T)$ of $M_s(\bZ, T)$ that contains the
encoded message $W$. 

In the general case under consideration, $\tau, T$ respectively
take the role of $d$ and $D$ in the binary case.
Like in the binary case, Lemmas~\ref{lemma:S0} and \ref{lemma:shell} also hold here with
the changed definitions of $\evd$ and $\evcf$. $h_2$ now depends on
$p_{Z|X}, \cW, \epsilon, \etwo, \eone$.
With $D$ replaced by $T$, and the changed definition of $\Scal
:=\cSh(\bZ, T)$, the observations O.1, O.2, O.3, and Lemma~\ref{lem:E2} 
in the previous section still hold. 

Suppose $\tau \in \cT_{\mu} (X,Z)$.
The event $\evcs(\bz,\tau,\bs)$ is defined, similarly as before, 
over the code construction that
every subset of size $2^{n\delta}$ in the ordered
partition of $M_s(\bz, \tau)$ has at most only $2^{n\hfour}$
messages which will undergo decoding error for the respective realizations
of $D, \bZ, \bS, W$. The event $\evcs$ is defined, also similarly as before,
as 
\begin{align*}
\evcs & = \cap_{\bz,\tau,\bs} \evcs(\bz,\tau,\bs)
\end{align*}
where the intersection is over $\tau \in \cT_{\mu} (X,Z)$, all feasible
$\bs$, and all $\bz$.

We need to show the counterpart of \eqref{eq:e3}:
\begin{align}
Pr(\evcs^c(\bz,\tau,\bs)|M_s(\bz,\tau)=M) \leq e^{-\beta n^2}. \label{eq:e3g}
\end{align}
for every
$M$ satisfying $2^{n(H_\tau(X|Z)+R-H(X)+\eone)}\geq |M|\geq 2^{n(H_\tau(X|Z)+R-H(X)-\eone)}$.
The overall proof argument is the same as in the binary case.
We first note the fact that for large enough $n$, for any $\bX\in \cSh (\bz, \tau)$ and $\bs$
satisfying $\cW$, $\bX \in \cB_{X|Y} (\bs(\bX),\cW)$. 
So the transmitted codeword always satisfies the decoding condition.
This is because,
(i) $X-Z-S$ forms a Markov chain, (ii) $(\bX,\bz)$ is $\etwo /2$-typical,
(iii) $(\bz, \bs)$ is $\etwo /2$-typical for some $p_{S|Z}\in \cW_{S|Z}$.
To see why the third statement is true, we note that 
$\bz$ is $\mu$-typical, and $\bs$ satisfies $\cW$. The conditional
distribution $p_{S|Z}:= T(\bz)T(\bs|\bz)/p_Z $ (at all points with
$p_Z(z)\neq 0$) is in $\cW_{S|Z}$, and $(\bz, \bs)$ is $\etwo
/2$-typical for this $p_{S|Z}$ if $\mu$ is 
small enough. 

We now give the counterparts of Lemma~\ref{lem:list1} and
Lemma~\ref{lem:list2} below. Together, they imply \eqref{eq:e3g}. But first
we give a generalization of Lemma~\ref{lem:basic1}: 

\begin{lemma}
\label{lem:basicgen}
Let $V \subset \cX^n$ be a subset of $\etwo$-typical sequences w.r.t. the
distribution $p_X$ with cardinality $|V| \leq 2^{nc}$. Then (i) the
probability $Pr_{p_X}(V) \leq 2^{-n(H(X)-c-\fseven)}$ for some $\fseven
\rightarrow 0$ as $\etwo \rightarrow 0$, (ii) if $R<H(X)-c-\fseven$, and
$2^{nR}$ vectors $\bX(w);
w=1,2,\cdots,2^{nR}$ are chosen independently with i.i.d. $\sim p_X$ components,
then for a sufficiently large constant $\alpha$, $Pr (|\{w:\bX (w) \in V\}|> cn^2) \leq e^{-\alpha n^2/6}$.
\end{lemma}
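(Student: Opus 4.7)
The plan is to establish both parts via standard typicality and Chernoff arguments, which are straightforward generalizations of Lemma~\ref{lem:basic1} from the uniform/binary case to an arbitrary input distribution $p_X$.

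First, for Part (i), I would invoke the standard property of strongly typical sequences: there exists $\fseven = \fseven(\etwo)$ with $\fseven \rightarrow 0$ as $\etwo \rightarrow 0$ such that for every $\etwo$-typical $\bx \in \cX^n$ with respect to $p_X$,
\[
p_X^n(\bx) \;\leq\; 2^{-n(H(X) - \fseven)}.
\]
Since $V$ consists entirely of $\etwo$-typical sequences and $|V| \leq 2^{nc}$, summing this pointwise bound over $V$ immediately yields
\[
Pr_{p_X}(V) \;\leq\; |V|\cdot 2^{-n(H(X) - \fseven)} \;\leq\; 2^{-n(H(X) - c - \fseven)},
\]
as claimed.

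Next, for Part (ii), I would let $Y_w = \mathbb{1}\{\bX(w) \in V\}$ for $w = 1,\dots,2^{nR}$. By construction these are i.i.d.\ Bernoulli with mean $p := Pr_{p_X}(V)$, which by Part~(i) satisfies $p \leq 2^{-n(H(X) - c - \fseven)}$. Under the hypothesis $R < H(X) - c - \fseven$, setting $\mu := \mathbb{E}[\sum_w Y_w] = 2^{nR}\,p$, we get $\mu \leq 2^{-n\epsilon_0}$ for some $\epsilon_0 > 0$, so in particular $\mu \rightarrow 0$ exponentially fast. Applying the multiplicative Chernoff bound for the upper tail of the sum $S = \sum_w Y_w$ of independent Bernoullis: for any $t \geq e\mu$,
\[
Pr(S \geq t) \;\leq\; \left(\frac{e\mu}{t}\right)^{t}.
\]
Plugging in $t = cn^2$, which dominates $e\mu$ for all sufficiently large $n$, and taking logarithms gives
\[
\log Pr(S \geq cn^2) \;\leq\; -cn^2\bigl(\log(cn^2) - 1 - \log \mu\bigr) \;=\; -\Omega(n^3),
\]
which is clearly at most $-\alpha n^2 / 6$ for any fixed constant $\alpha$ and sufficiently large $n$.

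The only real delicacy, and also the main thing to be careful about, is ensuring consistency of the constants: the $\fseven$ in Part~(ii) must be the same as that produced in Part~(i), and the convention for ``$\etwo$-typicality'' used here must match the one used in the surrounding sections (so that the subsequent applications of this lemma in the counterparts of Lemmas~\ref{lem:list1} and~\ref{lem:list2} invoke comparable typicality neighborhoods). Beyond this bookkeeping there is no substantive mathematical obstacle; the argument is essentially the AEP combined with a Chernoff tail bound, exactly as in the proof of Lemma~\ref{lem:basic1}, but with the codeword-hitting probability $|V|/|A|$ of the uniform case replaced by the typical-set upper bound $|V| \cdot 2^{-n(H(X)-\fseven)}$ from Part~(i).
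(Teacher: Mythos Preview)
Your proposal is correct and follows exactly the approach of the paper: the paper's proof says only that Part~(i) follows because each $\etwo$-typical sequence has probability at most $2^{-n(H(X)-\fseven)}$, and that Part~(ii) follows from Part~(i) via the Chernoff bound in the same manner as Lemma~\ref{lem:basic1}. Your write-up simply fills in these details.
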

\begin{proof}
Part (i) follows as the probability of each $\etwo$-typical sequence is
$\leq 2^{-n(H(X)-\fseven)}$. Part (ii) follows using part (i) in a similar
way as Lemma~\ref{lem:basic1} using the Chernoff bound.
\end{proof}

\begin{lemma}
\label{lem:genlist1}
There exists $\beta >0$ such that for every $\bz$, $\tau \in \cT_{\mu} (X,Z)$, state vector $\bs $
introduced by Calvin, for every subset $M$ of messages with
$2^{n(H_\tau(X|Z)+R-H(X)+\eone)}\geq |M|\geq 2^{n(H_\tau (X|Z)+R-H(X)-\eone)}$,
conditioned on $M_s(\bz,\tau)=M$, 
with probability at least $1-e^{-\beta n^2}$
over the code, for every $i< \lambda(\bz, \tau)$, there are at most $c^2 n^4$
codewords from $\bX (M_{og}^{(i)}(\bz,\tau))$ for which there is a 
codeword from $\bX ((M_{og}^{(i)}(\bz,\tau))^c)$
which lies in $\cB_{X|Y}(\bs(\bX (w)), \cW)$.
\end{lemma}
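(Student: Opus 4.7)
My plan is to mirror the two-stage list-decoding template of Lemma~\ref{lem:list1} from the binary case, replacing Hamming balls with the joint-typicality sets $\cB_{X|Y}(\cdot,\cW)$ and $\cB_{Y|X}(\cdot,\cW)$ and leveraging the state-deterministic assumption on $p_{Y|XS}$ to bound preimages. The two stages are: (i) a \emph{forward} list-decoding step showing that, with probability $1-e^{-\beta n^2}$ over the code, at most $O(n^2)$ codewords from $\bX((M_{og}^{(i)})^c)$ lie in the confusion region
\begin{align*}
V:=\bigcup_{w\in M_{og}^{(i)}(\bz,\tau)}\cB_{X|Y}(\bs(\bX(w)),\cW);
\end{align*}
and (ii) a \emph{reverse} list-decoding step showing that, for each $\bx^*$, at most $O(n^2)$ of the codewords $\bX(w)$ with $w\in M_{og}^{(i)}$ satisfy $\bx^*\in\cB_{X|Y}(\bs(\bX(w)),\cW)$. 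Observations~O.1--O.3 (the conditional independence of $\bX(M_{og}^{(i)})$, $\bX(M_s\setminus M_{og}^{(i)})$, and $\bX(M_s^c)$) justify the principle-of-deferred-decisions step of fixing one sub-code before applying the list-decoding bound to the next.

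For Stage~(i), I split the confusing codeword $\bx^*$ into the sub-cases $\bx^*\in\bX(M_s^c(\bz,\tau))$ and $\bx^*\in\bX(M_s(\bz,\tau)\setminus M_{og}^{(i)}(\bz,\tau))$, exactly as in the binary proof. Since $|M_{og}^{(i)}|\le 2^{n\delta}$ and each $\cB_{X|Y}(\by,\cW)$ has volume at most $2^{n(\max_{p_{S|Z}\in\cW_{S|Z}}H(X|Y)+\fsix)}$ on the typical set, the typical part of $V$ has size at most $2^{n(\delta+\max H(X|Y)+\fsix)}$. For the off-shell sub-case I apply Lemma~\ref{lem:basicgen}(ii) to the i.i.d.-$p_X$ codewords in $M_s^c$; the rate slack requirement $R<H(X)-\delta-\max H(X|Y)-o(1)$ rearranges, via $H(X)-H(X|Y)=I(X;Y)$, to hypothesis~(a) with small extra slack. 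For the in-shell sub-case I intersect $V$ with $\cSh(\bz,\tau)$ and run the analogous Chernoff argument against the uniform distribution on the shell; the resulting rate condition single-letterizes to $I(X;Z)<I(X;Y)$, which is hypothesis~(b).

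For Stage~(ii), I fix $\bx^*$ and analyze
\begin{align*}
V^*(\bx^*):=\{\bx\in\cSh(\bz,\tau):\bx^*\in\cB_{X|Y}(\bs(\bx),\cW)\}.
\end{align*}
The state-deterministic property of $p_{Y|XS}$ lets me write
\begin{align*}
|V^*(\bx^*)|\le |\cB_{Y|X}(\bx^*,\cW)|\cdot \max_{\by}|\{\bx\in\cSh(\bz,\tau):\bs(\bx)=\by\}|,
\end{align*}
and standard joint-type counting bounds the two factors by $2^{n(\max H(Y|X)+\fsix)}$ and $2^{n(\max H(X|Y,S)+\fsix)}$ respectively, with the maxima taken over $\cW_{S|Z}$. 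Since the $2^{n\delta}$ codewords $\bX(M_{og}^{(i)})$ are conditionally i.i.d.\ uniform on the shell of size approximately $2^{nH_\tau(X|Z)}$, a Chernoff bound gives that $|\{w\in M_{og}^{(i)}:\bX(w)\in V^*(\bx^*)\}|$ is $O(n^2)$ with failure probability $e^{-\Omega(n^2)}$, provided $\delta+\max H(Y|X)+\max H(X|Y,S)<H_\tau(X|Z)$. For $\tau\in\cT_\mu(X,Z)$ and sufficiently small $\delta,\etwo,\eone,\mu$, this is precisely hypothesis~(c). A union bound over the at most $|\cX|^n$ possible $\bx^*$ (and later over the exponentially many indices $i<\lambda(\bz,\tau)$) is absorbed by the $e^{-\Omega(n^2)}$ per-instance failure.

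The main obstacle, and the source of the non-obvious hypothesis~(c), is Stage~(ii): the reverse-list count has no clean single-letter form in general, and it is only the state-deterministic split into a $\by$-count and a preimage count that produces the additive two-term quantity $\max H(Y|X)+\max H(X|Y,S)$ bounded by $H(X|Z)$. Once both stages are in place, multiplying the two $O(n^2)$ bounds gives $c^2 n^4$ confusable codewords in partition $i$, and a final union bound over $i<\lambda(\bz,\tau)$ completes the proof of the lemma.
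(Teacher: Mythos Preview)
Your proposal is correct and follows essentially the same two-stage forward/reverse list-decoding argument as the paper: Stage~(i) matches the paper's Step~1 (bounding confusing codewords from $(M_{og}^{(i)})^c$ in $V$ via Lemma~\ref{lem:basicgen}, split into off-shell and in-shell sub-cases), and Stage~(ii) matches the paper's Step~2 (bounding, for each fixed $\bx^*$, the codewords of $M_{og}^{(i)}$ landing in $(\bs)^{-1}(\cB_{Y|X}(\bx^*,\cW))$, using the volume bound $\max H(X|Y,S)+\max H(Y|X)$). One minor correction: in the in-shell sub-case of Stage~(i), the condition you need is still hypothesis~(a), not~(b) --- the expected count is $|V\cap\cSh|\cdot|M|/|\cSh|\lesssim 2^{n(\delta+\max H(X|Y)+R-H(X))}$, whose negativity is exactly $R<\min I(X;Y)$; hypothesis~(b) plays its role earlier (ensuring the shell carries exponentially many codewords) rather than here.
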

\begin{proof}
Clearly, $(M_{og}^{(i)}(\bz,\tau))^c = M_s^c(\bz,\tau)\cup (M_s(\bz,\tau)\setminus M_{og}^{(i)}(\bz,\tau))$.
Note that the codewords in $(M_s(\bz,\tau)\setminus M_{og}^{(i)}(\bz,\tau))$ are
uniformly distributed in $\cSh (\bz, \tau)$.
The codewords of $M_s^c(\bz,\tau)$ are chosen according to 
$p_X^n$ conditioned on the subset $(\cSh (\bz, \tau))^c$.
We first prove the statement for codewords in $M_s^c(\bz,\tau)$.
We have two key steps:

1. For every realization of $\bX (M_{og}^{(i)}(\bz,\tau))$, we consider
$V=\cup_{w\in M_{og}^{(i)}(\bz,\tau)} \cB_{X|Y} (\bs(\bX(w)), \cW)$.
This satisfies $(1/n)\log_2|V|\leq \delta + \fsix +
\max_{p_{S|Z} \in \cW_{S|Z}} H(X|Y)$.
Since $\cSh (\bz, \tau)$ contains only a subset of one type of $\bx$,
for large enough $n$, under the probability measure $p_X^n$,
$Pr (\cSh (\bz, \tau)) \leq 1/2$. Thus by Lemma~\ref{lem:basicgen}(i), for
the codeword of any message $w\in M_s^c(\bz,\tau)$, the probability of it 
being chosen from $V$ is 
\begin{align*}
&Pr(V|(\cSh (\bz, \tau))^c) \\
& \leq 2\cdot 2^{-n(H(X)-\delta - \fsix -
\fseven -\max_{p_{S|Z} \in \cW_{S|Z}} H(X|Y))}\\
& = 2\cdot 2^{-n(\min_{p_{S|Z} \in \cW_{S|Z}}I(X;Y)-\delta - \fsix -
\fseven )}\\
\end{align*}
Thus, by the same proof as that 
of Lemma~\ref{lem:basicgen} (ii), with probability at least $1-e^{-\beta n^2}$,
there are at most $cn^2$ codewords from $\bX (M_s^c(\bz,\tau))$ 
in $V$ for sufficiently small $\delta, \etwo$. Here we have used the fact that
$R<\min_{p_{S|Z} \in \cW_{S|Z}}I(X;Y) -\delta - \fsix - \fseven $. 
So the same statement is also true over the whole random code, that is,
when $\bX (M_{og}^{(i)}(\bz,\tau))$ is also chosen randomly.

2. Now, over the random choice of $\bX (M_{og}^{(i)}(\bz,\tau)) $ from the
vectors in $\cSh (\bz, \tau)$,  with probability at least
$1-e^{-\beta n^2}$, for every $\bx$, there are at most $cn^2$ codewords
from $\bX (M_{og}^{(i)}(\bz,\tau))$ in 
$(\bs)^{-1}(\cB_{Y|X} (\bx, \cW))$. Here
$(\bs)^{-1}(\cB_{Y|X} (\bx, \cW)) := \{\bx': \bs (\bx') \in \cB_{Y|X} (\bx, \cW)\} $. This follows using Lemma~\ref{lem:basicgen}
because, for every feasible $\bs$, $(1/n)\log_2 |(\bs)^{-1}(\cB_{Y|X} (\bx, \cW))|
\leq \max_{p_{S|Z} \in \cW_{S|Z}}H(X|Y,S)
+ \max_{p_{S|Z} \in \cW_{S|Z}} H(Y|X) < H(X|Z)$.

Now the proof follows by taking the union bound over all $i<\lambda(\bz, \tau)$
as in the binary case (Lemma~\ref{lem:list1}).
A similar proof also works for $M_s(\bz,\tau)\setminus M_{og}^{(i)}(\bz,\tau)$ by noting that the volume of the shell is $\geq 2^{n(H_\tau (X|Z) -\eone /2)}$
(for large enough $n$), and the number of codewords on the shell is 
$|M| \leq 2^{n(H_\tau (X|Z) +R-H(X) +\eone )}$. 
\end{proof}

Similar to the binary case, let us consider the codewords in $M_{og}^{(k)}(\bz,\tau)$
arranged in a square and indexed by
a set $A\times A$ with $|A|= 2^{n\htwo/2}$.

\begin{lemma}
\label{lem:genlist2}
There exists $\beta >0$ such that, for every $\bz$,
$\tau \in \cT_{\mu} (X,Z)$, state $\bs $ introduced by Calvin,
for every subset $M$ of messages with
$2^{n(H_\tau(X|Z)+R-H(X)+\eone)}\geq |M|\geq 2^{n(H_\tau (X|Z)+R-H(X)-\eone)}$,
conditioned on $M_s(\bz,\tau)=M$,
with probability at least $1-e^{-\beta n^2}$
over the code, for every $k< \lambda(\bz, \tau)$, (i) for every $i$,
there are at most $c^2 n^4$ codewords in the $i$-th row
$\{\bX (i,j)|j\in A\}$ for which $\{\bX (i',j')|i'\neq i,j'\in A\}\cap 
\cB_{X|Y} (\bs (\bX (i,j)), \cW)$ is non-empty,
(ii) there are at most $2^{n\hfour}$ messages $w\in
M_{og}^{(k)}(\bz,\tau)$ for which $\{\bX (w')|w'\in M_{og}^{(k)}(\bz,\tau),
w'\neq w\}\cap \cB_{X|Y} (\bs(\bX (w)), \cW)$ is non-empty.
\end{lemma}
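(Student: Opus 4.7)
The plan is to lift the proof of Lemma~\ref{lem:list2} to the general myopic setting by replacing the elementary tools (Lemma~\ref{lem:basic1} and Corollary~\ref{cor:list}) with their typical-set analog Lemma~\ref{lem:basicgen} together with the pre-image bound already established inside the proof of Lemma~\ref{lem:genlist1}, while reusing the grid / directed-graph combinatorial argument essentially unchanged. I arrange the $2^{n\delta}$ codewords of $M_{og}^{(k)}(\bz,\tau)$ in an $|A|\times|A|$ grid with $|A|=2^{n\delta/2}$; by observation~O.3, conditioned on $M_s(\bz,\tau)=M$, these codewords are drawn i.i.d.\ uniformly from $\cSh(\bz,\tau)$, which is exactly the setting needed to imitate the binary proof.

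For part~(i), I fix a row $i$ and run the same two-stage ``double list-decoding'' argument used in Lemma~\ref{lem:genlist1}, but with ``codewords in rows other than $i$'' now playing the role that $M_s(\bz,\tau)\setminus M_{og}^{(i)}(\bz,\tau)$ played there. The forward stage bounds the number of other-row codewords landing in $V=\cSh(\bz,\tau)\cap\bigcup_{j\in A}\cB_{X|Y}(\bs(\bX(i,j)),\cW)$ by $cn^2$: the volume of $V$ is at most $2^{n(\delta/2+\fsix+\max_{p_{S|Z}\in\cW_{S|Z}}H(X|Y))}$, the shell has volume at least $2^{n(H_\tau(X|Z)-\eone/2)}$, and there are at most $2^{n\delta}$ such codewords, so a conditional version of Lemma~\ref{lem:basicgen} together with the rate condition Theorem~\ref{th:g_lower}(a) and sufficiently small $\delta,\eone,\etwo$ gives the bound with failure probability at most $e^{-\beta n^2}$. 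The reverse stage then shows that, for each target $\bx$, at most $cn^2$ row-$i$ codewords satisfy $\bX(i,j)\in(\bs)^{-1}(\cB_{Y|X}(\bx,\cW))\cap\cSh(\bz,\tau)$: the log-volume of this pre-image is at most $\max_{p_{S|Z}\in\cW_{S|Z}}H(X|Y,S)+\max_{p_{S|Z}\in\cW_{S|Z}}H(Y|X)$, which is strictly less than $H(X|Z)$ by Theorem~\ref{th:g_lower}(c), so Lemma~\ref{lem:basicgen} applies a second time. Multiplying the two bounds yields the claimed $c^2n^4$ per row.

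Part~(ii) is then a direct transcription of the graph-counting step in Lemma~\ref{lem:list2}. Form a directed graph on $A\times A$ with an edge from $\bX(i,j)$ to $\bX(i',j')$ iff $\bX(i',j')\in \cB_{X|Y}(\bs(\bX(i,j)),\cW)$. By part~(i) applied to rows, and by the identical argument applied to columns, both the number of vertices with non-zero non-horizontal out-degree and the number with non-zero non-vertical out-degree are bounded by $c^2n^4|A|$; since any vertex with an outgoing edge has at least one of these two out-degrees positive, the total number of ``confused'' codewords is at most $2c^2n^4|A|=2c^2n^4\cdot 2^{n\delta/2}\leq 2^{n\hfour}$ for large $n$. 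A union bound over $k<\lambda(\bz,\tau)$ is easily absorbed by the super-exponentially small per-$k$ failure probability.

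The main obstacle is the reverse stage of part~(i): controlling $(\bs)^{-1}(\cB_{Y|X}(\bx,\cW))$ uniformly over $\bx$ and feasible $\bs$ is precisely what forces the myopicity condition Theorem~\ref{th:g_lower}(c), and it is also where the restriction to state-deterministic $\pee_{Y|XS}$ is genuinely used, so that $\bs$ acts as a deterministic map on the input rather than as a random channel output (cf.\ footnote~\ref{foo:prob}). Once this pre-image bound is in hand, the rest of the argument is an essentially verbatim copy of the binary-case grid counting in Lemma~\ref{lem:list2}, and the resulting statement together with Lemma~\ref{lem:genlist1} implies \eqref{eq:e3g}.
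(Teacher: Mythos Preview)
Your proposal is correct and follows essentially the same approach as the paper's (very terse) proof: part~(i) via the two-step argument of Lemma~\ref{lem:genlist1} restricted to a single oracle block, and part~(ii) via the grid/directed-graph counting of Lemma~\ref{lem:list2} transcribed verbatim. One minor slip: in the forward stage of part~(i) the number of candidate codewords is $2^{n\delta}$ rather than $2^{nR}$, so the inequality you actually need to make the exponent negative is $\max_{p_{S|Z}\in\cW_{S|Z}}H(X|Y)<H(X|Z)$, i.e.\ condition~(b) of Theorem~\ref{th:g_lower}, not the rate condition~(a).
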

\begin{proof}
The proof of the first part follows using a two-step argument similar
to the proof of Lemma~\ref{lem:genlist1}. The proof of the second part
 follows using similar arguments as the second part of Lemma~\ref{lem:list2}.
\end{proof}
Lemma~\ref{lem:genlist1} and \ref{lem:genlist2} prove \eqref{eq:e3g},
and then by the same arguments as in the bit-flip case, the achievability
of Theorem~\ref{th:g_lower} follows.

\begin{remark}[Improvements for bit erasing and flipping adversaries]
\label{rem:erasure-erasure}
Condition (c) on an achievable rate in Theorem~\ref{th:g_lower} is due to 
the bound
\begin{eqnarray}
(1/n)\log_2 |(\bs)^{-1}(\cB_{Y|X} (\bx, \cW))| 
\leq \max_{p_{S|Z} \in \cW_{S|Z}}H(X|Y,S)
+ \max_{p_{S|Z} \in \cW_{S|Z}} H(Y|X)
\label{eq:bound}
\end{eqnarray}
used in the second part of the proof of Lemma~\ref{lem:genlist1}. 
In general, \eqref{eq:bound} is the best bound we have in single-letter
expression for
$(1/n)\log_2 |(\bs)^{-1}(\cB_{Y|X} (\bx, \cW))|$. We suspect this
to be quite loose, and expect the result (Theorem~\ref{th:g_lower}) for
general channels to hold under weaker conditions than (c).

(i) Binary erasure-erasure channel: For the erasure-erasure channel
$CE(q,p)$, this condition gives $p+H(p) < q$, which is stronger than the
natural `sufficiently myopic' condition $p<q$. For this channel, it is
possible to get a tighter bound which results in the condition $p<q$,
that is the same as condition (b). We first note that 
$|(\bs)^{-1}(\cB_{Y|X} (\bx, \cW))|$
counts the number of vectors $\bx'$ for which $\bs (\bx')$ can also
be obtained by erasing some components of $\bx$. The components of
$\bx$ erased in this process must be the same as those indicated by $\bs$,
that is, $\bs'(\bx) = \bs(\bx')$ only if $\bs = \bs'$.
Thus, 
\begin{align*}
& (1/n)\log_2 |(\bs)^{-1}(\cB_{Y|X} (\bx, \cW))| \leq p,
\end{align*}
as the number of erasures in $\bs$ is at most $np$.
Thus under condition (b) alone, i.e., $p<q$, the capacity
of $CE(q,p)$ is $1-p$. This proves Theorem~\ref{th:erasurecapacity}.

(ii) Bit erasing and flipping adversary, $CEF(p_{Z|X}, p_e,p_w)$: We note that the counting of $|(\bs)^{-1}(\cB_{Y|X} (\bx, \cW))|$
only involves the channel between Alice to Bob (the AVC), and not the channel between Alice to James.
So similar improved counting as for the erasure-erasure channel
works as long as James can only erase and
flip transmitted bits.
For such adversaries, irrespective of James's own
channel, Theorem~\ref{th:g_lower} holds without the extra
condition (c).

In particular, let us consider the setup $CEF(p_{Z|X}, p_e,p_w)$ where James
can erase upto $p_e$ fraction of the transmitted bits, and he can flip upto
$p_w$ fraction of the transmitted bits. For such a valid state vector
$\bs$, let  $s^n_e$ denote the action of erasing the same positions that
are erased by $\bs$. Now clearly,
\begin{align*}
& (\bs)^{-1}(\cB_{Y|X} (\bx, \cW)) = \{\bx'|d_H(s^n(\bx'),s^n_e(\bx)) 
\leq p_wn\}.
\end{align*}
Thus 
\begin{align*}
& (1/n)\log_2 |(\bs)^{-1}(\cB_{Y|X} (\bx, \cW))| \leq 
p_e + (1-p_e) H\left(\frac{p_w}{1-p_e}\right).
\end{align*}
This gives an achievable rate upto
\begin{align*}
& 1 - \left(p_e + (1-p_e) H\left(\frac{p_w}{1-p_e}\right)\right)\\
= & (1-p_e)\left(1-H\left(\frac{p_w}{1-p_e}\right)\right)
\end{align*}
under the sufficient myopicity condition that 
\begin{align*}
1-H(X|Z) \leq (1-p_e)\left(1-H\left(\frac{p_w}{1-p_e}\right)\right)
\end{align*}

(iii) Secrecy capacity for the  erasing and flipping adversary, $CEF(p_{Z|X}, p_e,p_w)$: 
Using a standard stochastic encoding technique
for wiretap channels, the above also gives a secrecy rate of 
\begin{align*}
& (1-p_e)\left(1-H\left(\frac{p_w}{1-p_e}\right)\right) - (1- H(X|Z))\\
= & H(X|Z)+p_eH\left(\frac{p_w}{1-p_e}\right) - p_e-H\left(\frac{p_w}{1-p_e}\right).
\end{align*}
For the special case of $CEF(BEC(q), p_e,p_w)$, this
gives the secrecy rate
\begin{align*}
& (1-p_e)\left(1-H\left(\frac{p_w}{1-p_e}\right)\right) - (1-q)\\
= & q+p_eH\left(\frac{p_w}{1-p_e}\right) - p_e-H\left(\frac{p_w}{1-p_e}\right).
\end{align*}

(iv) Secrecy against a type II wiretapper adversary: We now elaborate on the erasure-erasure setting and explicitly on extending Theorem~\ref{th:erasuresecrecy} to the model of Aggarwal et al.~\cite{aggarwal2009wiretap}.
In general, the proof of Theorem~\ref{th:erasuresecrecy} follows that of Theorem~\ref{th:g_secrecy} when one considers the channel $CE(q,p)$. 
Specifically, the proof of secrecy for rates under $q - p$ follows roughly by appending $1-q$ bits of private randomness $r$ to the message $u$, and applying our code on the concatenated pair $(u,r)$ (which is of length $(1-p)n$). To prove that  our random code construction satisfies the secrecy requirement $\frac{1}{n}I(\bZ;U)<\e$ in the model of \cite{aggarwal2009wiretap} (when James can pick which $qn$ bits are erased) one must show that with high probability over code design, for every view $\bZ$ of James, for every $u \in U$ there are approximately the same number of random strings $r$ such that the encoding of $(u,r)$ is consistent with $\bZ$ (i.e., the corresponding codeword agrees with $\bZ$ on all the un-erased entries). Indeed, in this case, the amount of information James has on $U$ is limited. Let $\e>0$. Taking the size of $r$ to be $1-q+\e$ and the rate of $u$ to be $q-p-2\e$, we have (via the Chernoff bound) with doubly exponential probability of $1-2^{-2^{\Omega(\e n)}}$ over code design that for every view $\bZ$ of James and every $u \in U$ 
the number of different random $r$ such that the encoding of $(u,r)$ is consistent with $\bZ$ is in the range $[2^{\e n}(1-\e), 2^{\e n}(1+\e)]$.
Bounding entropy by variational distance we conclude that $\frac{1}{n}I(\bZ;U)<O(\e)$.

(v) Wiretap channel of type II with erasing and flipping adversary,
$\wcefII(p_r,p_e,p_w)$: If James can choose a $p_r = 1-q$ fraction of transmitted bits
to observe,
then using the lines of argument outlined above, it can be 
seen that the secrecy rate $q+p_eH\left(\frac{p_w}{1-p_e}\right) - p_e-H\left(\frac{p_w}{1-p_e}\right) = 1-p_r+p_eH\left(\frac{p_w}{1-p_e}\right) - p_e-H\left(\frac{p_w}{1-p_e}\right)$ is still achievable.
\end{remark}

\section{Conclusions}
In this work we study the secure and standard capacity of
adversarial myopic channels.  For the bit-flipping adversarial channel
$C(q,p)$, for the binary erasure-erasure adversarial channel $CE(q,p)$, 
and more generally for binary input channels where the adversary can 
both erase and flip some fractions of bits, we characterize these capacities as the capacity under random noise
when the adversary's own channel is more noisy than the worst noise it
can impose on Bob, in terms of mutual information. 
For these models, we also consider analogs of the wiretap channel of type II.
For general myopic channels, we prove similar achievability results
under a stricter condition of myopicity. 
A tight
characterization of capacity for general myopic channels is left open and
subject of future work.

\section{Acknowledgement}
The work of B.~K.~Dey was supported in part by Bharti Centre
for Communication, IIT Bombay, a grant from the Department
of Science and Technology, Government of India, and a grant from
ITRA,  Government of India.
S.~Jaggi's work was partially supported by a grant from University Grants Committee of the Hong Kong, Special Administrative Region, China (Project No. AoE/E-02/08), and by the Innovation Technology Fund (ITF) Project ITS/143/14FP, Innovation Technology Commission, the Government of HKSAR.
The work of M.~Langberg was supported in part by NSF grant no. 1321129.

{\scriptsize{
\bibliographystyle{unsrt}
\bibliography{./Myopic.bib}
}}

\end{document}